\def\A{ {\cal A} }
\def\B{ {\cal B} }
\def\D{ {\cal D} }
\def\G{ {\cal G} }
\def\I{ {\cal I} }
\def\M{ {\cal M} }
\def\P{ {\cal P} }
\def\I{ {\cal I} }
\newcommand{\diag}[1]{\mathrm{diag}\left(#1\right)}
\def\>{\rangle}
\def\<{\langle}
\newcommand{\bra}[1]{\langle {#1} |}
\newcommand{\ket}[1]{| {#1} \rangle}
\newcommand{\ketbra}[2]{\ensuremath{\left|#1\right\rangle\!\!\left\langle#2\right|}}
\newcommand{\braket}[2]{\ensuremath{\left\langle#1\right|\!\!\left.#2\right\rangle}}
\newcommand{\matrixel}[3]{\ensuremath{\left\langle #1 \vphantom{#2#3} \right| #2 \left| #3 \vphantom{#1#2} \right\rangle}}
\newcommand{\tr}[1]{\mathrm{Tr}\left( #1 \right)}
\newcommand{\trr}[2]{\mathrm{Tr}_{#1}\left( #2 \right)}
\newcommand{\iden}{\mathbb{1}}
\newcommand{\vect}[1]{|#1\rangle\rangle}
\renewcommand{\v}[1]{\ensuremath{\boldsymbol #1}}
\DeclareMathOperator{\rank}{rank}
\newcommand{\1}{\mathbbm{1}}
\theoremstyle{plain}
\newtheorem{thm}{Theorem}
\newtheorem{lem}[thm]{Lemma}
\newtheorem{corol}[thm]{Corollary}
\newtheorem{prop}[thm]{Proposition}
\newtheorem{conj}[thm]{Conjecture}
\theoremstyle{definition}
\newtheorem{defn}[thm]{Definition}
\theoremstyle{remark}
\newtheorem{rmk}[thm]{Remark}
\begin{document}

\normalem

\title{Distinguishing classically indistinguishable states and channels}

\author{Kamil Korzekwa}
\affiliation{Centre for Engineered Quantum Systems, School of Physics, The University of Sydney, Sydney, NSW 2006, Australia}
\author{Stanis{\l}aw Czach{\'o}rski}
\affiliation{Faculty of Physics, Astronomy and Applied Computer Science, Jagiellonian University, 30-348 Krak{\'o}w, Poland}
\author{Zbigniew Pucha{\l}a}
\affiliation{Faculty of Physics, Astronomy and Applied Computer Science, Jagiellonian University, 30-348 Krak{\'o}w, Poland}
\affiliation{Institute of Theoretical and Applied Informatics, Polish Academy of Sciences, 44-100 Gliwice, Poland}
\author{Karol {\.Z}yczkowski}
\affiliation{Faculty of Physics, Astronomy and Applied Computer Science, Jagiellonian University, 30-348 Krak{\'o}w, Poland}
\affiliation{Center for Theoretical Physics, Polish Academy of Sciences, 02-668 Warszawa, Poland}

\begin{abstract}
	We investigate an original family of quantum distinguishability problems, where the goal is to perfectly distinguish between $M$ quantum states that become identical under a completely decohering map. Similarly, we study distinguishability of $M$ quantum channels that cannot be distinguished when one is restricted to decohered input and output states. The studied problems arise naturally in the presence of a superselection rule, allow one to quantify the amount of information that can be encoded in phase degrees of freedom (coherences), and are related to time-energy uncertainty relation. We present a collection of results on both necessary and sufficient conditions for the existence of $M$ perfectly distinguishable states (channels) that are classically indistinguishable. 
\end{abstract}

\date{\today}

\maketitle

\section{Introduction}

Finding optimal schemes for distinguishing between quantum states under various 
assumptions forms a family of important problems in quantum information 
science, with applications within quantum cryptography and quantum 
computation~\cite{Fuchs96,fuchs1999cryptographic,chefles2000quantum}. It is 
well known that two pure states can be deterministically discriminated if and 
only if they are orthogonal, or, in the case of mixed states, if their supports 
do not overlap~\cite{Hel76}. However, any interaction of the investigated 
system with an environment leads to the process of quantum decoherence, which 
reduces the probability of correctly distinguishing between given quantum states~\cite{nielsen2010quantum}. The full decoherence process can be described by a channel that sends any quantum state into a classical state represented by a corresponding diagonal density matrix. It may then happen that two orthogonal, completely distinguishable quantum states, decohere to the same classical state, e.g., qubit states \mbox{$|+\rangle\propto|0\rangle+|1\rangle$} and 
\mbox{$|-\rangle\propto|0\rangle-|1\rangle$} are orthogonal and decohere to the 
same maximally mixed classical state. One can therefore study the deteriorating 
effect of the decoherence process on quantum information by asking: how many 
perfectly distinguishable quantum states decohere to a fixed classical state?

More formally, in the first part of this paper we introduce and investigate the problem of distinguishing quantum states $\{\rho^{(n)}\}_{n=1}^M$ that are classically indistinguishable, i.e., their decohered versions, $\{\D(\rho^{(n)})\}_{n=1}^M$ with $\D$ denoting a \emph{completely decohering} quantum channel in the preferred basis $\{\ket{k}\}_{k=1}^d$, cannot be distinguished with a probability larger than $1/M$ (which corresponds to a random guess). Such states share the same \emph{classical version}~$\v{p}$,
\begin{equation}
	\forall n:\quad \bra{k}{\rho^{(n)}}\ket{k}=:p_k,
\end{equation}
and the main object of our studies is thus defined as follows.

\begin{defn}[\mbox{$M$-distinguishability} region] 
	A $d$-dimensional probability vector $\v{p}$ belongs to \mbox{$M$-distinguishability} region $\A_d^M$ of the probability simplex $\Delta_d$ if and only if there exist $M$ perfectly distinguishable quantum states with the same classical version~$\v{p}$.
\end{defn}

Our interest in the mathematical structure of \mbox{$M$-distinguishability} regions is physically motivated by its direct relation to the problem of encoding information in coherence. Note that $M$ perfectly distinguishable states allow one to encode $\log_2 M$ bits of information. By fixing the classical degrees of freedom (the classical version $\v{p}$) for a set of states $\{\rho^{(n)}\}_{n=1}^M$, the only way left to encode information is to use the quantum degrees of freedom (coherence). Thus, the maximal number $M$ of perfectly distinguishable states with a fixed classical version~$\v{p}$ quantifies the capacity of coherence to carry information that cannot be accessed classically. This is similar in spirit to the problem of quantum data hiding~\cite{terhal2001hiding,divincenzo2002quantum}, when one wants to store classical bits in correlations, so that they are inaccessible locally. Also, the separation into the classical and quantum degrees of freedom for encoding information is reminiscent of the previous studies on splitting uncertainty into classical and quantum part parts~\cite{luo2005quantum,korzekwa2014quantum}.

It is important to note that the restriction to classical version of a state is not only an abstract constraint allowing one to assess the ability of coherence to carry information. Whenever the dynamics obeys a symmetry linked to some conservation law, the processing of states that break this symmetry is constrained~\cite{marvianthesis,marvian2014extending}. In particular, since coherence in the energy eigenbasis breaks time-translation symmetry, the conservation of energy restricts possible processing of coherences~\cite{lostaglio2015description,lostaglio2015quantum}. As a result, without the access to an additional resource in the form of a quantum reference frame for phase~\cite{bartlett2007reference}, states $\rho$ and its decohered version $\D(\rho)$ become indistinguishable\footnote{More precisely, here the decohering channel $\D$ destroys coherence between different energy eigenspaces, leaving the coherence between Hamiltonian eigenstates corresponding to the same eigenvalue unchanged. Therefore, our studies apply to this scenario when the Hamiltonian of the system is non-degenerate.}, and so one can access only information encoded in the classical degrees of freedom (with the distinguished basis given by the energy eigenbasis). Let us point out that this indistinguishability plays a crucial role within quantum thermodynamics as it affects the amount of work that can be extracted from a system prepared in a superposition of energy eigenstates~\cite{korzekwa2016extraction}. 

One can also invert the question and instead of asking how much information can 
be encoded in coherence, ask: how much information is lost due to the 
irreversible process of decoherence? One way to quantify the deteriorating 
effect of the decohering channel $\D$ is to ask about the largest number $M$ of 
messages that could have been encoded in $\D(\rho)$ before the action of~$\D$. 
In other words, one is interested in finding the number of orthogonal preimages 
of $\D(\rho)$, known as \emph{coherifications} of 
$\D(\rho)$ (see Ref.~\cite{korzekwa2018coherifying} for details and Appendix~\ref{app:coher} for an intuitive visualization of the coherification procedure). It is also worth noting that since 
$\D$ describes the process of measuring the system in a given basis and then 
discarding the result, $M$-distinguishability regions can shed new light on the 
disturbing effect measurements have on a quantum system.

Finally, there is a strong link between the problem of $M$-distinguishability 
and energy-time uncertainty relation. For this, consider now that the 
distinguished basis is given by the eigenstates of Hamiltonian $H$, 
$\{\ket{E_k}\}$, so that $\v{p}$ is given by 
\mbox{$p_k=\matrixel{E_k}{\rho}{E_k}$}. Although an observable for time does 
not exist, there is nevertheless the expectation that time and energy should be 
complementary variables, resulting in a version of uncertainty relation between 
them. Non-rigorously, it should state that if a given state $\rho$ has a 
well-defined energy then it is a bad clock, i.e., it does not significantly 
change in time (in the limit of $\rho$ being a sharp energy eigenstate, $\rho$ 
becomes stationary and does not evolve in time at all); and if a state $\rho$ 
allows one to distinguish different moments of time with high precision, then 
the energy of $\rho$ cannot be well-defined. Of course, there are many ways to 
quantify both the sharpness of energy of $\rho$ and the quality of $\rho$ as a 
clock. For example, in the most traditional formulation by Mandelstam and 
Tamm~\cite{mandelstam1991uncertainty}, the uncertainty of energy is quantified 
by the variance of~$\v{p}$, and the timing quality of $\rho$ is given by the 
minimal time needed for $\rho$ to evolve to another distinguishable state 
(clearly, if such time is long, then the time resolution is low, meaning the 
quality of $\rho$ as a clock is low). The maximal number $M$ of perfectly 
distinguishable states with a fixed diagonal $\v{p}$ can now be related to a 
particular version of the energy-time uncertainty relation presented above. 
Namely, given a state with energy distribution $\v{p}$, its timing quality can 
be measured by $M$, which tells us how many different moments in time can be 
distinguished unambiguously (i.e, with no uncertainty) using $\rho$. The 
$M$-distinguishability regions $\A_d^M$ provide then a geometric way to 
visualize energy-time uncertainty relation: the closer one gets to the centre 
of the probability simplex (the uniform distribution), the more uncertain the 
energy outcomes become, but the better potential timing quality of the state 
becomes.

In the second part of the paper we focus on a closely related notion of classically indistinguishable channels, by studying the distinguishability of their coherified versions~\cite{korzekwa2018coherifying}. Research along
this line was recently performed for the problem of discriminating quantum measurements~\cite{puchala2018strategies,puchala2018multiple}, where it was shown that the diamond norm distance between two von Neumann measurements is given by the minimal value of the distance between their completely coherified versions. Here, we consider classically indistinguishable channels, which are the channels that cannot be distinguished by using classical input states and being restricted to the classical versions of output states. A set of quantum channels $\{\Phi^{(n)}\}_{n=1}^M$ that are classically indistinguishable share the same \emph{classical action}, so they generate the same stochastic matrix $T$,  
\begin{equation}
	\forall n:\quad \bra{k}{\Phi^{(n)}(\ketbra{l}{l})}\ket{k}=:T_{kl},
\end{equation}
which describes discrete dynamics in the probability simplex. By allowing access to arbitrary input states (including entangled ones) and general quantum measurements of the output states, such channels can potentially be distinguished. The natural question that arises then, and that we address in the paper, is: how many perfectly distinguishable quantum channels can there be that share the same classical action~$T$? More formally, we study \emph{distinguishability numbers} defined in the following way.
\begin{defn}[Distinguishability numbers]
	Distinguishability number $\M(T)$ is the maximal number of quantum channels that share the same classical action~$T$ and can be perfectly distinguished. Restricted distinguishability number $\tilde{\M}(T)$ is the maximal number of quantum channels that share the same classical action~$T$ and can be perfectly distinguished without using entangled input states.
\end{defn}

Studying distinguishability numbers allows one to quantify distinct ways of processing information encoded in coherences. More precisely, classically indistinguishable channels transform classical degrees of freedom in the same way, described by the fixed classical action $T$, and so the only way to distinguish them is through the effect they have on quantum degrees of freedom, i.e., coherences. As with the quantum states, here also we can draw an analogy with the entanglement scenario in which one wants to investigate quantum channels that cannot be distinguished by scrutinizing local systems~\cite{fan2004distinguishability}, as they transform local states in the same way. Instead of the locality constraint, here we focus on classicality constraint that can arise, e.g., due to the conservation law and a lack of an appropriate reference frame~\cite{bartlett2007reference}. In such situations one can only prepare input classical states and cannot distinguish between output states that share the same classical version. Therefore, effectively one only has access to the classical action $T$ and cannot distinguish channels corresponding to the same stochastic matrix~$T$.

One can also use distinguishability numbers to get insight into the effect that intermediate measurements have on discrete quantum Markov processes. Imagine the scenario in which the system undergoes a discrete process that at each time step transforms it according to a fixed quantum channel $\Phi$. Moreover, assume that before and after each application of $\Phi$ one observes the system by measuring it in the preferred basis $\{\ket{k}\}$. This way, by repeating the experiments many times and recording measurements outcomes, one can reconstruct the transition matrix $T$ between different states~$\ket{k}$. Now, despite the fact that there may be a whole family $\{\Phi^{(n)}\}$ of quantum processes leading to the same observations, sequential measurements collapse all $\Phi^{(n)}$ to the same classical Markov process described by $T$. If one did not observe the system at each time step, the accumulated interference effects could result in each $\Phi^{(n)}$ transforming the system in completely distinct way, so that by properly measuring the final state one could find out which $\Phi^{(n)}$ actually happened. The distinguishability number $\M(T)$ describes then the number of quantum Markov processes that are equal and equivalent to a classical process $T$ if observed at each time step, but completely distinct if unobserved.

The paper is structured in the following way. First, in Sec.~\ref{sec:setting}, we set the scene by introducing necessary concepts, fixing the notation and formally defining the notion of state and channel distinguishability. Then, Section~\ref{sec:states} is devoted to the studies of distinguishability of classically indistinguishable states, while Section~\ref{sec:channels} focuses on classically indistinguishable channels. Finally, the conclusions and open problems for future research can be found in Sec.~\ref{sec:outlook}.

\section{Setting the scene}
\label{sec:setting}

\subsection{Mathematical background and notation}

A state of a finite-dimensional quantum system is described by a density operator $\rho$ acting on a $d$-dimensional Hilbert space ${\cal H}_d$ that is positive, $\rho\geq 0$, and normalized by a trace condition, $\tr{\rho}=1$. A state is pure if $\rho=\rho^2$, so it can be represented by a \mbox{$1$-dimensional} projector, \mbox{$\rho=\ketbra{\psi}{\psi}$}; and mixed otherwise. General evolution of quantum states can be described by quantum channels, i.e., completely positive trace preserving (CPTP) maps acting on density matrices of order $d$. Every quantum channel $\Phi$ admits a Kraus decomposition~\cite{nielsen2010quantum} of the form
\begin{equation}
	\label{eq:kraus_decomposition}
	\Phi(\cdot)=\sum_k K_k(\cdot)K_k^\dagger,
\end{equation}
where $K_k$ are called Kraus operators and, due to trace preserving condition, satisfy \mbox{$\sum_kK_k^\dagger K_k=\iden$} with $\iden$ denoting the identity matrix of size $d$. Moreover, with each channel $\Phi$ one can associate a \emph{Jamio{\l}kowski state}~\cite{jamiolkowski1972linear}, defined by the image of the extended map acting on a maximally entangled state,
\begin{equation}
	\label{eq:jamiol}
	J_{\Phi} = \frac{1}{d}(\Phi \otimes {\I})\ketbra{\Omega}{\Omega},
\end{equation}
with \mbox{$\ket{\Omega}=\sum_k\ket{kk}$} and $\I$ denoting the identity channel. Under this isomorphism the CP condition is translated into positivity of $J_\Phi$, and the TP condition is replaced by $\trr{1}{J_\Phi}=\iden/d$.

The subset of \emph{classical states} is given by quantum states $\rho$ that are incoherent with respect to a given distinguished orthonormal basis $\{\ket{k}\}_{k=1}^d$, i.e., $\matrixel{k}{\rho}{l}=0$ for $k\neq l$. The choice of the basis is physically motivated by a particular problem under study, e.g., within quantum thermodynamics one is concerned with energy eigenbasis~\cite{lostaglio2015description,lostaglio2015quantum}. Classical state can be alternatively represented by a probability distribution $\v{p}=\diag{\rho}$, where $\diag{\rho}$ denotes a mapping of a density matrix $\rho$ into a probability vector $\v{p}$ with $p_k=\rho_{kk}$. Moreover, for a general quantum state $\rho$ we call the probability distribution $\diag{\rho}$ the \emph{classical version} of $\rho$. Note that under the \emph{completely decohering} quantum channel~$\D$, 
\begin{equation}
	\label{eq:decoh_state}
	\D(\rho)=\sum_k \matrixel{k}{\rho}{k}\ketbra{k}{k},
\end{equation}
every quantum state $\rho$ is mapped to a classical state specified by the classical version of $\rho$.

We also define a subset of \emph{classical channels} that consists of all channels $\Phi$ whose corresponding Jamio{\l}kowski states are classical, i.e., $\matrixel{kk'}{J_\Phi}{ll'}=0$ whenever $k\neq l$ or $k'\neq l'$. Classical channel can be alternatively represented by a stochastic transition matrix $T$ given by $\frac{1}{d}|T\rangle\rangle=\diag{J_\Phi}$, where $|\cdot\rangle\rangle$ denotes the (row-wise) \emph{vectorization} of a matrix,
\begin{equation}
	\label{eq:vectorization}
	|T\rangle\rangle:=(T\otimes\iden)\ket{\Omega}=\left( \1 \otimes T^\top \right) \ket{\Omega},
\end{equation}
and $T$ satisfies \mbox{$T_{kl}\geq 0$} and \mbox{$\sum_k T_{kl}=1$}. Moreover, for a general quantum channel $\Phi$ we call the corresponding transition matrix $T$ the \emph{classical action} of $\Phi$. A quantum channel $\Phi$ can be mapped to its classical version via a \emph{completely decohering supermap} that decoheres the corresponding Jamio{\l}kowski state $J_\Phi$~\cite{korzekwa2018coherifying}, and is described by the following two-step concatenation
\begin{equation}
	\Phi \rightarrow \Phi^\D= \D\circ\Phi\circ\D.
\end{equation}
Note also that the classical action $T$ of a channel $\Phi$ describes the transition between diagonal states, 
\begin{equation}
	\label{eq:transition}
	T_{kl}=\bra{k}\Phi(\ketbra{l}{l})\ket{k}.
\end{equation}
Therefore, a classical channel represented by $T$ maps a quantum state with classical version $\v{p}$ to a classical state $T\v{p}$; and a quantum channel $\Phi$ with classical action $T$ maps a classical state $\v{p}$ to a quantum state with classical version given by $T\v{p}$. Finally, a stochastic matrix $T$ is called bistochastic if $\sum_l T_{kl}=1$; and unistochastic if there exists a unitary matrix $U$ such that $T=U\circ \bar{U}$, with $\circ$ representing the entry-wise product (also known as Hadamard or Schur product).

Throughout the paper the dimension of the underlying Hilbert space will be denoted by $d$, so all operators (matrices) will act on $d$-dimensional state vectors, while quantum channels will act on $d\times d$ density matrices. The \mbox{$(d-1)$-dimensional} probability simplex that represents the set of $d$-dimensional classical states will be denoted by $\Delta_d$, while its centre, i.e., the maximally mixed distribution with each entry equal to $1/d$, will be denoted by~$\v{\eta}$. Moreover, we introduce a flat probability vector $\v{v}^M$ with first $M$ entries equal to $1/M$ (in particular \mbox{$\v{v}^d=\v{\eta}$}). Beyond the identity matrix and identity channel, $\iden$ and $\I$, we will make frequent use of the unitary Fourier matrix $F$ and the maximally mixing van der Waerden matrix $W$ defined by
\begin{equation}
	F_{kl}=\frac{1}{\sqrt{d}} \exp\left(\frac{2\pi i(k-1)(l-1)}{d}\right),\quad 	W_{kl}=\frac{1}{d},
\end{equation}
so that $|F_{kl}|^2=W_{kl}$. We also define a set of $d$ diagonal unitary matrices $D^{(k)}$, with the diagonal specified by the columns of $F$, i.e.,
\begin{equation}
	\label{eq:diagonal}
	D^{(k)}_{ll}=\sqrt{d}F_{kl}.
\end{equation}

\subsection{Distinguishability problem}

The central problem studied in this work concerns state and channel 
distinguishability, which are defined as follows. Given a quantum state $\rho$ 
and a promise that it belongs to a preselected set of $M$ states 
$\{\rho^{(n)}\}$ (with each one being equally likely), the task is to find the 
optimal way of deciding $n^*$ satisfying $\rho=\rho^{(n^*)}$. The optimality of 
the protocol means succeeding with the highest possible probability (and thus 
the problem is often referred to as the \emph{maximum likelihood} 
distinguishability). A similar question can be posed for quantum channels: 
given a single use of a channel $\Phi$, decide which one from the predefined 
set of $M$ equally likely channels $\{\Phi^{(n)}\}$ it is. We say that a set of 
$M$ states (channels) is \emph{$M$-distinguishable} if it admits perfect 
distinguishability, i.e., if there exists a protocol that succeeds with unit 
probability.

Let us first briefly discuss the simplest case of distinguishability problem for $M=2$. Given two classical states represented by probability distributions $\v{p}$~and~$\v{q}$, one finds that the maximum likelihood probability $P(\v{p},\v{q})$ of the correct distinction between them is given by
\begin{equation}
	\label{eq:classical_state_distinguish}
	P(\v{p},\v{q})=\frac{1+\delta(\v{p},\v{q})}{2},\quad \delta(\v{p},\v{q}):=\frac{1}{2}\sum_k|p_k-q_k|,
\end{equation}
with $\delta$ known as the \emph{total variation distance}. The optimal protocol simply consists of measuring the system in the distinguished basis, and upon observing outcome $k$ answer $\v{p}$ if $p_k\geq q_k$, and $\v{q}$ otherwise. Similarly, given two quantum states, ${\rho}$ and ${\sigma}$, the optimal measurement leads to probability $P(\rho,\sigma)$ of distinguishing them given by~\cite{nielsen2010quantum}
\begin{equation}
	\label{eq:quantum_state_distinguish}
	\!\! P(\rho,\sigma)=\frac{1+D_{\mathrm{tr}}(\rho,\sigma)}{2},\quad\!\! D_{\mathrm{tr}}(\rho,\sigma)=\frac{1}{2}\tr{|\rho-\sigma|},
\end{equation}
with $D_{\mathrm{tr}}$ known as the \emph{trace distance}. We conclude that two 
states are perfectly distinguishable if and only if they have orthogonal 
supports. This fact straightforwardly leads to the following statement: a set 
of $M$ states is \mbox{$M$-distinguishable} if and only if each pair of states have orthogonal supports.

Let us now proceed to channel distinguishability. To distinguish two classical 
channels represented by transition matrices $T^{(1)}$ and $T^{(2)}$, one has to 
find a classical state $\v{p}$ that optimizes the distinguishability between 
$T^{(1)}\v{p}$ and $T^{(2)}\v{p}$. Using convexity one can argue that such an 
optimal classical state should be sharp, i.e., it has all zero entries except 
for some $k$, for which it is equal to 1. Such a state is then transformed by 
$T^{(1)}$ to the $k$-th column of $T^{(1)}$, denoted by $T^{(1)}_{\star k}$; 
and by $T^{(2)}$ to the $k$-th column of $T^{(2)}$, denoted by $T^{(2)}_{\star 
k}$. Hence, the optimal probability of distinguishing $T^{(1)}$ from $T^{(2)}$ is given by
\begin{equation}
	\label{eq:classical_channel_distinguish}
	P(T^{(1)},T^{(2)})=\max_k\frac{1+\delta(T^{(1)}_{\star k},T^{(2)}_{\star k})}{2}.
\end{equation}
The problem of distinguishing between general quantum channels $\Phi^{(1)}$ and $\Phi^{(2)}$ is more complicated, due to the possible use of entangled states. Let us thus first consider that one has no access to entanglement. Then, analogously to the classical case, one has to find a quantum state $\rho$ that optimizes the distinguishability between $\Phi^{(1)}(\rho)$ and $\Phi^{(2)}(\rho)$. Again, using convexity argument, one can restrict the optimization to pure states $\psi$ leading to
\begin{equation}
	\tilde{P}(\Phi^{(1)},\Phi^{(2)})=\max_{\psi}\frac{1+D_{\mathrm{tr}}(\Phi^{(1)}(\psi),\Phi^{(2)}(\psi))}{2},
\end{equation}
where tilde denotes the constrained optimization with no entanglement. More fundamentally, however, one can make use of entangled states to improve the distinguishability between $\Phi^{(1)}$ and $\Phi^{(2)}$, so that
\begin{align}
	&\!\!\!\!\!\!\! P(\Phi^{(1)},\Phi^{(2)})\nonumber\\
	&\!\!\!=\max_{\Psi}\frac{1+D_{\mathrm{tr}}[(\Phi^{(1)}\otimes\I)(\Psi),(\Phi^{(2)}\otimes\I)(\Psi)]}{2},\!\!
\end{align}
where the optimization is over pure bipartite states $\Psi$.

\section{Classically indistinguishable states}
\label{sec:states}

\subsection{Permutohedron bound and $M$-distinguishability}

We start our analysis by finding necessary conditions for $M$-distinguishability. Geometrically this problem is equivalent to bounding $M$-distinguishability regions $\A^M_d$ within the probability simplex $\Delta_d$. First, note that, by definition, we have $\A_{d}^{l} \subset \A_{d}^{k}$ for \mbox{$1\leq k< l\leq d$}, and \mbox{$\A_{d}^{1} = \Delta_{d}$}. Now, in order to find further non-trivial conditions we introduce the concept of permutohedron~\cite{postnikov2009permutohedra}:
\begin{defn}[Permutohedron]
	For every $\v{x}\in\mathbb{R}^d$ the permutohedron $\P_d (\v{x})$ is the convex hull of all the permutations of $\v{x}$,
	\begin{align}
	\v{y}\in \P_d(\v{x}) \quad\Longleftrightarrow\quad \exists \v{\lambda}:~\v{y}=\sum_k \lambda_k \Pi_k \v{x},
	\end{align}
	with $\v{\lambda}$ being $d!$-dimensional probability vector and $\{\Pi_k\}$ denoting the set of $d!$ permutation matrices acting on \mbox{$d$-dimensional} vectors. In particular, we will use a shorthand notation $\P_d^M$ while referring to $\P_d(\v{v}^M)$.
\end{defn}

Permutohedra $\P_d^M$ form nested convex polytopes in $\mathbb{R}^{d-1}$ with vertices given by $\{\Pi_k \v{v}^M\}$ and satisfying \mbox{$\P_d^{M+1}\subset \P_d^{M}$, $\P_d^1=\Delta_d$} and $\P_d^d=\{\v{\eta}\}$. We illustrate the first few of them in Fig.~\ref{fig:permutohedra}. Using $\P_d^M$ we can now bound $\A_d^M$ via the following result.
\begin{prop}[Permutohedron bound]
	\label{prop:necessary}
	The necessary condition for $M$-distinguishability of a set of quantum states $\{\rho^{(n)}\}^M_{n=1}$ with a fixed classical version $\v{p}$ is $\max_k p_k \leq 1/M$. Equivalently, $\A_d^M \subseteq \P_d^M$.	
\end{prop}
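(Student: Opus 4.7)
The plan is to derive the entrywise bound $p_k\leq 1/M$ from the orthogonality of supports, and then translate this into membership in the permutohedron using Rado's characterization.

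First I would use the definition of $M$-distinguishability: as noted earlier in the excerpt, a set $\{\rho^{(n)}\}_{n=1}^M$ is perfectly distinguishable if and only if the states have pairwise orthogonal supports. Orthogonality of supports for a collection of density operators is equivalent to the operator inequality
\begin{equation}
\sum_{n=1}^M \rho^{(n)} \leq \iden,
\end{equation}
since the left-hand side is a positive operator whose nonzero eigenvalues correspond to states lying in a direct sum of supports, each eigenvalue bounded by $1$. Sandwiching this inequality between $\bra{k}$ and $\ket{k}$ and using the shared classical version gives
\begin{equation}
M p_k = \sum_{n=1}^M \bra{k}\rho^{(n)}\ket{k} \leq 1,
\end{equation}
so $\max_k p_k \leq 1/M$, which is the announced necessary condition.

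Next I would establish the geometric reformulation $\A_d^M\subseteq \P_d^M$. The natural route is through majorization: I would show that, for a probability vector $\v{p}$, the condition $\max_k p_k \leq 1/M$ is equivalent to $\v{p}$ being majorized by $\v{v}^M$. One direction is immediate, since $\v{v}^M$ has largest entry $1/M$ and majorization preserves the largest entry bound. For the other direction, denoting by $p_1^{\downarrow}\geq\cdots\geq p_d^{\downarrow}$ the ordered entries of $\v{p}$, the partial sums satisfy
\begin{equation}
\sum_{j=1}^k p_j^{\downarrow} \leq k\cdot p_1^{\downarrow} \leq \frac{k}{M}\quad \text{for } k\leq M,
\end{equation}
while for $k\geq M$ the bound $\sum_{j=1}^k p_j^{\downarrow}\leq 1$ is automatic from normalization. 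This matches exactly the partial-sum conditions for being majorized by $\v{v}^M$.

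Finally, by Rado's theorem, the permutohedron $\P_d(\v{x})$ of a vector $\v{x}\in\mathbb{R}^d$ coincides with the set of vectors majorized by $\v{x}$ having the same sum. Applied to $\v{x}=\v{v}^M$, this gives $\v{p}\in\P_d^M$, completing the equivalence. I expect the only subtle point to be making precise the characterization of the permutohedron via majorization; once that classical result is invoked, the chain $\text{distinguishability}\Rightarrow \max_k p_k\leq 1/M \Leftrightarrow \v{p}\in\P_d^M$ closes the argument.
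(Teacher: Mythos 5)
Your argument is essentially the paper's: orthogonality gives \mbox{$\sum_{n}\rho^{(n)}\leq\iden$}, taking the $\bra{k}\cdot\ket{k}$ matrix element yields $Mp_k\leq 1$, and your additional majorization/Rado step merely makes explicit the identification of $\{\v{p}\in\Delta_d:\max_k p_k\leq 1/M\}$ with $\P_d^M$, which the paper takes for granted. One small caveat: orthogonality of supports \emph{implies}, but is not equivalent to, $\sum_{n}\rho^{(n)}\leq\iden$ (e.g.\ two copies of the maximally mixed state satisfy the inequality without having orthogonal supports); since only that implication is used, the proof stands.
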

\begin{proof} 
	Since $\{\rho^{(n)}\}^M_{n=1}$ are orthogonal, we have
	\begin{equation}
		\label{eq:dist_condition}
		\sum_{n=1}^M \rho^{(n)} \leq \iden.
	\end{equation}
	By taking the matrix element $\bra{k}\cdot\ket{k}$ of both sides we get
	\begin{equation}
		M p_k \leq 1,
	\end{equation}	
	which holds for all $k$, so in particular $\max_k p_k \leq 1/M$.
\end{proof}

\begin{figure}[t]
	\includegraphics[width=\columnwidth]{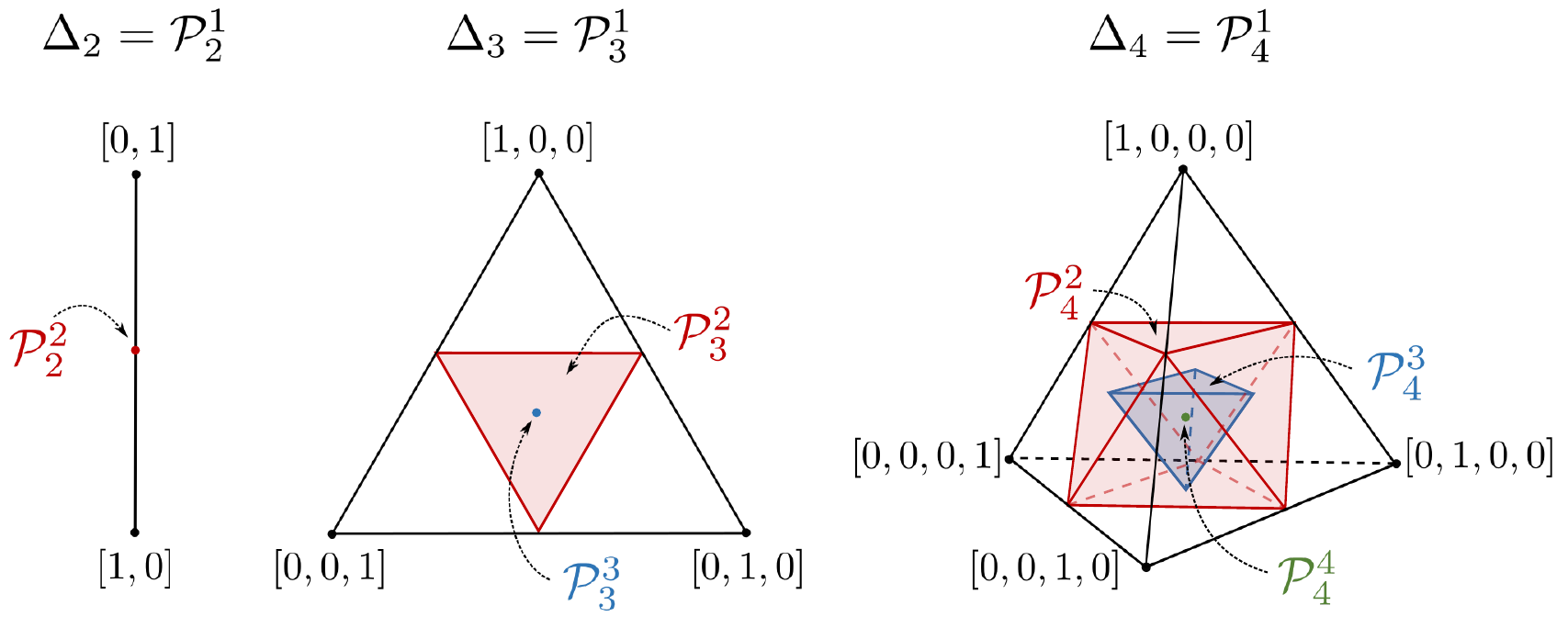}
	\caption{\label{fig:permutohedra} \emph{Permutohedra.} Visualization of permutohedra $\P_d^M$ for $2\leq d\leq 4$ and $1\leq M\leq d$. Permutohedron $\P_d^M$ has $\binom{d}{M}$ vertices located at the centres of $(M-1)$-faces of the probability simplex $\Delta_d$}
\end{figure}

Direct application of the above result to time-energy uncertainty scenario, 
described in the Introduction, leads to the following statement: a state that is able to distinguish $M$ different moments in time satisfies the inequality for min-entropy $H_\infty(\v{p})\geq \log M$, with $\v{p}$ denoting its distribution over energy. We note that this coincides with the particular version of the recent result presented in 
Ref.~\cite{coles2018entropic}, where the authors studied entropic formulations 
of energy-time uncertainty relation. Thus, any improvements over the 
permutohedron bound could also tighten inequalities derived there.

Before proceeding let us also state two useful results concerning 
$M$-distinguishability of pure states. First, we can relate it to 
the existence problem of particular unistochastic matrix.
\begin{lem}
	\label{lem:dist_uni}
	$M$-distinguishability of a set of pure quantum states $\{\ket{\psi}^{(n)}\}^M_{n=1}$ with a fixed classical version $\v{p}$ is equivalent to the existence of a unistochastic matrix $T$ with first $M$ columns equal to $\v{p}$.
\end{lem}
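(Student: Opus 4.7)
The plan is to unpack both sides of the claimed equivalence directly in terms of the matrix $U$ whose columns are the (orthonormal) state vectors, exploiting the definition of unistochasticity $T = U \circ \bar{U}$.

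First I would translate the hypothesis into matrix form. Pure states $\{\ket{\psi^{(n)}}\}_{n=1}^M$ with a common classical version $\v{p}$ are exactly sets of vectors satisfying $|\braket{k}{\psi^{(n)}}|^2 = p_k$ for every $k$ and every $n \in \{1,\dots,M\}$. Perfect $M$-distinguishability, as established in Sec.~\ref{sec:setting}, is equivalent to pairwise orthogonality, i.e.\ $\braket{\psi^{(m)}}{\psi^{(n)}} = \delta_{mn}$. Collecting the amplitudes $U_{kn} := \braket{k}{\psi^{(n)}}$ into a $d \times M$ matrix, this says precisely that $U$ has orthonormal columns and $|U_{kn}|^2 = p_k$ for $n \leq M$.

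For the forward direction, I would invoke the standard fact that any $d \times M$ matrix with orthonormal columns can be completed to a unitary $\tilde U \in \mathrm{U}(d)$ by choosing any orthonormal basis of the orthogonal complement of the span of the $\ket{\psi^{(n)}}$ as the remaining $d-M$ columns. Then $T := \tilde U \circ \bar{\tilde U}$ is unistochastic by definition, and its first $M$ columns satisfy $T_{kn} = |U_{kn}|^2 = p_k$, which is exactly the required condition. For the reverse direction, given a unistochastic $T$ with $T_{kn} = p_k$ for $n \leq M$, I would take a unitary $U$ witnessing $T = U \circ \bar U$ and define $\ket{\psi^{(n)}} := \sum_k U_{kn}\ket{k}$ for $n \leq M$. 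Orthonormality of the columns of $U$ gives perfect distinguishability, and $|\braket{k}{\psi^{(n)}}|^2 = T_{kn} = p_k$ gives the common classical version.

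There is no real obstacle here; the lemma is essentially a dictionary between two equivalent reformulations of "$M$ orthonormal columns with prescribed modulus-squares." The only minor care point is noting that completing a partial isometry to a unitary is always possible in finite dimensions (so the constraint on the last $d-M$ columns of $T$ is entirely free), which makes the "first $M$ columns" phrasing natural and tight.
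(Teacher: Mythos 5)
Your proposal is correct and follows essentially the same route as the paper: encode the states as the first $M$ columns of a matrix via $U_{kn}=\braket{k}{\psi^{(n)}}$, complete to a unitary and set $T=U\circ\bar{U}$ for the forward direction, and read off the first $M$ columns of a unitary witnessing unistochasticity for the converse. No gaps; the only difference is your slightly more explicit remark that completing a partial isometry to a unitary is always possible, which the paper states in passing.
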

\begin{proof}
	First, assume that there exists a set of $M$ distinguishable pure states with a fixed classical action $\v{p}$, i.e., there exists $\{\ket{\psi^{(n)}}\}_{n=1}^M$ satisfying
	\begin{equation}
		\forall m,n:\quad |\langle k\ket{\psi^{(n)}}|^2=p_k,\quad \langle \psi^{(m)}\ket{\psi^{(n)}}=\delta_{mn},
	\end{equation}
	with $\delta_{mn}$ denoting Kronecker delta. Now, since $\{\ket{\psi^{(n)}}\}$ form an orthonormal set, one can construct a unitary $U$ with the first $M$ columns given by the components of these states. More precisely, we can define $U$ by
	\begin{equation}
		U_{kn}=\langle k\ket{\psi^{(n)}}
	\end{equation}
	for $k\in\{1,\dots, d\}$, $n\in\{1,\dots,M\}$, and complete the remaining columns with orthonormal states. Then the stochastic matrix $T=U\circ \bar{U}$ is unistochastic by definition, and $T_{kn}=p_k$ for $n\in\{1,\dots,M\}$.
	
	Conversely, assume that there exists a unistochastic $T$ with $T_{kn}=p_k$ for $n\in\{1,\dots,M\}$. This is equivalent to the existence of a unitary $U$ with the first $M$ columns given by
	\begin{equation}
		\ket{u_n}=\sum_k \sqrt{p_k} e^{i\phi_{kn}} \ket{k}.
	\end{equation}
	Since the columns of a unitary matrix are orthogonal the set $\{\ket{u_n}\}_{n=1}^M$ forms $M$ perfectly distinguishable quantum states with a fixed diagonal $\v{p}$.
\end{proof}

Moreover, we can show that for distributions lying at the boundary of permutohedron $\P_d^M$, $M$-distinguishability of mixed states is equivalent to $M$-distinguishability of pure states.
\begin{lem}
	\label{lem:boundary}
	Consider $\v{p}$ such that $p_{k^*}=1/M$ for some~$k^*$, i.e., $\v{p}$ lies at the boundary of a permutohedron $\P_d^M$. Then, $\v{p}\in\A_d^M$ implies the existence of $M$ orthogonal pure states with a fixed classical version $\v{p}$.
\end{lem}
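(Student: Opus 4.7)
My plan is to extract purity of each $\rho^{(n)}$ from the fact that the permutohedron inequality $\max_k p_k \leq 1/M$ is saturated at coordinate $k^*$. Suppose $\{\rho^{(n)}\}_{n=1}^M$ are $M$ perfectly distinguishable states sharing the classical version $\v{p}$. As in the proof of Proposition~\ref{prop:necessary}, orthogonality of supports gives $\sum_n \rho^{(n)} \leq \iden$, and taking the $\bra{k^*}\cdot\ket{k^*}$ matrix element yields
\begin{equation}
\sum_{n=1}^M \bra{k^*}\rho^{(n)}\ket{k^*} = M p_{k^*} = 1,
\end{equation}
so the inequality is saturated in the $k^*$ direction.

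Next, I would spectrally decompose each $\rho^{(n)} = \sum_i \lambda_{n,i}\ketbra{\psi_{n,i}}{\psi_{n,i}}$. Because the $\rho^{(n)}$ have orthogonal supports, the collection $\{\ket{\psi_{n,i}}\}$ (ranging over those $i$ with $\lambda_{n,i}>0$) extends to an orthonormal set in $\H_d$, and so with $x_{n,i} := |\langle k^*\ket{\psi_{n,i}}|^2$ we have $\sum_{n,i} x_{n,i} \leq 1$. Setting $S_n = \sum_i x_{n,i}$ and $T_n = \sum_i \lambda_{n,i} x_{n,i} = \bra{k^*}\rho^{(n)}\ket{k^*} = p_{k^*} = 1/M$, the previous paragraph reads $\sum_n T_n = 1$, while $\sum_n S_n \leq 1$ and $T_n \leq S_n$ (since $\lambda_{n,i}\in[0,1]$). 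These inequalities can only all be satisfied if $T_n = S_n$ for every $n$, i.e.
\begin{equation}
\sum_i (1-\lambda_{n,i})\, x_{n,i} = 0.
\end{equation}

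Each summand being non-negative forces $(1-\lambda_{n,i})x_{n,i}=0$ for all $i$. The key step, which I anticipate to be the only subtle one, is ruling out the possibility that $x_{n,i}=0$ for all $i$: were this the case, $\bra{k^*}\rho^{(n)}\ket{k^*}$ would vanish, contradicting $p_{k^*}=1/M>0$. Hence for each $n$ there is some index $i$ with $\lambda_{n,i}=1$, which together with $\sum_i \lambda_{n,i}=1$ forces all other eigenvalues to vanish; that is, $\rho^{(n)}=\ketbra{\psi_{n}}{\psi_{n}}$ is pure. The orthogonality of $\{\ket{\psi_n}\}_{n=1}^M$ and the fact that each has classical version $\v{p}$ are inherited from the original family, completing the argument.
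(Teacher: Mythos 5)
Your proof is correct and follows essentially the same route as the paper's: spectrally decompose each $\rho^{(n)}$, use $p_{k^*}=1/M$ to make the sum $\sum_{n,i}\lambda_{n,i}|\langle k^*\ket{\psi_{n,i}}|^2$ equal to $1$, and compare with the bound coming from the fact that the pooled eigenvectors form an orthonormal set, forcing sharp spectra. The only cosmetic differences are that you invoke Bessel's inequality directly where the paper routes the same bound through the unistochastic matrix $T=U\circ\bar{U}$ of Lemma~\ref{lem:dist_uni}, and that you spell out the final purity step (ruling out $x_{n,i}=0$ for all $i$ via $p_{k^*}>0$), which the paper leaves implicit.
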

\begin{proof}
	Assumption $\v{p}\in\A_d^M$ means that there exists a set $\{\rho^{(n)}\}_{n=1}^M$ of perfectly distinguishable quantum states. Let us diagonalize each $\rho^{(n)}$,
	\begin{equation}
		\rho^{(n)}=\sum_{\alpha=1}^{r_n} \lambda^{(n)}_\alpha \ket{\psi^{(n)}_\alpha}\bra{\psi^{(n)}_\alpha}.
	\end{equation}
	Now, on the one hand we get
	\begin{equation}
		\label{eq:lem_cond1}
		\forall n:~\frac{1}{M}=\bra{k^*}\rho^{(n)}\ket{k^*}=\sum_{\alpha=1}^{r_n}\lambda^{(n)}_\alpha |\bra{k^*}\psi^{(n)}_\alpha\rangle|^2,  
	\end{equation}
	resulting in
	\begin{equation}
		\sum_{n=1}^M\sum_{\alpha=1}^{r_n}\lambda^{(n)}_\alpha |\bra{k^*}\psi^{(n)}_\alpha\rangle|^2=1.
	\end{equation}	
	On the other hand, perfect distinguishability of $\{\rho^{(n)}\}_{n=1}^M$ implies 
	\begin{equation}
		\bra{\psi^{(m)}_\beta}\psi^{(n)}_\alpha\rangle = \delta_{mn} \delta_{\alpha\beta},
	\end{equation}
	so that one $\{\ket{\psi^{(n)}_\alpha}\}$ can be used to form a unitary matrix $U$ as in the proof of Lemma~\ref{lem:dist_uni}. Using analogous argument of the unistochasticity of $T=U\circ \bar{U}$ we then get
	\begin{equation}
		\label{eq:lem_cond2}	
		\sum_{n=1}^M \sum_{\alpha=1}^{r_n} |\bra{k^*} \psi^{(n)}_\alpha \rangle|^2\leq 1
	\end{equation}
	Finally, comparing Eqs.~\eqref{eq:lem_cond1}~and~\eqref{eq:lem_cond2}, we see that for all $n$ the spectrum $\lambda^{(n)}_\alpha$ is sharp, i.e., each $\rho^{(n)}$ is a pure state.
\end{proof}

\subsection{Tightness of the permutohedron bound}

We now proceed to analysing how tight the permutohedron bound is. We start with the following tightness result.
\begin{prop}
	\label{prop:sufficient}
	 The necessary condition for \mbox{$M$-distinguishability}, as stated by Proposition~\ref{prop:necessary}, is also sufficient for $M=2$ and $M=d$. Equivalently, $\A^2_d = \P^2_d$ and \mbox{$\A^d_d = \P^d_d$}.
\end{prop}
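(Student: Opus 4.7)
The two directions $\A_d^M \subseteq \P_d^M$ for $M\in\{2,d\}$ are already given by Proposition~\ref{prop:necessary}, so my plan is to establish the reverse containments $\P_d^2 \subseteq \A_d^2$ and $\{\v{\eta}\} = \P_d^d \subseteq \A_d^d$ by exhibiting, for every $\v{p}\in\P_d^M$, an explicit family of $M$ orthonormal pure states whose classical version equals $\v{p}$. By Lemma~\ref{lem:dist_uni}, this is the same as exhibiting a unistochastic matrix whose first $M$ columns all equal $\v{p}$, and in both cases natural candidates are available from the constructions already defined in the ``Mathematical background'' subsection.

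For $M=d$ the claim collapses to $\v{\eta}\in\A_d^d$. I would simply take the columns of the Fourier matrix $F$ as the $d$ orthonormal vectors: they are mutually orthogonal by unitarity of $F$, and $|F_{kl}|^2 = 1/d = W_{kl}$ shows that each column has the uniform classical version $\v{\eta}$. Equivalently, the relation $W = F\circ\bar F$ exhibits the van der Waerden matrix as unistochastic, so Lemma~\ref{lem:dist_uni} applies directly.

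For $M=2$ let $\v{p}\in\P_d^2$, i.e.\ $\max_k p_k\leq 1/2$. I would look for orthonormal pairs of the form
\begin{equation}
\ket{\psi_1}=\sum_k \sqrt{p_k}\,\ket{k},\qquad \ket{\psi_2}=\sum_k \sqrt{p_k}\,e^{i\phi_k}\ket{k},
\end{equation}
so that both automatically have classical version $\v{p}$ and the orthogonality $\braket{\psi_1}{\psi_2}=0$ reduces to the single scalar condition
\begin{equation}
\sum_{k=1}^d p_k\, e^{i\phi_k}=0.
\end{equation}
The task becomes a purely planar one: find unit vectors $e^{i\phi_k}$ so that the weighted sum closes. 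I would fix $\phi_1=0$ and note that, as $\phi_2,\dots,\phi_d$ range over $[0,2\pi)^{d-1}$, the Minkowski sum $\sum_{k\geq 2}p_k e^{i\phi_k}$ fills the disk centred at the origin of radius $\sum_{k\geq 2}p_k = 1-p_1$ (after possibly relabelling so that $p_1 = \max_k p_k$). The equation above becomes $\sum_{k\geq 2}p_k e^{i\phi_k} = -p_1$, which is solvable iff $p_1\leq 1-p_1$, i.e.\ $p_1\leq 1/2$ — exactly the permutohedron bound.

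The main obstacle I anticipate is making the ``closed-polygon/Minkowski-disk'' step rigorous and self-contained. For $d=2$ it is immediate ($\v{p}=(1/2,1/2)$ forces $\phi_2=\phi_1+\pi$), and for $d\geq 3$ the cleanest route is by induction on $d$: relabel so that $p_1\geq p_2\geq\dots\geq p_d$, then merge the two smallest weights $p_{d-1}$ and $p_d$ into a single effective weight $p'=p_{d-1}+p_d$ by choosing their relative phase (this uses that for any target modulus $r\in[|p_{d-1}-p_d|,\,p_{d-1}+p_d]$ there is a choice of relative phase realising $|p_{d-1}e^{i\phi_{d-1}}+p_de^{i\phi_d}|=r$). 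One then verifies that the reduced tuple $(p_1,\dots,p_{d-2},p')$ still satisfies the hypothesis $\max\leq 1/2$ — the only nontrivial check is $p'\leq 1/2$, which follows because $p_{d-1}$ and $p_d$ are the two smallest entries of a sum-one vector with no entry exceeding $1/2$. The inductive hypothesis then yields the required phases, completing the construction and hence the proof that $\P_d^2\subseteq \A_d^2$.
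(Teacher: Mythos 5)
Your construction for $M=d$ (columns of the Fourier matrix) and your reduction of the $M=2$ case to the scalar condition $\sum_k p_k e^{i\phi_k}=0$ coincide with the paper's proof; the only difference is that the paper simply invokes the generalized (polygon) triangle inequality and its converse, while you attempt to prove that converse yourself. A minor inaccuracy first: the set of values of $\sum_{k\geq 2}p_k e^{i\phi_k}$ is in general an annulus, not the full disk of radius $1-p_1$ (for $\v{p}=(0.4,0.35,0.25)$ the inner radius is $0.1$). This does not hurt you, since the target $-p_1$ always has modulus inside the admissible interval, but the ``fills the disk'' claim is false as stated -- which is presumably why you fall back on the induction.

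The genuine gap is in that induction. Your merging step replaces the two smallest entries by $p'=p_{d-1}+p_d$ and asserts $p'\leq 1/2$ ``because they are the two smallest entries of a sum-one vector with no entry exceeding $1/2$''. That is correct for $d\geq 4$ (there remain two other entries, each at least as large, so $2p'\leq 1$), but false for $d=3$: for $\v{p}=(0.4,0.35,0.25)$ one gets $p'=0.6>1/2$, and indeed the reduced pair $(p_1,1-p_1)$ satisfies the $d=2$ hypothesis only when $p_1=1/2$ exactly. Since every chain of your induction must pass through $d=3$ before reaching the base case $d=2$, the argument as written establishes nothing for any $d\geq 3$. The repair is easy: either add $d=3$ as a second base case (when $\max_k p_k\leq 1/2$ the three lengths satisfy the ordinary triangle inequality, so a closed triangle with sides $p_1,p_2,p_3$ exists), or exploit the flexibility you already note parenthetically -- merge the two smallest entries to an effective modulus $r$ chosen anywhere in $[\,|p_{d-1}-p_d|,\,p_{d-1}+p_d\,]$, e.g.\ $r=p_1$ when $d=3$ -- instead of fixing $r=p_{d-1}+p_d$. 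With that amendment your argument is a sound, self-contained substitute for the paper's appeal to the converse polygon inequality.
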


\begin{proof}
	We first show $\A^2_d = \P^2_d$. We need to prove that for a given $\v{p}$ the condition \mbox{$\max_k p_k\leq 1/2$} implies the existence of two perfectly distinguishable quantum states with a fixed classical version $\v{p}$. Consider the following two pure states,
	\begin{equation}
		\ket{\psi^{(1)}} = \sum_{k=1}^d \sqrt{p_k} \ket{k},\quad\ket{\psi^{(2)}} = \sum_{k=1}^d \sqrt{p_k} e^{i \phi_k} \ket{k},
	\end{equation}
	so that their overlap is given by	
	\begin{equation}
		|\!\bra{\psi^{(1)}}\psi^{(2)}\rangle\!|^2=\sum_{k=1}^d p_k e^{i\phi_k}.
	\end{equation}
	Now, note that the existence of phases $\{\phi_k\}$ such that the above expression vanishes is equivalent to the possibility of constructing a closed polygon out of $d$ segments of lengths $\{p_k\}$. Recall that the generalized triangle inequality states that the longest side of the polygon has to be shorter than the sum of the remaining sides; and its converse ensures that one can build a closed polygon if this condition is satisfied. Therefore, if
	\begin{equation}
		\max_k p_k\leq \sum_{k} p_k -\max_k p_k= 1-\max_k p_k,
	\end{equation}
	meaning $\max_k p_k\leq \frac{1}{2}$, then there exists a choice of phases $\{\phi_k\}$ such that the overlap between $\ket{\psi^{(1)}}$ and $\ket{\psi^{(2)}}$ vanishes.
	 
	 We now show that $\A^d_d = \P^d_d$. For this, we need to prove the existence of $d$ orthogonal states with classical version $\v{\eta}$. This is simply accomplished by choosing columns of the Fourier matrix, \mbox{$\ket{\psi^{(k)}}=F\ket{k}$}, which are all mutually orthogonal and the corresponding classical states are maximally mixed.
\end{proof}

Although, for the particular cases of $M=2$ and $M=d$, $M$-distinguishability regions $\A^M_d$ coincide with the corresponding bounding permutohedra $\P^M_d$, the following result shows that, in general, the permutohedron bound is not tight.
\begin{prop}
	\label{prop:insufficient}
	The necessary condition for \mbox{$M$-distinguishability}, as stated by Proposition~\ref{prop:necessary}, is not sufficient for $M=d-1$ and even $d>2$. Equivalently, $\A^{d-1}_d\neq \P^{d-1}_d$ for even $d>2$.
\end{prop}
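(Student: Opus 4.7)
The plan is to construct, for each even $d>2$, a probability vector $\v{p}$ on the boundary of $\P^{d-1}_d$ and show $\v{p}\notin\A^{d-1}_d$. I take $\v{p}$ so that $\max_k p_k=1/(d-1)$, which triggers Lemma~\ref{lem:boundary}: if $\v{p}\in\A^{d-1}_d$ then there exist $d-1$ orthogonal pure states with classical version $\v{p}$, and Lemma~\ref{lem:dist_uni} then supplies a unistochastic matrix $T$ whose first $d-1$ columns equal $\v{p}$ (the last column being determined since any unistochastic matrix is doubly stochastic). Writing $T_{ij}=|U_{ij}|^{2}$ for a unitary $U$ and $U_{ij}=\sqrt{p_{i}}\,e^{i\theta_{ij}}$ for $j\leq d-1$, I derive a contradiction from the unitarity constraints on the phases $\theta_{ij}$ for a carefully chosen $\v{p}$.

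For $d=4$ I take $\v{p}=(1/3,p_{2},p_{3},p_{4})$ with $p_{i}<1/3$ for $i\geq 2$, so that $p_{2}+p_{3}+p_{4}=2/3$ (e.g.\ $\v{p}=(1/3,2/9,2/9,2/9)$). Since $p_{1}=1/3$ forces $U_{1,4}=0$, rephasing the first three columns lets me set $\theta_{1,j}=0$ for $j\leq 3$. Orthogonality of row $1$ with row $i\geq 2$ then reduces to $\sum_{j=1}^{3}e^{i\theta_{i,j}}=0$; three unit complex numbers summing to zero must be $\{1,e^{2\pi i/3},e^{4\pi i/3}\}$ up to common rotation, so after a further row rephasing the triple $(\theta_{i,1},\theta_{i,2},\theta_{i,3})$ is literally a permutation $\pi_{i}\in S_{3}$ of $\{0,2\pi/3,4\pi/3\}$. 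Pairwise orthogonality of rows $i,i'\geq 2$ then yields
\begin{equation*}
\bigg|\sum_{j=1}^{3}e^{i(\theta_{i',j}-\theta_{i,j})}\bigg|^{2}=\frac{q_{i}q_{i'}}{p_{i}p_{i'}},\qquad q_{i}:=1-3p_{i},
\end{equation*}
and direct enumeration over $S_{3}$ shows the left-hand side equals $0$ when $\pi_{i}$ and $\pi_{i'}$ lie in different cosets of the alternating subgroup $A_{3}\leq S_{3}$ and equals $9$ otherwise. Since $p_{i}<1/3$ gives $q_{i}>0$, the value must be $9$, i.e., $p_{i}+p_{i'}=1/3$ for every pair in $\{2,3,4\}$. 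Summing over the three pairs forces $p_{2}+p_{3}+p_{4}=1/2$, contradicting $p_{2}+p_{3}+p_{4}=2/3$. This establishes $\A^{3}_{4}\neq\P^{3}_{4}$.

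The main obstacle in extending this to even $d>4$ is that for $d-1\geq 5$ the set of zero-sum $(d-1)$-tuples of unit complex numbers is no longer a finite union of rotated copies of a single configuration, so the clean $\{0,9\}$ dichotomy is lost and $|S_{ii'}|$ can a priori take values in a continuum. My strategy is to choose $\v{p}=(1/(d-1),\ldots,1/(d-1),p,p,p)$ with $d-3$ large entries and $p=2/(3(d-1))$; the three `small' rows of $U$, viewed after rescaling as unit-modulus zero-sum vectors in $\mathbb{C}^{d-1}$, are then forced into a two-dimensional subspace $\mathcal{W}$ orthogonal to $v_{1}=(1,\ldots,1)$ and to the $d-3$ large rows, with pairwise inner-product magnitudes $(d-1)/2$. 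Combining linear dependence in $\mathcal{W}$ with the unit-modulus constraint forces a highly restrictive component structure on these three rows (splitting each into a size-$3$ subfamily with cube-root phases and a size-$(d-4)$ subfamily with pairs of antipodes), and one can then check that no two orthogonal unit-modulus zero-sum vectors exist in $\mathcal{W}^{\perp}\cap v_{1}^{\perp}$ to serve as the remaining large rows $v_{2},v_{3}$. Carrying out this subspace-geometric analysis uniformly for every even $d>4$ -- and ruling out every admissible $\mathcal{W}$ -- is the chief technical difficulty of the proof.
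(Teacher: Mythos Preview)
Your $d=4$ argument is correct and, modulo the row/column swap, is essentially the paper's argument specialised to $d=4$: the same point $\v p=\tfrac{1}{3}(1,\tfrac23,\tfrac23,\tfrac23)$ is used (up to permutation), the cube-root constraint is forced in the same way, and your $A_3$-coset dichotomy is the hands-on version of the paper's block decomposition.

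The genuine gap is the general even case. You explicitly concede that ``carrying out this subspace-geometric analysis uniformly for every even $d>4$ \ldots\ is the chief technical difficulty,'' but you do not carry it out. Your sketch is also not obviously workable: the three small rows lying in a two-dimensional $\mathcal W$ with the prescribed Gram data is \emph{consistent}, not contradictory, so the contradiction must come from the unit-modulus constraint interacting with $\mathcal W^{\perp}$; yet you only speak of finding ``two orthogonal unit-modulus zero-sum vectors'' in $\mathcal W^{\perp}\cap v_1^{\perp}$, which matches $d=6$ but not larger $d$ (where one needs $d-4$ such vectors in a $(d-4)$-dimensional space). As stated, the sketch neither covers all even $d$ nor gives a mechanism that survives the freedom in choosing $\mathcal W$.

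The paper sidesteps this entirely by analysing \emph{columns} rather than rows, and this is the key idea you are missing. With $\v p=\tfrac{1}{d-1}(\tfrac23,\tfrac23,\tfrac23,1,\dots,1)$, the last column of $U$ is supported on rows $1,2,3$ only, so orthogonality with the (real) first column forces its top-three part to be cube roots. Then \emph{every} column $j\in\{1,\dots,d-1\}$, having equal moduli in its top-three part, must be orthogonal to that cube-root triple, and unit-modulus triples orthogonal to $(1,\omega,\omega^{-1})$ come in exactly two flavours, $\v a=(1,1,1)$ or $\v b=(1,\omega^{-1},\omega)$, up to phase. This bifurcates the bottom $(d-3)\times(d-1)$ block into submatrices $A$ (columns of type~$\v a$) and $B$ (columns of type~$\v b$); one computes $A^\dagger A=(d-1)\iden-2d_A\ket{+}\bra{+}$ and similarly for $B$, together with $A^\dagger B=0$. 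Hence $\rank A+\rank B=\rank(A|B)\le d-3$, but the explicit Gram matrices give $\rank A\ge d_A-1$ and $\rank B\ge d_B-1$ with equality only when $2d_A+1=d$ or $2d_B+1=d$. For even $d$ neither equality can hold, so $\rank A+\rank B=d_A+d_B=d-1>d-3$, a contradiction. This rank argument replaces your unfinished subspace analysis with a two-line count that works uniformly in $d$.
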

The proof of the above result can be found in Appendix~\ref{app:insufficient}. We thus see, that the regions $\A_d^M$ have a more complex structure than permutohedra $\P_d^M$. Let us illustrate this using the simplest non-trivial example of $d=4$ and $M=3$. As shown in Fig.~\ref{fig:permutohedra}, the probability simplex $\Delta_4$ can be represented by a 3-dimensional tetrahedron, with maximally mixed distribution $\v{\eta}$ in the centre and vertices corresponding to sharp probability distributions, i.e., $(1,0,0,0)$ and permutations thereof. Permutohedron $\P^3_4$, bounding the region $\A^3_4$, is also a tetrahedron, with vertices $\v{f}^i$ located at the centres of the faces of the original tetrahedron, i.e., $\v{f}^1=\v{v}^3=\frac{1}{3}(1,1,1,0)$ and $\v{f}^i$ for $i>1$ are given by permutations of $\v{f}^1$. However, as we prove in Appendix~\ref{app:tetra_skeleton}, not all points within the tetrahedron $\P^3_4$ belong to $\A_4^3$. More precisely, we show that while the points lying on lines connecting edges of $\P^3_4$ with its centre belong to $\A_4^3$, the faces of $\P^3_4$ and points lying on the lines connecting centres of these faces with the centre of $\P^3_4$ do not belong to $\A^3_4$. We illustrate this in Figs.~\hyperref[fig:tetra_skeleton]{2a-b}. Moreover, based on numerical evidence, we conjecture that points belonging to $\A^3_4$ have the following product structure:
\begin{equation}
	\label{eq:3dist_conjecture}
	a b \v{f}^1+a (1-b) \v{f}^2+(1-a) b \v{f}^3+(1-a) (1-b) \v{f}^4,
\end{equation}
with $a,b\in[0,1]$. We present this conjectured set in Fig.~\hyperref[app:tetra_skeleton]{2c}.

\begin{figure}[t]
	\begin{tikzpicture}			
		\node at (-0.25\columnwidth,0.3\columnwidth) {\includegraphics[width=0.4\columnwidth]{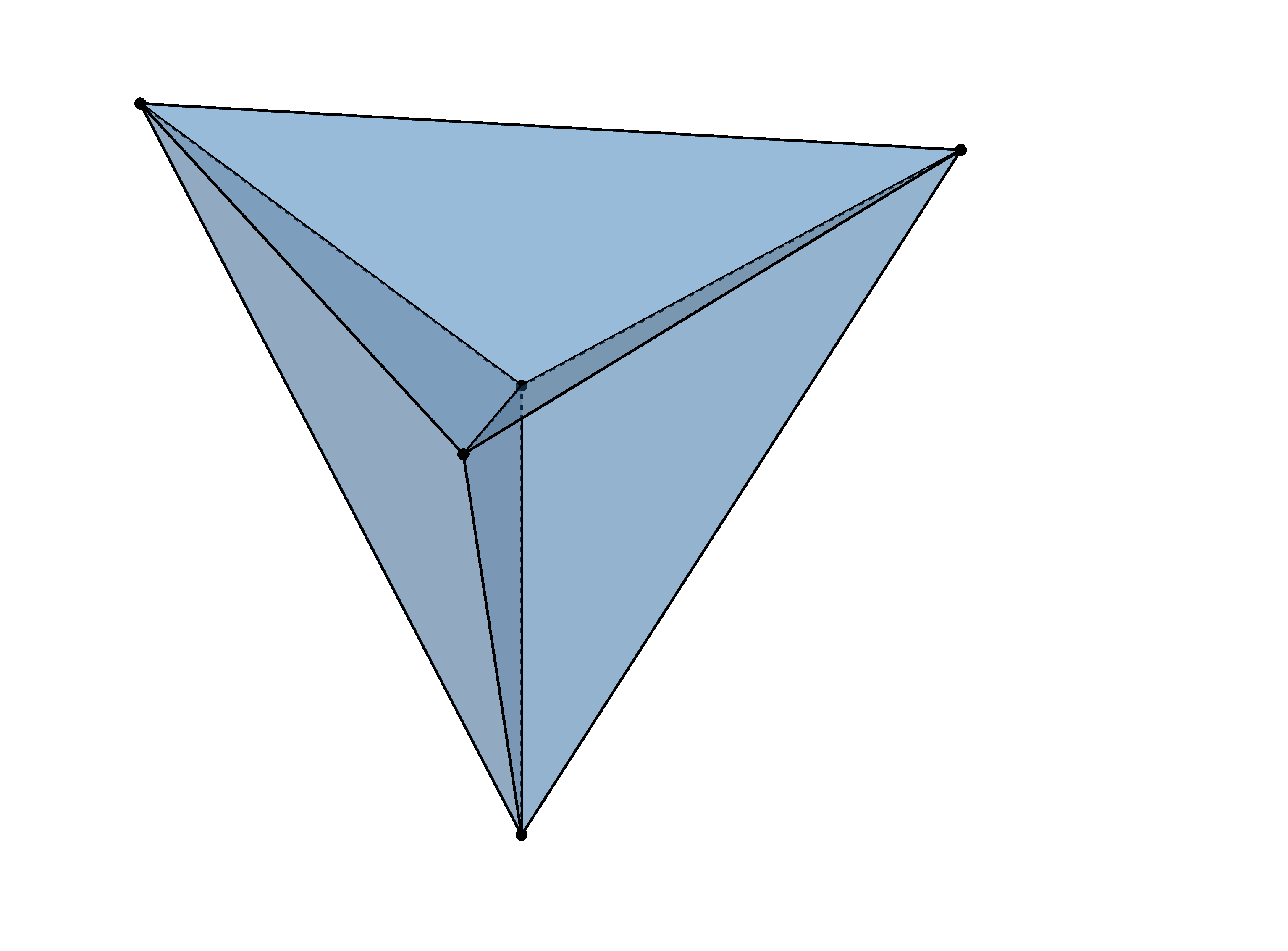}};
		\node at (-0.5\columnwidth,0.45\columnwidth) {\footnotesize\color{black}$\v{f}^1=\frac{1}{3}[0,1,1,1]$};
		\node at (-0.05\columnwidth,0.43\columnwidth)
		{\footnotesize\color{black}$\v{f}^3=\frac{1}{3}[1,1,0,1]$};	
		\node at (-0.5\columnwidth,0.31\columnwidth)
		{\footnotesize\color{black}$\v{f}^4=\frac{1}{3}[1,1,1,0]$};
		\node at (-0.5\columnwidth,-0.05\columnwidth) {\includegraphics[width=0.4\columnwidth]{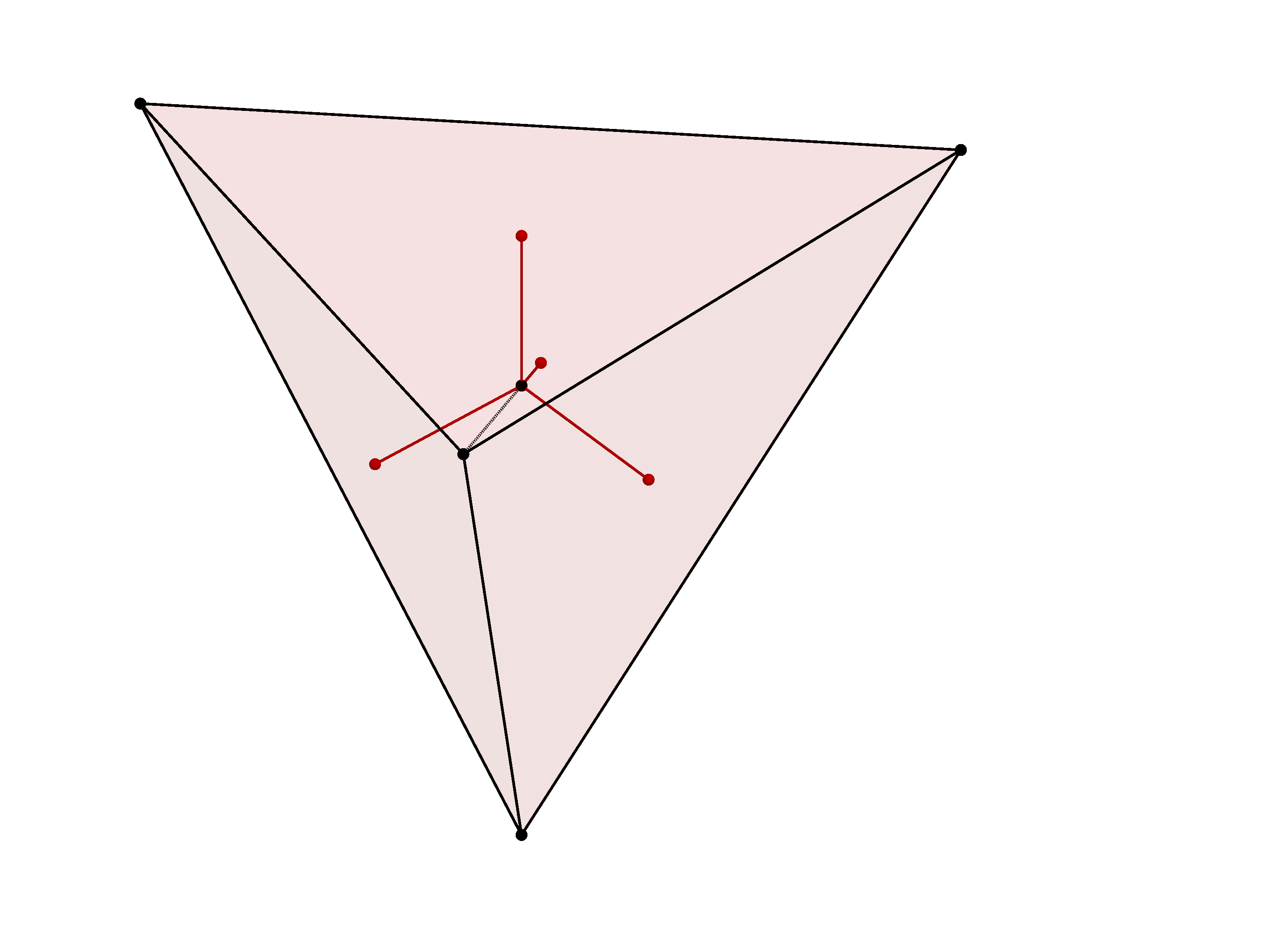}};
		\node at (-0.67\columnwidth,0.08\columnwidth) {\footnotesize\color{black}$\v{f}^1$};
		\node at (-0.37\columnwidth,0.05\columnwidth)
		{\footnotesize\color{black}$\v{f}^3$};	
		\node at (-0.54\columnwidth,-0.2\columnwidth)
		{\footnotesize\color{black}$\v{f}^2$};	
		\node at (-0.63\columnwidth,-0.07\columnwidth)
		{\footnotesize\color{black}$\v{f}^4$};
		\node at (0\columnwidth,-0.05\columnwidth) {\includegraphics[width=0.4\columnwidth]{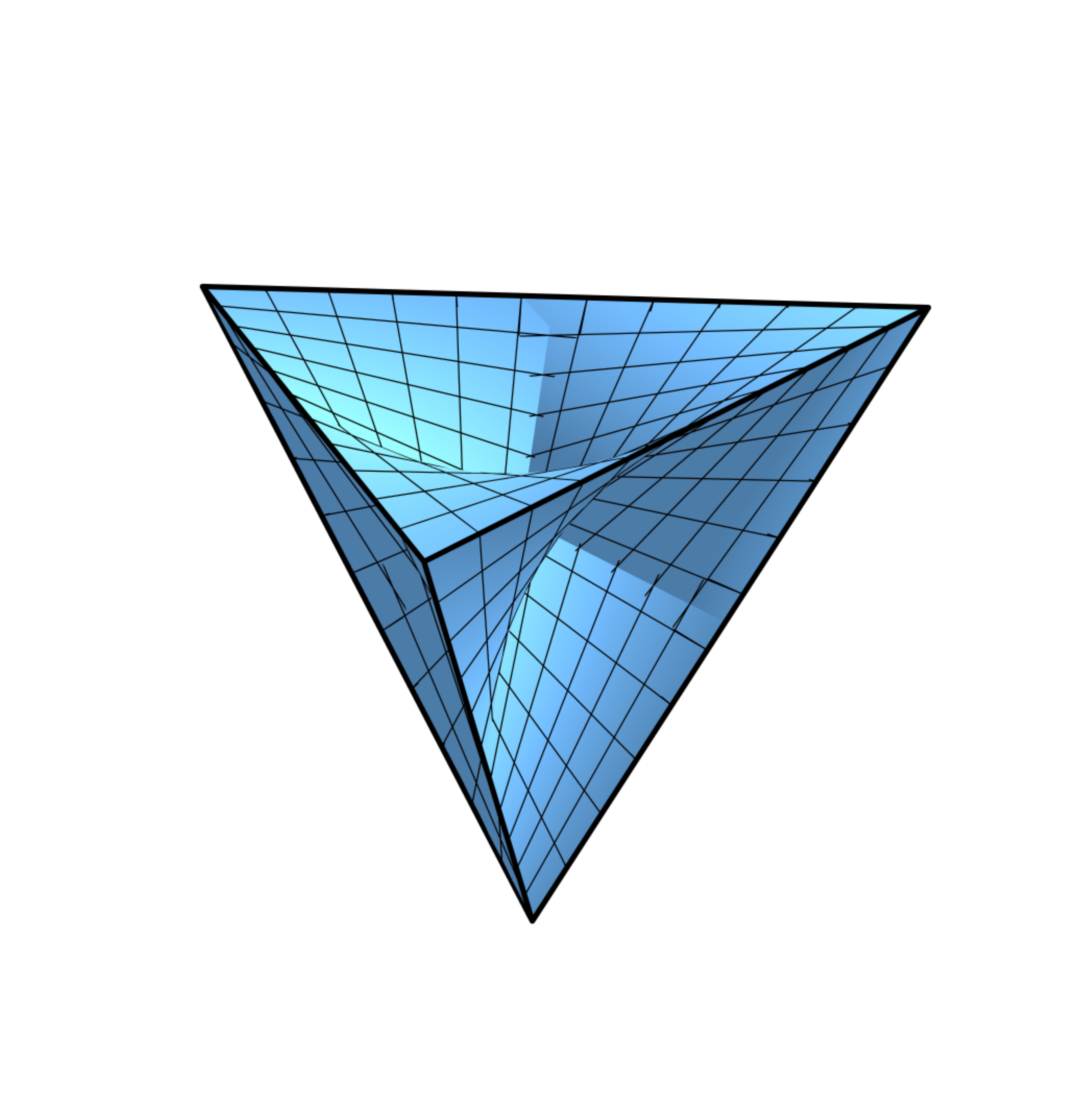}};
		\node at (-0.14\columnwidth,0.08\columnwidth) {\footnotesize\color{black}$\v{f}^1$};
		\node at (0.18\columnwidth,0.05\columnwidth)
		{\footnotesize\color{black}$\v{f}^3$};	
		\node at (0.01\columnwidth,-0.21\columnwidth)
		{\footnotesize\color{black}$\v{f}^2$};	
		\node at (-0.1\columnwidth,-0.09\columnwidth)
		{\footnotesize\color{black}$\v{f}^4$};
		\node at (-0.28\columnwidth,0.5\columnwidth) {\small\color{black} (a)};
		\node at (-0.53\columnwidth,0.1\columnwidth) {\small\color{black}(b)};	
		\node at (0\columnwidth,0.1\columnwidth) {\small\color{black}(c)};	
		\node at (-0.12\columnwidth,0.20\columnwidth)
		{\footnotesize\color{black}$\v{f}^2=\frac{1}{3}[1,0,1,1]$};	
	\end{tikzpicture}
	\caption{\label{fig:tetra_skeleton} \emph{Structure of $\A^3_4$.} (a)~Probability vectors lying on lines connecting edges of $\P^3_4$ with its centre belong to $\A_4^3$. (b)~Probability vectors lying on the faces of $\P^3_4$ and points lying on the lines connecting centres of these faces with the centre of $\P^3_4$ do not belong to $\A^3_4$. (c)~Conjectured form of the distinguishability region $A^3_4$ described by Eq.~\eqref{eq:3dist_conjecture}.}
\end{figure}

\subsection{Properties of $M$-distinguishability regions}

Here, we collect the properties of $M$-distinguishability regions $\A_d^M$ beyond what is stated by the permutohedron bound $\A_d^M\subset \P_d^M$. First, we make two obvious observations: the vertices of $\P_d^M$ always belong $\A_d^M$ and $\A_d^M$ does not have to be convex. The first one comes from the fact that vertices $\Pi_k\v{v}^M$ of $\P_d^M$ correspond to maximally mixed states on the $M$-dimensional subspaces, and we know that then the columns of $M$-dimensional Fourier matrix form $M$ orthogonal states with a fixed classical version $\v{v}^M$. The second observation comes from noting that already $\A_4^3$ is not convex. Despite $M$-distinguishability regions not being convex, we conjecture that they have a related property of being star-shaped.
\begin{conj}
	The $M$-distinguishability regions $\A_d^M$ form star-shaped domains with the centre point given by $\v{\eta}$, i.e.,
	\begin{equation}
		\v{p}\in\A_d^M \quad\Longrightarrow\quad \forall \lambda\in[0,1]:~\lambda\v{p}+(1-\lambda)\v{\eta}\in\A_d^M.
	\end{equation} 
\end{conj}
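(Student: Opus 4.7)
The plan is to produce, given $M$ perfectly distinguishable states $\{\rho^{(n)}\}_{n=1}^M$ that witness $\v{p}\in\A_d^M$, an orthogonal family $\{\sigma^{(n)}(\lambda)\}_{n=1}^M$ whose shared classical version is $\v{q}(\lambda):=\lambda\v{p}+(1-\lambda)\v{\eta}$, for every $\lambda\in[0,1]$. The endpoints are immediate: $\lambda=1$ is the hypothesis, and at $\lambda=0$ the first $M$ columns of the Fourier matrix $F$, exactly as used in the proof of $\A^d_d=\P^d_d$ within Proposition~\ref{prop:sufficient}, provide an orthonormal set with common classical version $\v{\eta}$. The whole problem is therefore to interpolate between these two endpoint constructions in a way that simultaneously keeps the diagonal linear in $\lambda$ and preserves mutual orthogonality.

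The most natural first attempt is the state-level convex combination $\sigma^{(n)}=\lambda\rho^{(n)}+(1-\lambda)\tau^{(n)}$ with $\{\tau^{(n)}\}_{n=1}^M$ orthogonal and each of classical version $\v{\eta}$. The diagonal comes out correctly by linearity, and orthogonality of the $\sigma^{(n)}$ (using positivity of every cross term) reduces to $\tau^{(n)}\tau^{(m)}=0$ and $\rho^{(n)}\tau^{(m)}=0$ for $n\neq m$. Equivalently, each $\tau^{(m)}$ must be supported in the orthogonal complement of $\bigoplus_{n\neq m}\mathrm{supp}(\rho^{(n)})$ while still carrying the uniform diagonal $\v{\eta}$. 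Since any state of classical version $\v{\eta}$ hits every basis vector $\ket{k}$ with probability $1/d$, the admissible subspace must project onto each $\ket{k}$ with weight at least $1/d$, a non-trivial structural condition on $\{\mathrm{supp}(\rho^{(n)})\}$.

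An equivalent reformulation, valid when $\v{p}$ is achieved by pure states via Lemma~\ref{lem:dist_uni}, is: given a unitary $U$ whose first $M$ columns have squared moduli $\v{p}$, find a unitary $W$ whose first $M$ columns have squared moduli $\v{q}(\lambda)$. A tempting ansatz is $W_{kn}=\sqrt{\lambda}\,U_{kn}+\sqrt{1-\lambda}\,F_{kn}\,e^{i\alpha_n}$, which gives the desired moduli provided the cross terms $\mathrm{Re}(U_{kn}\overline{F_{kn}}e^{-i\alpha_n})$ all vanish. Since this is $dM$ equations in only $M$ phases, the ansatz is overdetermined and the plan is to enlarge it, for instance via a beam-splitter-type mixing of $U$ and $F$ on a doubled Hilbert space followed by an appropriate regauging and reduction back to dimension~$d$, or by allowing the $\sigma^{(n)}$ to be mixed states of higher rank so as to absorb the additional phase freedom.

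I expect the decisive obstacle to be the preservation of orthogonality across the entire interpolation: linearity of the diagonal is automatic, whereas orthogonality is exactly the constraint that separates $\A_d^M$ from its much larger convex hull (strictly larger by Proposition~\ref{prop:insufficient}), and it is what prevents a generic convexity argument. A complementary fallback is a continuity-plus-compactness argument using the semialgebraic description of $\A_d^M$: since the segment from $\v{p}$ to $\v{\eta}$ is one-dimensional, establishing that $\v{\eta}$ lies in the relative interior of $\A_d^M$ and that any exit from $\A_d^M$ along this segment would have to be followed by a re-entry would reduce star-shapedness to a local construction near $\v{\eta}$, which is considerably easier than the global construction sketched above.
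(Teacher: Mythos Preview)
The statement you are attempting to prove is labelled as a \emph{Conjecture} in the paper and is not proved there. The Remark immediately following it notes that, via Lemma~\ref{lem:dist_uni}, the conjecture is directly related to the known open conjecture that the set of unistochastic matrices is star-shaped around the van der Waerden matrix. So there is no paper proof to compare against, and a complete argument would resolve (or at least bear on) an open problem.

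Your proposal does not supply one: each avenue you sketch ends with an acknowledged gap. The state-level mixture $\sigma^{(n)}=\lambda\rho^{(n)}+(1-\lambda)\tau^{(n)}$ requires each $\tau^{(m)}$ to carry the uniform diagonal while being supported inside $\bigcap_{n\neq m}\ker\rho^{(n)}$, a constraint you correctly flag but do not resolve. The unitary interpolation $W_{kn}=\sqrt{\lambda}\,U_{kn}+\sqrt{1-\lambda}\,F_{kn}e^{i\alpha_n}$ is, as you note, overdetermined; the suggested enlargements (beam-splitter mixing on a doubled space, higher-rank $\sigma^{(n)}$) are only gestured at.

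The continuity-plus-compactness fallback has a concrete defect, not merely an incompleteness. You would need $\v{\eta}$ to lie in the relative interior of $\A_d^M$, but the paper shows this can fail: for $d=4$, $M=3$, Remark~\ref{rmk:no_ball_in4} (no isolated complex Hadamard of order~$4$) says there is no ball $\B(\v{\eta},\epsilon)\subset\A_4^3$, and Appendix~\ref{app:tetra_skeleton} exhibits line segments emanating from $\v{\eta}$ towards the face centres of $\P_4^3$ along which $3$-distinguishability fails arbitrarily close to the centre. Hence $\v{\eta}$ is a boundary point of $\A_4^3$, and the ``local construction near $\v{\eta}$'' that you call ``considerably easier'' is unavailable precisely in one of the cases of interest.
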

\begin{rmk}
	The above conjecture, via Lemma~\ref{lem:dist_uni}, is directly related to the known conjecture about the star-shaped property of the set of unistochastic matrices~\cite{bengtsson2005birkhoff}.
\end{rmk}

The next property allows one to conclude that $\v{p}$ belongs to $M$-distinguishability region, if its \emph{coarse-grained} version belongs to it. The definition of coarse-graining and the result are as follows.
\begin{defn}[Coarse-graining]
	The set $\G$ of \emph{coarse-graining} matrices consists of all stochastic matrices with entries in $\{0,1\}$. Moreover, if $\v{q}=G\v{p}$ for some $G\in\G$, then $\v{q}$ is called a coarse-grained version of $\v{p}$.
\end{defn}

\begin{prop}
	If there exist $M$ perfectly distinguishable pure states with a classical version $\v{q}$ given by coarse-graining of $\v{p}$, i.e., $\v{q}=G\v{p}$, then there exists $M$ perfectly distinguishable pure states with a classical version $\v{p}$. 
\end{prop}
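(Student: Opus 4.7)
The plan is to lift the $M$ orthonormal pure states with classical version $\v{q}$ to $M$ orthonormal pure states with classical version $\v{p}$ by a \emph{fibre-constant phase} construction. The key observation is that a coarse-graining $G\in\G$ simply partitions the fine indices into groups whose $\v{p}$-weights add up to the corresponding $\v{q}$-weight, and if one uses the same phase across each group when expanding the given vectors from $\v{q}$-space back to $\v{p}$-space, both the target diagonal and the inner-product structure are preserved simultaneously.

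Concretely, I would first note that $G\in\G$ being column-stochastic with $\{0,1\}$ entries forces each column to be a standard basis vector, so $G$ encodes a function $\pi:\{1,\dots,d\}\to\{1,\dots,d'\}$ (where $d'$ is the dimension of $\v{q}$) satisfying $G_{jk}=\delta_{j,\pi(k)}$ and hence $q_j=\sum_{k\in\pi^{-1}(j)}p_k$. By hypothesis there exist orthonormal pure states $\{\ket{u_n}\}_{n=1}^M$ with $|\braket{j}{u_n}|^2=q_j$, which I would parameterise as
\[
\ket{u_n}=\sum_{j=1}^{d'}\sqrt{q_j}\,e^{i\phi_{jn}}\ket{j},
\]
with phases $\phi_{jn}$ chosen arbitrarily whenever $q_j=0$. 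I would then define the lifted states by assigning to every fine index $k$ the phase of its parent coarse index $\pi(k)$, namely
\[
\ket{\tilde u_n}=\sum_{j=1}^{d'}e^{i\phi_{jn}}\sum_{k\in\pi^{-1}(j)}\sqrt{p_k}\,\ket{k}.
\]

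Two verifications close the argument. First, since each $k$ belongs to the unique fibre $\pi^{-1}(\pi(k))$, we have $\braket{k}{\tilde u_n}=e^{i\phi_{\pi(k),n}}\sqrt{p_k}$ and hence $|\braket{k}{\tilde u_n}|^2=p_k$, confirming that the lifted states have the desired classical version $\v{p}$. Second, orthonormality transfers via the direct computation
\[
\braket{\tilde u_m}{\tilde u_n}=\sum_j e^{i(\phi_{jn}-\phi_{jm})}\!\!\sum_{k\in\pi^{-1}(j)}\!\!p_k=\sum_j q_j\,e^{i(\phi_{jn}-\phi_{jm})}=\braket{u_m}{u_n}=\delta_{mn},
\]
where the crux is precisely that a common phase per fibre allows the inner sum to collapse to $q_j$, reproducing the original inner products verbatim. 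I do not expect a genuine obstacle here: the construction is essentially forced, because $|\braket{k}{\tilde u_n}|^2=p_k$ fixes the moduli while preservation of inner products forces the phases to be constant along each fibre of $\pi$. The only mildly delicate point is the book-keeping at empty fibres (indices $j$ with $q_j=0$), which is handled by letting the corresponding phases be arbitrary without affecting either identity above.
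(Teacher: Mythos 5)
Your proposal is correct and follows essentially the same route as the paper: the paper's index $k^*$ with $G_{k^*k}=1$ is exactly your $\pi(k)$, and its construction $\ket{\tilde{\psi}^{(n)}}=\sum_k \sqrt{p_k}e^{i\phi^{(n)}_{k^*}}\ket{k}$ is your fibre-constant phase lift, with the same two verifications of the diagonal and of the collapsed inner products. The only difference is cosmetic bookkeeping (your explicit map $\pi$ and the remark about empty fibres), so there is nothing to add.
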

\begin{proof}
	Assume that there exists a set of $M$ mutually orthogonal states
	\begin{equation}
		\ket{\psi^{(n)}}=\sum_k \sqrt{q_k}\exp(i\phi^{(n)}_{k})\ket{k},
	\end{equation}
	with $\v{q}=G\v{p}$. Let us denote by $k^*$ the unique index for which \mbox{$G_{k^*k}=1$}, and construct the set of $M$ states
	\begin{equation}
		\ket{\tilde{\psi}^{(n)}}=\sum_k \sqrt{p_k}\exp(i\phi^{(n)}_{k^*})\ket{k}.
	\end{equation}
	These states all have a fixed classical action $\v{p}$ and are mutually orthogonal, 
	\begin{align}
	\bra{\tilde{\psi}^{(m)}}\tilde{\psi}^{(n)}\rangle&=\sum_k p_k \exp(i\phi^{(n)}_{k^*}-i\phi^{(m)}_{k^*})\nonumber\\
	&=\sum_k q_k \exp(i\phi^{(n)}_{k}-i\phi^{(m)}_{k})\nonumber\\
	&=\bra{\psi^{(m)}}\psi^{(n)}\rangle=0,
	\end{align}
	so that $\v{p}\in\A^M_d$.
\end{proof}

In particular, since edges (1-faces) $\v{e}^{(kl)}$ of $\P^M_d$, 
\begin{equation}
	\v{e}^{(kl)}=\lambda \Pi_k\v{v}^M + (1-\lambda) \Pi_l\v{v}^M,\quad \lambda\in[0,1],
\end{equation}
can be coarse-grained to a vertex $\v{v}^M$, they all belong to the distinguishability region $\A^M_d$. On the other hand, faces (2-faces) of $\P^M_d$ do not need to belong to $\A^M_d$, as illustrated in Fig.~\hyperref[fig:tetra_skeleton]{2b}, where no point of the proper 2-face of $\P^3_4$ belongs to $\A_4^3$. In fact, the proof of Proposition~\ref{prop:insufficient} shows that the centres of 2-faces of $\P^{d-1}_d$ do not belong to $\A^{d-1}_d$ for even $d>2$.

Finally, we can say something about the smallest non-trivial distinguishability region $\A_d^{d-1}$. For this, let us first denote by $\B(\v{p},\epsilon)$ the ball of radius $\epsilon$ centred at $\v{p}$, so that $\v{q}\in \B(\v{p},\epsilon)$ if and only if $\delta(\v{p},\v{q})\leq\epsilon$. We then have the following result.

\begin{prop}
	\label{thm:ball_inside}  
	For prime $d$ there exists $d-1$ perfectly distinguishable states with classical version $\v{p}$ if $\v{p}$ is close enough to a maximally mixed distribution, i.e., \mbox{$\delta(\v{p},\v{\eta})\leq\epsilon$} for some $\epsilon>0$. Equivalently, for prime $d$ we have
	\begin{equation}
		\exists {\epsilon > 0}:~ \B(\v{\eta},\epsilon) \subseteq \A^{d-1}_{d}.
	\end{equation}
\end{prop}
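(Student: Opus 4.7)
The natural route is via Lemma~\ref{lem:dist_uni}: it suffices to show that for every $\v{p}$ sufficiently close to $\v{\eta}$ there is a unistochastic matrix whose first $d-1$ columns all equal $\v{p}$. Since any unistochastic matrix is automatically bistochastic, the remaining column of such a $T$ must be $\v{1}-(d-1)\v{p}$, so the candidate $T(\v{p})$ is unique and depends smoothly on $\v{p}$, with $T(\v{\eta})=W$ already realized unistochastically by the Fourier matrix (recall $|F_{kl}|^2=W_{kl}$). The problem thus reduces to proving that the set of unistochastic matrices contains a relative neighborhood of $W$ inside the $(d-1)^2$-dimensional Birkhoff polytope $\mathrm{Birkhoff}(d)$.

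I would attack this by showing that the differential of the map $\Phi:U(d)\to\mathrm{Birkhoff}(d)$, $\Phi(U)=U\circ\bar U$, is surjective at $U=F$. Parameterizing a tangent direction at $F$ by $\delta U=iFH$ with $H$ Hermitian, a direct computation of $\frac{d}{dt}|U(t)_{kl}|^2$ gives
\begin{equation}
\delta T_{kl}=-2\,\Im\!\bigl[\bar F_{kl}(FH)_{kl}\bigr],
\end{equation}
so $H\in\ker d\Phi_F$ precisely when $FH=F\circ R$ for some real matrix $R$, equivalently $H_{mn}=\frac{1}{d}\sum_k\omega^{k(n-m)}R_{kn}$ with $\omega=e^{2\pi i/d}$.

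Imposing Hermiticity of $H$ then translates into
\begin{equation}
\sum_k\omega^{k(n-m)}\bigl(R_{kn}-R_{km}\bigr)=0\qquad\forall\,m,n,
\end{equation}
which says that the $j$-th discrete Fourier coefficient of the column $R_{\cdot,n}$ must be invariant under $n\mapsto n-j$, and hence can depend on $n$ only modulo $\gcd(j,d)$. This is the unique place where the prime hypothesis enters: for $d$ prime every $j\in\{1,\ldots,d-1\}$ satisfies $\gcd(j,d)=1$, so every nonzero Fourier coefficient of $R_{\cdot,n}$ must be constant in $n$, and inverting the Fourier transform forces the rank-$\leq 2$ structure $R_{kn}=\alpha_n+b_k$ with $\alpha,b\in\mathbb{R}^d$. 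After removing the trivial redundancy $(\alpha,b)\mapsto(\alpha+c,b-c)$ this is a $(2d-1)$-parameter family, matching exactly the dimension of the gauge orbit $U\mapsto D_1UD_2$ of $F$ under diagonal unitaries, which is automatically contained in $\ker d\Phi_F$. Equality of dimensions forces the two subspaces to coincide, so $d\Phi_F$ has rank $d^2-(2d-1)=(d-1)^2=\dim\mathrm{Birkhoff}(d)$ and is surjective onto $T_W\mathrm{Birkhoff}(d)$.

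The submersion (or open-mapping) theorem then implies that $\Phi(U(d))$ contains a relative neighborhood of $W$ in $\mathrm{Birkhoff}(d)$, so $T(\v{p})$ is unistochastic for every $\v{p}\in\B(\v{\eta},\epsilon)$ with $\epsilon$ small enough, and Lemma~\ref{lem:dist_uni} yields $d-1$ perfectly distinguishable pure states with classical version $\v{p}$. The delicate step is the kernel analysis of the previous paragraph, and it is precisely that step which fails for composite $d$: shorter orbits of $n\mapsto n-j$ allow extra directions in $\ker d\Phi_F$, corresponding to continuous deformations of $F$ inside the variety of complex Hadamard matrices, so $\Phi$ may drop rank at $F$ and the method delivers no open neighborhood of $W$.
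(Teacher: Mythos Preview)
Your proof is correct and follows the same conceptual route as the paper: reduce via Lemma~\ref{lem:dist_uni} to the unistochasticity of the specific bistochastic matrix $T(\v{p})$, then exploit the fact that for prime $d$ the Fourier matrix $F$ has zero defect, so that the image of $U\mapsto U\circ\bar U$ covers a full neighborhood of $W$ inside the Birkhoff polytope.

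The difference is one of presentation rather than of idea. The paper outsources the two key facts---that an isolated complex Hadamard matrix yields a unistochastic ball around $W$, and that $F$ is isolated for prime $d$---to Ref.~\cite{tadej2008defect} and the Hadamard literature. You instead prove both from scratch by computing the kernel of $d\Phi_F$ directly and showing (via the Fourier argument that the $j$-th coefficient of $R_{\cdot,n}$ is invariant under $n\mapsto n-j$, hence constant in $n$ when $\gcd(j,d)=1$) that for prime $d$ this kernel coincides with the tangent to the diagonal gauge orbit. Your computation is exactly the ``defect'' calculation of Tadej--\.Zyczkowski specialized to $F$, so you recover their criterion and its verification in one stroke. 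This buys self-containment and makes transparent why the argument breaks for composite $d$ (shorter orbits of $n\mapsto n-j$ permit extra kernel directions, i.e., nonzero defect), at the cost of a longer proof; the paper's version is terser but relies on the cited machinery.
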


\begin{proof}
	Due to Lemma~\ref{lem:dist_uni}, the existence of $d-1$ orthogonal pure states with classical version $\v{p}$, is equivalent to the unistochasticity of the following matrix,
	\begin{equation}
	T(\v{p})=
	\begin{pmatrix}
	p_1 & p_1 & \ldots & p_1 & 1-(d-1) p_1 \\
	p_2 & p_2 & \ldots &p_2 & 1-(d-1) p_2 \\
	\vdots & \vdots& \ddots  &\vdots & \vdots\\
	p_d & p_d & \ldots & p_d & 1-(d-1) p_d 
	\end{pmatrix}.
	\end{equation}
	Now, for $T(\v{p})$ lying in the $\epsilon'$-ball around $T(\v{\eta})=W$, $\v{p}$ lies in an $\epsilon$-ball around $\v{\eta}$. Hence, if we could show that in the $\epsilon'$-ball around $W$ all bistochastic matrices are unistochastic, then we would prove that in the \mbox{$\epsilon$-ball} around $\v{\eta}$ all probability distributions allow for \mbox{$(d-1)$-distinguishability}. Due to result of Ref.~\cite{tadej2008defect}, we know that such a ball exists if there exists an isolated complex Hadamard matrix of dimension $d$. Finally, since it is known that the Fourier matrix $F$ (which is a complex Hadamard matrix) is isolated for any prime $d$, it implies the existence of the postulated ball $\B(\v{\eta},\epsilon) \subset  \A^{d-1}_{d}$ for prime $d$.
\end{proof}

\begin{rmk}
	\label{rmk:no_ball_in4}
	For composite dimensions we know that there exist isolated complex Hadamard matrices for \mbox{$d \in \{5,\ldots,17 \}$}~\cite{bruzda2007complex}. Therefore, the above theorem holds in those dimensions. Moreover, there is no isolated complex Hadamard matrix of order $4$, so there are probabilities infinitesimally close to $\v{\eta}$, which do not allow for $3$-distinguishability (as can be seen in Fig.~\hyperref[fig:tetra_skeleton]{2b} with four directions from the centre having no 3-distinguishable states). 	
\end{rmk}

\section{Classically indistinguishable channels}
\label{sec:channels}

Each classical action $T$ can be represented by specifying $d$ points, each belonging to a distinct simplex $\Delta_d$ and describing the column vectors of $T$. However, since geometrically this picture is not as clear as in the case of classical states, we avoided generalising the concept of distinguishability regions, and instead we focus on distinguishability numbers \mbox{${\M}(T)$} and \mbox{$\tilde{\M}(T)$}. Nevertheless, it is helpful to divide classical actions into three families: unistochastic, bistochastic, and general stochastic matrices. In what follows we describe results concerning each of the families separately, and at the end of the section we also present a full analysis of classically indistinguishable qubit channels. Before we start, notice that for all $T$ we have $\M(T)\leq d^2$, $\tilde{\M}(T)\leq d$ and $\tilde{\M}(T)\leq\M(T)$.

\subsection{Unistochastic action}

We start our study of distinguishability numbers by focusing on channels with a unistochastic classical action~$T$. By definition, there exists at least one unitary channel with a given unistochastic action. In fact, as we now show, for every unistochastic $T$ one can always find $d$ unitary channels that can be perfectly distinguished without using entanglement.
\begin{prop}
	\label{prop:unistochastic}
 	For every unistochastic $T$ the restricted distinguishability number $\tilde{\M}(T)=d$.
\end{prop}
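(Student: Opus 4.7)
The upper bound $\tilde{\M}(T)\leq d$ is immediate: without entanglement, discrimination amounts to finding an input state whose $M$ possible outputs live in a $d$-dimensional Hilbert space, and hence at most $d$ of them can be mutually orthogonal. So the content of the proposition is the lower bound, i.e., an explicit construction of $d$ unitary channels all sharing the classical action $T$, distinguishable via a product input.

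My plan is to fix any unitary $U$ certifying unistochasticity, $T_{kl}=|U_{kl}|^2$, and set
\begin{equation}
	U^{(n)}=UD^{(n)},\quad n=1,\ldots,d,
\end{equation}
where $D^{(n)}$ are the diagonal unitaries defined in Eq.~\eqref{eq:diagonal}. Since $D^{(n)}$ is diagonal with unit-modulus entries ($|D^{(n)}_{ll}|^2=d|F_{nl}|^2=1$), right-multiplication by $D^{(n)}$ only rephrases the columns of $U$, so
\begin{equation}
	|U^{(n)}_{kl}|^2=|U_{kl}|^2|D^{(n)}_{ll}|^2=T_{kl},
\end{equation}
meaning every channel $\Phi^{(n)}(\rho)=U^{(n)}\rho U^{(n)\dagger}$ has classical action $T$.

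For distinguishability I would feed all $d$ channels the same unentangled input, the uniform superposition $\ket{\psi}=\frac{1}{\sqrt{d}}\sum_l\ket{l}$. The diagonal $D^{(n)}$ converts $\ket{\psi}$ into $\ket{\phi_n}:=\sum_l F_{nl}\ket{l}$, which is just the $n$-th row of the Fourier matrix viewed as a vector; unitarity of $F$ makes $\{\ket{\phi_n}\}_{n=1}^d$ orthonormal. Applying $U$ (which preserves inner products) then gives
\begin{equation}
	U^{(n)}\ket{\psi}=U\ket{\phi_n},
\end{equation}
so $\{U^{(n)}\ket{\psi}\}_{n=1}^d$ is an orthonormal family of output states. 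A projective measurement in this output basis distinguishes the $d$ channels with certainty, and combined with the upper bound we conclude $\tilde{\M}(T)=d$.

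There is no real obstacle to overcome once the right construction is in hand; the only thing that requires care is checking that the phase-shift $D^{(n)}$ is diagonal (to preserve $|U_{kl}|^2$) \emph{and} maps the symmetric probe $\ket{\psi}$ to a basis that $U$ then sends to an orthonormal set. The definition of the $D^{(n)}$ via the Fourier matrix makes both requirements hold simultaneously, which is precisely why this family of diagonal unitaries is the natural choice rather than, say, arbitrary diagonal unitaries (which would still preserve $T$ but could fail to yield orthogonal outputs).
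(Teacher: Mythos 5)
Your proof is correct and is essentially the paper's own argument in mirror image: the paper uses $V^{(k)}=D^{(k)}U$ with the probe $\ket{\psi_+}=\frac{1}{\sqrt{d}}U^\dagger\sum_l\ket{l}$, whereas you use $U^{(n)}=UD^{(n)}$ with the uniform superposition as probe, but both rest on the same Fourier-phase construction yielding $d$ orthogonal pure outputs (and the upper bound $\tilde{\M}(T)\leq d$ you note is already stated in the paper). No changes needed.
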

\begin{proof}
	By definition of a unistochastic matrix $T$, there exists a unitary matrix $U$ such that $T = U\circ\bar{U}$. Now, consider a set of $d$ unitary channels $V^{(k)}:=D^{(k)}U$, with $D^{(k)}$ defined in Eq.~\eqref{eq:diagonal} and \mbox{$k\in\{1,\dots,d\}$}. Every such channel has the same classical action given by $T$. Moreover, when acting on a state
 	\begin{equation}
 		\ket{\psi_+}=\frac{1}{\sqrt{d}}U^\dagger\sum_l \ket{l}
 	\end{equation}
 	the set $\{V^{(k)}\}$ produces an orthonormal set of states,
 	\begin{equation}
 		D^{(k)}U\ket{\psi_+}=\frac{1}{\sqrt{d}}\sum_l D^{(k)}\ket{l}=\sum_l F_{kl}\ket{l}=F\ket{k}.
 	\end{equation}
\end{proof}
The obvious next question to ask is whether using entangled input states one can increase this number. As we will show, the answer strongly depends on $T$, with extreme cases given by $T=\iden$ and $T=W$. These correspond to situations where entanglement cannot help at all, and where it raises the number of distinguishable channels all the way to $d^2$. Before proving this statement, let us first introduce a family of Schur-product channels defined by~\cite{li1997special,levick2017quantum}.

\begin{defn}[Schur-product channels]
	The action of a Schur-product channel $\Phi_{X}$ is given by
	\begin{equation}
		\Phi_{X} : \rho \mapsto \rho \circ X,
	\end{equation}
	where the entry-wise product is performed in the distinguished basis and $X$ is an arbitrary correlation matrix, i.e., $X$ is positive and has ones on the diagonal.
\end{defn}

\noindent We can now prove the following result.

\begin{prop}
	\label{prop:schur_product}
	Distinguishability numbers for the identity and van den Waerdan matrix are given by $\M(\iden)=d$ and $\M(W)=d^2$.
\end{prop}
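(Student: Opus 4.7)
The trivial bound $\M(T)\leq d^2$ holds for every classical action, since any family of density operators on $\H_d\otimes\H_d$ with pairwise orthogonal supports contains at most $d^2$ elements, and $\M(\iden)\geq d$ is immediate from Proposition \ref{prop:unistochastic} because the identity stochastic matrix is unistochastic (with $U=\iden$). The two substantive inequalities to establish are $\M(\iden)\leq d$ and $\M(W)\geq d^2$.

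For $\M(\iden)\leq d$, the key structural observation is that every channel $\Phi$ with classical action $\iden$ is a Schur-product channel $\Phi_X$. Indeed, $T_{kl}=\delta_{kl}$ forces the diagonal of $\Phi(\ketbra{l}{l})$ to have a single $1$ at position $l$, and for a positive unit-trace operator this pins down $\Phi(\ketbra{l}{l})=\ketbra{l}{l}$. Writing $\Phi$ in Kraus form, the identity $\sum_n K_n\ketbra{l}{l}K_n^\dagger=\ketbra{l}{l}$ forces $K_n\ket{l}\propto\ket{l}$ for every $n,l$, so every Kraus operator is diagonal and $\Phi(\rho)=\rho\circ X$ with $X_{ij}=\sum_n (K_n)_{ii}\overline{(K_n)_{jj}}$. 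The associated Jamio\l kowski state is then
\begin{equation}
	J_{\Phi_X}=\frac{1}{d}\sum_{k,l}X_{kl}\,\ketbra{kk}{ll},
\end{equation}
supported on the $d$-dimensional subspace $\mathrm{span}\{\ket{kk}\}_{k=1}^{d}\subset\H_d\otimes\H_d$. Any family of density operators with pairwise orthogonal supports in a $d$-dimensional space has at most $d$ members, giving $\M(\iden)\leq d$.

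For $\M(W)\geq d^2$, I would exhibit $d^2$ unitary channels with classical action $W$ whose Jamio\l kowski states are mutually orthogonal. Setting $\omega=e^{2\pi i/d}$ and letting $X,Z$ denote the Weyl--Heisenberg shift and clock operators $X\ket{l}=\ket{l+1\!\bmod\!d}$, $Z\ket{l}=\omega^l\ket{l}$, take
\begin{equation}
	U^{(a,b)}=F\,X^a Z^b,\qquad a,b\in\{0,\dots,d-1\}.
\end{equation}
Because $X^a Z^b$ is a monomial matrix with unit-modulus entries, left-multiplication by $F$ keeps every matrix entry of modulus $1/\sqrt{d}$, so each channel $U^{(a,b)}(\cdot)U^{(a,b)\dagger}$ has classical action $W$. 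Using $F^\dagger F=\iden$, cyclicity of the trace, and the Weyl commutation relation $Z^{b}X^{a}=\omega^{ab}X^{a}Z^{b}$, the Hilbert--Schmidt overlap reduces to
\begin{equation}
	\tr{U^{(a,b)\dagger}U^{(a',b')}}\propto\tr{X^{a'-a}Z^{b'-b}}=d\,\delta_{a,a'}\delta_{b,b'},
\end{equation}
since $(X^c)_{ii}=\delta_{c,0}$ and $\sum_i\omega^{ei}=d\,\delta_{e,0}$. The corresponding $d^2$ Jamio\l kowski states are therefore mutually orthogonal rank-one projectors on $\H_d\otimes\H_d$, so $\M(W)\geq d^2$.

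The main obstacle is the structural reduction used for the $\iden$-case: one has to resist looking for clever entangled probes and instead recognize that such channels collapse to Schur-product form, which traps the support of $J_\Phi$ inside a $d$-dimensional subspace. The $W$-case is largely a bookkeeping exercise combining the observation that $F$ times a Weyl--Heisenberg monomial matrix is (up to normalization) again a complex Hadamard, together with the standard orthogonality of the operator basis $\{X^aZ^b\}$.
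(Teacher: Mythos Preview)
Your construction for $\M(W)\geq d^2$ is correct and essentially the paper's: the paper writes $U^{(kl)}=D^{(k)}FD^{(l)\dagger}$ with $D^{(k)}_{jj}=\omega^{(k-1)(j-1)}$, i.e.\ $D^{(k)}=Z^{k-1}$, and the Fourier intertwining relation turns $Z^{a}FZ^{-b}$ into $FX^{\pm a}Z^{-b}$, so the two $d^2$-element families coincide up to relabelling.

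The $\M(\iden)\leq d$ argument has a genuine gap. You correctly show that every channel with classical action $\iden$ is a Schur-product channel and that each $J_{\Phi_X}$ lives in the $d$-dimensional diagonal subspace $V=\mathrm{span}\{\ket{kk}\}$. But perfect distinguishability of a family of channels does \emph{not} mean their Jamio{\l}kowski states have orthogonal supports: the probe $\ket{\Psi}$ need not be maximally entangled, and orthogonality of the compressions $(\iden\otimes A)J_n(\iden\otimes A^\dagger)$ for rank-deficient $A$ says nothing about orthogonality of the $J_n$ themselves. (Concretely, the constant channel $\rho\mapsto\ketbra{0}{0}$ and the dephasing channel $\D$ are perfectly distinguished by the probe $\ket{1}$, yet their Jamio{\l}kowski states share the summand $\ketbra{00}{00}$.) The repair is one line: writing $\ket{\Psi}=(\iden\otimes A)\ket{\Omega}$, every output $(\Phi_X\otimes\I)(\ketbra{\Psi}{\Psi})=d\,(\iden\otimes A)J_{\Phi_X}(\iden\otimes A^\dagger)$ is supported in the common subspace $(\iden\otimes A)V$, which has dimension at most $d$ and depends only on the probe, not on $X$; hence at most $d$ pairwise-orthogonal outputs.

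With that patch your route is markedly shorter than the paper's, which instead carries out an explicit calculation to show that whenever an entangled probe $\ket{\Psi}$ distinguishes a family of Schur-product channels, so does the separable probe $\ket{\psi}=\sum_j\sqrt{d_j}\,\ket{j}$ built from the diagonal weights $d_j=\sum_k|\alpha_{kj}|^2 c_k$ of the Schmidt data of $\ket{\Psi}$; this yields $\M(\iden)=\tilde{\M}(\iden)=d$. The paper therefore establishes the stronger structural fact that entanglement is genuinely useless for discriminating Schur-product channels (the very same family is separably distinguishable), whereas your support-counting delivers only the numerical bound.
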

\begin{proof}
	To prove the first part we will show that \mbox{$\tilde{\M}(\iden)\geq \M(\iden)$} which, together with the condition \mbox{$\tilde{\M}(\iden)\leq \M(\iden)$} and Proposition~\ref{prop:unistochastic}, leads to \mbox{$\M(\iden)=d$}. We start by noting that the most general quantum channel consistent with classical action $T=\iden$ is a Schur-product channel (this can be easily seen by comparing their Jamio\l{}kowski states and showing that they are the same). Now, consider $n$ such channels, $\{\Phi^{(n)}\}_{n=1}^M$, each defined via the corresponding correlation matrix $X^{(n)}$. The necessary and sufficient condition for perfect distinguishability between all those channels is the existence of a bipartite state $\ket{\Psi}$ such that that for any two channels, $\Phi^{(m)}$ and $\Phi^{(n)}$, we have~\cite{feng2004unambiguous}
	\begin{equation}
		\label{eq:dist_schur}
		[\Phi^{(m)}\otimes\I (\ketbra{\Psi}{\Psi})][\Phi^{(n)}\otimes\I (\ketbra{\Psi}{\Psi})]=0.
	\end{equation}
	Let us write a general pure bipartite state $\ket{\Psi}$ in the Schmidt basis as
	\begin{equation}
		\ket{\Psi}=\sum_{j=1}^d c_j \ket{a_j b_j},
	\end{equation}
	with
	\begin{equation}
		\ket{a_j}=\sum_{k=1}^d \alpha_{jk} \ket{k},\quad \ket{b_j}=\sum_{k=1}^d \beta_{jk} \ket{k}.
	\end{equation}
	
	Now, by straightforward calculation, one can show that Eq.~\eqref{eq:dist_schur} implies that for all $k,l,p,r$ we have
	\begin{equation}
		\sqrt{c_p c_r} \alpha_{pk} \alpha_{rl}^* \sum_j d_j X^{(m)}_{kj} Y^{(n)}_{jl}=0,
	\end{equation}
	where 
	\begin{equation}
		d_j:=\sum_k |\alpha_{kj}|^2 c_k,
	\end{equation}
	with $d_j\geq 0$ and $\sum_j d_j=1$. Thus, for every $k,l$ we have that either
	\begin{equation}
		\label{eq:dist_schur2}
		\sum_j d_j X^{(m)}_{kj} Y^{(n)}_{jl}=0
	\end{equation}
	or that $\sqrt{c_p}\alpha_{pk}=0$ for every $p$, or that $\sqrt{c_r} \alpha_{rl}^*=0$ for every $r$. The latter two conditions are equivalent to $d_k=0$ and $d_l=0$, which can be seen by squaring and summing the original conditions. We thus conclude that if for some choice of $\ket{\Psi}$ the channels $\{\Phi^{(n)}\}_{n=1}^M$ are all perfectly distinguishable, then for every $k,l$ either $d_k=0$, or $d_l=0$, or Eq.~\eqref{eq:dist_schur2} is true for all $m,n$.
	
	Consider now the action of channels $\{\Phi^{(n)}\}_{n=1}^M$ on a separable state $\ket{\psi}$ defined by
	\begin{equation}
		\ket{\psi}=\sum_j \sqrt{d_j} \ket{j}.
	\end{equation} 
	For the output states to be perfectly distinguishable we need that for every pair $m,n$ the following condition is satisfied~\cite{feng2004unambiguous}
	\begin{equation}
		\Phi^{(m)}(\ketbra{\psi}{\psi})\Phi^{(n)}(\ketbra{\psi}{\psi})=0.
	\end{equation}
	This means that for every $k,l$ we require
	\begin{equation}
		\sqrt{d_k d_l} \sum_j d_j X^{(m)}_{kj} X^{(n)}_{jl}=0.
	\end{equation}
	The above conditions are precisely the same as for distinguishability with arbitrary entangled state, i.e., if there exists an entangled state $\ket{\Psi}$ allowing for perfect distinguishability of all channels $\{\Phi^{(n)}\}_{n=1}^M$, then there also exists a separable state $\ket{\psi}$ allowing for perfect distinguishability. Therefore, as entanglement does not increase the maximal number of distinguishable channels with classical action $T=\iden$, and without entanglement this number is equal to $d$, we conclude that \mbox{$\M(\iden)=d$}.
	
	We now turn to $T=W$ case. Note that all $d^2$ unitary matrices $U^{(kl)}$ defined by
	\begin{equation}
		U^{(kl)}:=D^{(k)} F D^{(l)\dagger},
	\end{equation}
	have the same classical action given by $W$. Moreover, the action of each of these channels on one half of the maximally entangled state $\ket{\Omega}$ produces an orthonormal set of states $\{\ket{\Psi_{kl}}\}$ with
	\begin{equation}
		\ket{\Psi_{kl}} = (U^{(kl)}\otimes\iden)\ket{\Omega}.
	\end{equation}
	To see that are $\{\ket{\Psi_{kl}}\}$ are indeed orthogonal, note that
	\begin{align}
	\langle \Psi_{k'l'} \ket{\Psi_{kl}}&=\frac{1}{d} \sum_{j} \bra{j} D^{(l')} F^\dagger D^{(k')\dagger} D^{(k)} F D^{(l)\dagger}\ket{j}\nonumber\\
	&=\sum_{j,j'} \bra{j} F_{l'j} F^\dagger D^{(k')\dagger}\ketbra{j'}{j'} D^{(k)} F F_{lj}^*\ket{j}\nonumber\\
	&=\sum_{j,j'} F_{l'j} F_{k'j'}^* F_{kj'}  F_{lj}^*=[FF^\dagger]_{l'l} [FF^\dagger]_{kk'}\nonumber\\
	&=\delta_{ll'}\delta_{kk'}.
	\end{align}
	Thus, we conclude that all $d^2$ channels $U^{(kl)}$ with classical action $T=W$ are perfectly distinguishable, and so $\M(W)=d^2$.
\end{proof}

\begin{corol}
	The maximal set of perfectly distinguishable Schur-product channels is $d$.
\end{corol}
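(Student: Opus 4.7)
The plan is to recognize that this corollary is essentially a restatement of the $T=\iden$ case of Proposition~\ref{prop:schur_product}, once one observes that Schur-product channels are precisely the quantum channels whose classical action is the identity stochastic matrix $T=\iden$. Given that equivalence, the bound of $d$ will follow directly from $\M(\iden)=d$.

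The first substantive step is thus to verify the equivalence. One direction is immediate: for any correlation matrix $X$, the condition $X_{ll}=1$ gives $\Phi_X(\ketbra{l}{l})=\ketbra{l}{l}\circ X=\ketbra{l}{l}$, so the classical action of $\Phi_X$ is $\iden$. For the converse, suppose $\Phi$ has classical action $\iden$. Then $\bra{k}\Phi(\ketbra{l}{l})\ket{k}=\delta_{kl}$, and since $\Phi(\ketbra{l}{l})$ is a density matrix with all its diagonal weight concentrated at position $l$, positivity forces $\Phi(\ketbra{l}{l})=\ketbra{l}{l}$. Every Kraus operator $K_k$ of $\Phi$ must therefore map each $\ket{l}$ to a scalar multiple of itself, i.e.\ be diagonal. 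Writing $\Phi(\rho)=\sum_k D_k\rho D_k^\dagger$ with diagonal $D_k$, a direct calculation yields $\Phi(\rho)=\rho\circ X$ with $X_{ij}=\sum_k (D_k)_{ii}\overline{(D_k)_{jj}}$, which is a correlation matrix (positive as a Gram matrix, with unit diagonal by the trace-preserving condition $\sum_k D_k^\dagger D_k=\iden$).

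Once the equivalence is established, the corollary follows immediately. The upper bound is exactly $\M(\iden)\leq d$ from Proposition~\ref{prop:schur_product}, and attainability by $d$ Schur-product channels is witnessed by the diagonal unitaries $D^{(k)}$ appearing in the proof of Proposition~\ref{prop:unistochastic} specialised to $U=\iden$; these are Schur-product channels whose correlation matrices are the rank-one Gram matrices built from the columns of the Fourier matrix $F$. There is no real obstacle to the argument --- the only nontrivial content is the structural characterization of channels with classical action $\iden$ as Schur-product channels, and the remainder is bookkeeping against the already proved equality $\M(\iden)=d$.
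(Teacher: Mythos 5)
Your proposal is correct and follows essentially the same route as the paper: the corollary is drawn there as an immediate consequence of the observation, made inside the proof of Proposition~\ref{prop:schur_product}, that the channels with classical action $T=\iden$ are exactly the Schur-product channels, combined with $\M(\iden)=d$. The only cosmetic difference is that you verify this equivalence by showing the Kraus operators must be diagonal, while the paper appeals to comparing Jamio{\l}kowski states, and your explicit witness via the diagonal unitaries $D^{(k)}$ matches the construction already used for Proposition~\ref{prop:unistochastic}.
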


For a general matrix $T$ one expects that $\M(T)$ can take all values between the above extremes given by $d$ and~$d^2$. We will now show how the construction used while proving $\M(W)=d^2$ can be generalized, opening a way to construct $N> d$ perfectly distinguishable channels, and thus finding lower bounds on $\M(T)$ for general unistochastic~$T$. First, we restrict our search for the maximal set of perfectly distinguishable channels with a fixed unistochastic classical action $T$ to unitary channels. Since the moduli of every entry for all these unitaries are equal, we may further restrict our considerations to unitaries of the following form
\begin{equation}
	U^{(mn)}=L^{(m)} U R^{(n)},
\end{equation}
where $L^{(m)}$ and $R^{(n)}$ are general diagonal unitaries with 
\begin{equation}
	L^{(m)}_{kk}=\exp\left(i\phi^{(m)}_k\right),\quad R^{(n)}_{kk}=\exp\left(i\theta^{(n)}_k\right),
\end{equation}
and $U$ is any unitary matrix satisfying $U\circ U=T$. Finally, we assume that the input state used in the distinguishability protocol is the maximally entangled state $\ket{\Omega}$. The set of unitaries $\{U^{(mn)}\}$ is then perfectly distinguishable if for all pairs $m,n$ and $m',n'$ we have that the following expression vanishes,
\begin{equation}
	\label{eq:ort_unitaries}
	\!\!\langle\!\langle U^{(mn)}\ket{U^{(m'n')}}\!\rangle=\sum_{k,k'} [L^{(n')\dagger} L^{(n)} T R^{(m)}R^{(m')\dagger}]_{kk'}.\!
\end{equation}
We note the close resemblance of the above problem to pure state distinguishability. There one needed to find phases $\{\phi^{(n)}_{k}\}$, so that \mbox{$\sum_k p_k \exp(i\phi^{(n)}_k-i\phi^{(m)}_{k})$} vanishes for all $m,n$; here, one is looking for phases $\{\phi^{(m)}_k\}$ and $\{\theta^{(n)}_k\}$, so that Eq.~\eqref{eq:ort_unitaries} is satisfied. In Appendix~\ref{app:d+1_unitaries} we show how the above method can be used to find $d+1$ perfectly distinguishable channels with a particular classical action.

\subsection{General stochastic action}

Although for unistochastic action one could always find $d$ perfectly distinguishable channels, it is no longer the case when one considers general stochastic action $T$. In fact, there exist $T$ for which one cannot construct even 2 distinguishable channels. As a particular example consider the completely contractive classical action $T$ defined by $T_{1k}=1$ for all $k$, and $T_{jk}=0$ for all $k$ and $j\neq 1$. This classical action uniquely defines a quantum channel $\Phi(\cdot)=\ketbra{1}{1}$, and thus $\M(T)=1$. 

As channel distinguishability ultimately depends on state distinguishability, we can employ the results from Sec.~\ref{sec:states}.
\begin{prop}
	\label{prop:triangle}
	Consider a classical action $T$ and denote the probability distribution formed from the entries of its $l$-th column by $T_{\star l}$. If, for any $l$, we have $T_{\star l}\in\A^M_d$, then $\tilde{\M}(T)\geq M$.
\end{prop}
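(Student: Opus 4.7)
The plan is to reduce channel distinguishability without entanglement directly to state distinguishability, exploiting the hypothesis columnwise. Fix an index $l$ such that $T_{\star l}\in \A^M_d$, and let $\{\rho^{(n)}\}_{n=1}^{M}$ be the $M$ perfectly distinguishable quantum states with classical version $T_{\star l}$ granted by the definition of $\A^M_d$. The goal is to build $M$ quantum channels $\{\Phi^{(n)}\}_{n=1}^{M}$, all with classical action $T$, such that $\Phi^{(n)}(\ketbra{l}{l}) = \rho^{(n)}$; then the separable input $\ket{l}$ automatically witnesses $\tilde{\M}(T)\geq M$.

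The natural construction is a measure-and-prepare channel in the distinguished basis,
\begin{equation}
\Phi^{(n)}(\rho) := \bra{l}\rho\ket{l}\,\rho^{(n)} + \sum_{j\neq l}\bra{j}\rho\ket{j}\,\sigma^{(j)},
\end{equation}
where $\sigma^{(j)} := \sum_k T_{kj}\ketbra{k}{k}$ is the classical state whose diagonal equals the $j$-th column of $T$. Being a convex combination of a computational-basis measurement followed by state preparation, each $\Phi^{(n)}$ is manifestly CPTP.

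Next I would verify the two required properties. Using Eq.~\eqref{eq:transition}, the classical action of $\Phi^{(n)}$ is read off column by column: the $l$-th column is $\diag{\rho^{(n)}} = T_{\star l}$ by hypothesis, and the $j$-th column for $j\neq l$ is $\diag{\sigma^{(j)}} = T_{\star j}$ by construction, so the full classical action equals $T$ for every $n$. Moreover $\Phi^{(n)}(\ketbra{l}{l}) = \rho^{(n)}$, and since the states $\{\rho^{(n)}\}_{n=1}^{M}$ have pairwise orthogonal supports, a projective measurement in the eigenbasis of $\sum_{n} n\,\rho^{(n)}$ (or any refinement thereof) distinguishes them perfectly. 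Hence the unentangled input $\ketbra{l}{l}$ perfectly discriminates the $M$ channels, yielding $\tilde{\M}(T)\geq M$.

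There is no serious obstacle to overcome: the heart of the argument is the observation that probing with a sharp basis state $\ket{l}$ collapses the channel distinguishability problem to the state distinguishability problem on the column $T_{\star l}$. The only design freedom is in how to complete the remaining columns of the channel, and any measure-and-prepare completion whose output diagonals match the corresponding columns of $T$ works, since the constraint on the remaining columns is purely diagonal.
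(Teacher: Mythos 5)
Your construction is correct, and the underlying idea is the same as the paper's: probe with the sharp basis state $\ket{l}$, so that channel discrimination collapses to discriminating states whose classical version is the column $T_{\star l}$, and complete the remaining columns in any way compatible with $T$. The difference is in how the channels are completed. The paper takes Kraus operators $K^{(n)}_j=\ket{\psi^{(n)}_j}\bra{j}$ that prepare, for \emph{every} column $j$, a pure coherification $\ket{\psi^{(n)}_j}=\sum_k\sqrt{T_{kj}}e^{i\phi^{(n)}_k}\ket{k}$, and then argues that the phases in column $l$ can be chosen to make the outputs orthogonal; read literally, this uses the existence of $M$ orthogonal \emph{pure} states with classical version $T_{\star l}$, whereas the definition of $\A^M_d$ only guarantees $M$ perfectly distinguishable (possibly mixed) states. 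Your measure-and-prepare channel, which outputs the hypothesized states $\rho^{(n)}$ on outcome $l$ and the diagonal states $\sigma^{(j)}$ with $\diag{\sigma^{(j)}}=T_{\star j}$ otherwise, uses the hypothesis exactly as stated and so covers the mixed-state case without further argument; the paper's version, in exchange, stays within the family of pure-output (rank-one Kraus) coherifications that it reuses elsewhere. Both yield $\tilde{\M}(T)\geq M$ with an unentangled input, and your verification of CPTP, of the classical action, and of perfect distinguishability of the outputs (orthogonal supports) is sound.
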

\begin{proof}
	Define a set of channels $\{\Phi^{(n)}\}$ by
	\begin{equation}
		\Phi^{(n)}(\rho):=\sum_{l=1}^d K^{(n)}_{l}\rho K^{(n)\dagger}_{l}
	\end{equation}
	with
	\begin{equation}
		K^{(n)}_{l}:=\ket{\psi^{(n)}_l}\bra{l}
	\end{equation}
	and
	\begin{equation}
		\ket{\psi^{(n)}_l}=\sum_{k=1}^d \sqrt{T_{kl}}e^{i\phi^{(n)}_{k}}\ket{k}.
	\end{equation}
	The classical action of each of $\Phi^{(n)}$ is given by $T$ for every choice of phases $\{\phi^{(n)}_{k}\}$,
	\begin{equation}
		\bra{k}{\Phi^{(n)}(\ketbra{l}{l})}\ket{k}=|\langle{k}\ket{\psi^{(n)}_l}|^2=T_{kl}.
	\end{equation}
	At the same time the state $\ket{l}$ is mapped by $\Phi^{(n)}$ to $\ket{\psi^{(n)}_l}$, whose classical version is $T_{\star l}$ independently of $\{\phi^{(n)}_{k}\}$. Therefore, if there exists $M$ perfectly distinguishable states with classical version $T_{\star l}$, then it is possible to choose phases $\{\phi^{(n)}_{k}\}$ so that each $\Phi^{(n)}$ maps $\ket{l}$ to an orthogonal state $\ket{\psi_l^{(n)}}$. 	
\end{proof}
The above result, together with Proposition~\ref{prop:sufficient}, imply the following corollary.
\begin{corol}
	\label{lem:channel_triangle}
	If the entries of at least one column of the stochastic matrix $T$ satisfy the triangle inequality, i.e., the largest entry is smaller than the sum of the remaining entries, then $\tilde{\M}(T)\geq 2$. 
\end{corol}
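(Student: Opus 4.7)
The plan is to chain together Proposition~\ref{prop:triangle} (with $M=2$) and Proposition~\ref{prop:sufficient} (specifically the equality $\A^2_d = \P^2_d$), after observing that the hypothesis of the corollary is exactly the membership condition for the relevant permutohedron.

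First, I would unpack the triangle inequality hypothesis. Let $l$ index the column for which the largest entry is bounded by the sum of the others. Since $T$ is stochastic, $\sum_k T_{kl} = 1$, so the hypothesis $\max_k T_{kl} \leq \sum_k T_{kl} - \max_k T_{kl}$ is equivalent to
\begin{equation}
\max_k T_{kl} \leq \tfrac{1}{2}.
\end{equation}
This is precisely the condition $T_{\star l} \in \P^2_d$ from the permutohedron bound.

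Next, by the tightness result $\A^2_d = \P^2_d$ (the $M=2$ part of Proposition~\ref{prop:sufficient}), the column $T_{\star l}$ then lies in $\A^2_d$; equivalently, there exist two perfectly distinguishable pure states whose classical version equals $T_{\star l}$. Applying Proposition~\ref{prop:triangle} with $M=2$ to this column yields $\tilde{\M}(T) \geq 2$, which is the claim.

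There is no real obstacle here: the corollary is essentially a direct substitution of the $M=2$ sufficiency into the column-based construction of Proposition~\ref{prop:triangle}. The only care needed is to recognize the normalization that converts the triangle-inequality phrasing into the permutohedron inequality $\max_k T_{kl} \leq 1/2$, and to note that Proposition~\ref{prop:triangle} only needs one column to land in $\A^M_d$, matching the ``at least one column'' hypothesis of the corollary.
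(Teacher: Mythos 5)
Your proposal is correct and follows exactly the paper's route: the corollary is stated there as an immediate consequence of Proposition~\ref{prop:triangle} combined with the $M=2$ tightness result $\A^2_d=\P^2_d$ of Proposition~\ref{prop:sufficient}, which is precisely your chain of reasoning (including the observation that the triangle-inequality hypothesis is equivalent to $\max_k T_{kl}\leq 1/2$).
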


Even if no column of $T$ satisfies the triangle inequality, there can still exist two perfectly distinguishable channels. This time the distinguishability protocol will require the use of entanglement, but before we state the result, we first need to introduce a particular swap procedure $S_{kl}^{j\alpha}$. Given a stochastic matrix $T$ the matrix $T'=S_{kl}^{j\alpha}(T)$ is obtained by multiplying column $k$ of $T$ by a real number $\alpha$, which is then followed by a transposition of two elements in row $j$, one belonging to column $k$ and the other to column $l$.
\begin{prop}
	\label{prop:transposition}
	Assume that the classical action $T$ can be transformed by some swap procedure $S_{kl}^{j\alpha}$ into a matrix $T'$, such that both columns $k$ and $l$ of $T'$ satisfy the triangle inequality. Then, $\M(T)\geq 2$.
\end{prop}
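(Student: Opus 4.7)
The plan is to exhibit two channels $\Phi^{(1)},\Phi^{(2)}$ with common classical action $T$ that are perfectly distinguished using an entangled input supported on $\mathrm{span}(\ket{k},\ket{l})$, namely $\ket{\Psi}=(\ket{k,k}+\ket{l,l})/\sqrt{2}$. Both channels will coincide on every basis input $\ket{m}$ with $m\notin\{k,l\}$, using the coherence-killing rank-one Kraus operators $\sqrt{T_{im}}\ket{i}\bra{m}$ already employed in Proposition~\ref{prop:triangle}. Since these Kraus terms annihilate $\ket{\Psi}$, all distinguishing information will be encoded in how each channel treats the $\{\ket{k},\ket{l}\}$ block.

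First I would convert the triangle hypotheses on columns $k$ and $l$ of $T'$ into orthogonal vector pairs. The closed-polygon argument used in the $M=2$ case of Proposition~\ref{prop:sufficient} only invokes nonnegativity of the entries and the triangle inequality, not probability normalization; applying it twice, there exist phase choices yielding two mutually orthogonal vectors $\ket{\mu^{(1)}},\ket{\mu^{(2)}}$ with $|\braket{i}{\mu^{(n)}}|^2=T'_{ik}$, and likewise $\ket{\nu^{(1)}},\ket{\nu^{(2)}}$ with $|\braket{i}{\nu^{(n)}}|^2=T'_{il}$. These vectors are the carriers of the coherences that will distinguish the two channels.

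Next, for each $n\in\{1,2\}$, I would assemble two rank-two Kraus operators supported on $\mathrm{span}(\ket{k},\ket{l})$,
\begin{equation}
A^{(n)}=\ket{\mu^{(n)}}\bra{k}+\ket{\chi^{(n)}}\bra{l},\qquad B^{(n)}=\ket{\xi^{(n)}}\bra{k}+\ket{\nu^{(n)}}\bra{l},
\end{equation}
with the auxiliary vectors $\ket{\chi^{(n)}},\ket{\xi^{(n)}}$ fixing two things simultaneously: (i) the classical action at inputs $\ket{k},\ket{l}$, which pins down $|\xi^{(n)}_i|^2=T_{ik}-T'_{ik}$ and $|\chi^{(n)}_i|^2=T_{il}-T'_{il}$ (these are exactly the masses shifted by the swap, and are nonnegative precisely when the swap is consistent with the hypothesis); and (ii) trace preservation, which after moduli are fixed reduces to the single cross-term equation
\begin{equation}
\sum_i\left(\mu^{(n)*}_i\chi^{(n)}_i+\xi^{(n)*}_i\nu^{(n)}_i\right)=0.
\end{equation}
The residual phase freedom in $\ket{\chi^{(n)}},\ket{\xi^{(n)}}$, together with the unconstrained phases left in $\ket{\mu^{(n)}},\ket{\nu^{(n)}}$ after the polygon construction, is what must be used to kill this cross-term while preserving the pairwise orthogonalities of $(\ket{\mu^{(1)}},\ket{\mu^{(2)}})$, $(\ket{\nu^{(1)}},\ket{\nu^{(2)}})$, $(\ket{\chi^{(1)}},\ket{\chi^{(2)}})$ and $(\ket{\xi^{(1)}},\ket{\xi^{(2)}})$ needed in the next step. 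Showing that such phases always exist for the $\alpha$ supplied by the hypothesis is the technical heart of the proof.

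Finally, the outputs under $\ket{\Psi}$ take the form
\begin{equation}
(\Phi^{(n)}\otimes\I)(\ketbra{\Psi}{\Psi})=\tfrac12\bigl(\ketbra{\phi^{(n)}_A}{\phi^{(n)}_A}+\ketbra{\phi^{(n)}_B}{\phi^{(n)}_B}\bigr),
\end{equation}
with $\ket{\phi^{(n)}_A}=\ket{\mu^{(n)}}\otimes\ket{k}+\ket{\chi^{(n)}}\otimes\ket{l}$ and $\ket{\phi^{(n)}_B}$ defined analogously, and perfect distinguishability is equivalent to the four inner products $\braket{\phi^{(1)}_X}{\phi^{(2)}_Y}$ for $X,Y\in\{A,B\}$ all vanishing; each of these reduces, via the structure above, to a sum of two inner products among the $\mu,\nu,\chi,\xi$ vectors that are zero by the construction of the previous paragraph. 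The main obstacle is precisely the compatibility check carried out there: verifying that for every stochastic $T$ satisfying the swap hypothesis one can simultaneously enforce the classical-action moduli, the trace-preservation cross-term, and the four orthogonality relations. The swap parameter $\alpha$ and the pivot row $j$ encode exactly the arithmetic slack required to balance these constraints, and exhibiting an explicit phase choice that works is what I expect to be the hardest, most computational part of the argument.
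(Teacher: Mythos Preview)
Your plan has a genuine gap: the choice of the maximally entangled input $\ket{\Psi}=(\ket{kk}+\ket{ll})/\sqrt{2}$ forces the entire burden of the parameter $\alpha$ onto the Kraus operators, and the resulting constraints are in general unsatisfiable.

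Concretely, look at your auxiliary vectors. Since the swap only touches row $j$ in column $l$, your constraint $|\chi^{(n)}_i|^2=T_{il}-T'_{il}$ vanishes for every $i\neq j$; hence $\ket{\chi^{(1)}}$ and $\ket{\chi^{(2)}}$ are both proportional to $\ket{j}$, so $\braket{\chi^{(1)}}{\chi^{(2)}}$ has fixed nonzero modulus and can never be made to vanish. Your output overlap $\braket{\phi^{(1)}_A}{\phi^{(2)}_A}=\braket{\mu^{(1)}}{\mu^{(2)}}+\braket{\chi^{(1)}}{\chi^{(2)}}$ is then stuck at a nonzero value once you impose $\braket{\mu^{(1)}}{\mu^{(2)}}=0$. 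The same problem bites on the other side: for $i\neq j$ you get $|\xi^{(n)}_i|^2=(1-\alpha)T_{ik}$, which is negative whenever $\alpha>1$, and nothing in the hypothesis prevents $\alpha>1$ (it does occur in the bistochastic application). So the ``masses shifted by the swap'' are not nonnegative in general, and your Kraus operators need not exist.

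The paper's construction avoids all of this by putting $\alpha$ into the \emph{input state}: it uses $\ket{\psi}=\ket{kk}+\sqrt{\alpha}\ket{ll}$ rather than the maximally entangled state. The channels are then defined directly through their Jamio{\l}kowski states, built from bipartite vectors of the form $\ket{\hat{\xi}}=\sum_{i\neq j}\xi_i\ket{i,k}+\xi_j\ket{j,l}$ and $\ket{\hat{\eta}}=\sum_{i\neq j}\eta_i\ket{i,l}+\eta_j\ket{j,k}$, rescaled by $(\iden\otimes S(1/\sqrt{\alpha}))$ where $S(t)=\ketbra{k}{k}+t\ketbra{l}{l}$. The rescaling exactly cancels the $\alpha$ in $T'$ so that the diagonal of $J_\Phi$ reproduces $T$, while the disjoint second-system labels in $\ket{\hat\xi}$ and $\ket{\hat\eta}$ make all four output vectors $\ket{\hat\xi},\ket{\hat\xi'},\ket{\hat\eta},\ket{\hat\eta'}$ automatically mutually orthogonal \emph{and} make the partial-trace (TP) condition automatic. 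No phase gymnastics are required; the only place the triangle hypothesis enters is in producing the two orthogonal pairs $(\ket{\xi},\ket{\xi'})$ and $(\ket{\eta},\ket{\eta'})$ on the first system.
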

The proof of the above result can be found in Appendix~\ref{app:transposition}.

\subsection{Bistochastic action}

Finally, we proceed to the results concerning distinguishability of quantum channels with a fixed classical action $T$ that is bistochastic. Our main result states that one can always find at least two perfectly distinguishable channels with a given bistochastic classical action.

\begin{prop}
	For every bistochastic matrix $T$ we have $\M(T)\geq 2$.
\end{prop}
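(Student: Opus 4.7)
The plan is to case-split on whether some column of $T$ already admits two perfectly distinguishable states, and then invoke the results established earlier in the section. If some column $T_{\star l}$ satisfies the triangle inequality $\max_k T_{kl}\leq 1/2$, then by Proposition~\ref{prop:sufficient} we have $T_{\star l}\in\A^2_d$, so Proposition~\ref{prop:triangle} (equivalently Corollary~\ref{lem:channel_triangle}) yields $\tilde{\M}(T)\geq 2$. Since $\tilde{\M}(T)\leq\M(T)$, this already gives $\M(T)\geq 2$ in this subcase, and no entanglement is needed.

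The genuine work lies in the complementary case, when every column has a single dominant entry strictly exceeding $1/2$. Bistochasticity then imposes a rigid structure: if $T_{\pi(l),l}>1/2$ is the maximum of column $l$, distinct columns cannot share the same row for their maxima, since otherwise that row would sum to strictly more than $1$, contradicting $\sum_l T_{kl}=1$. Hence $l\mapsto\pi(l)$ is a permutation, and after relabeling rows we may assume $T_{ii}>1/2$ for every $i$. To conclude, I would apply Proposition~\ref{prop:transposition} by choosing two indices $k,l$ and constructing a swap $S^{j,\alpha}_{kl}$ whose resulting matrix has both columns $k$ and $l$ satisfying the triangle inequality. A natural candidate is to take $j=l$ and swap row $l$ between columns $k$ and $l$ while rescaling column $k$ by a free parameter $\alpha$; this replaces the offending entry $T_{ll}$ in the new column $l$ by $\alpha T_{lk}$, while in the new column $k$ the dominant diagonal entry $T_{kk}$ is scaled down to $\alpha T_{kk}$ and the entry at row $l$ becomes $T_{ll}$.

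The two triangle-inequality conditions then translate into a lower and an upper bound on $\alpha$, and the main technical obstacle is to verify that this interval is always nonempty under the standing assumption that $T$ is bistochastic with $T_{ii}>1/2$ for all $i$. The useful estimates come from both row-stochasticity, which yields $T_{ll}+T_{lk}\leq 1$ and hence $T_{lk}<1/2$, and column-stochasticity, which gives $\max_{i\neq l}T_{il}\leq 1-T_{ll}<1/2$; together these keep all sub-dominant entries well controlled, so that the only genuine constraint on $\alpha$ comes from rebalancing against $T_{ll}$ in the new column $k$. I expect the tightest regime to be when $T$ is close to a permutation matrix, where the admissible window of $\alpha$ may nearly collapse; if a single swap fails there, a fallback is to exploit a Birkhoff decomposition $T=\sum_i\lambda_i P_i$ and construct two mixed-unitary channels with identical permutations but differing diagonal phases, choosing the phases so that on a suitable (possibly entangled) input the outputs become orthogonal, thereby giving $\M(T)\geq 2$ directly without going through Proposition~\ref{prop:transposition}.
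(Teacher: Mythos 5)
Your reduction is the same as the paper's: dispose of the case where some column already satisfies the triangle inequality via Corollary~\ref{lem:channel_triangle}, then assume every column has a dominant entry exceeding $1/2$, place these on the diagonal (your permutation argument for why this is WLOG is correct), and aim to apply Proposition~\ref{prop:transposition} after a suitable swap. But the proof stops exactly where the real work begins: you never show that the window of admissible $\alpha$ is nonempty, you only ``expect'' it, and you yourself flag the near-permutation regime as the place where it may collapse. That existence statement is the entire content of the proposition in the hard case, and it is not routine. The paper's proof spends essentially all of its length on it: it fixes the swap at the \emph{largest off-diagonal element}, WLOG $T_{21}$, uses $S_{21}^{2\alpha}$, and then splits into $T_{11}<T_{22}$ (where $\alpha=1$, a pure transposition, works by row- and column-stochasticity) and $T_{11}\geq T_{22}$ (where the explicit choice $\alpha=(2T_{11}+T_{21}-1)/T_{22}$ is made and both triangle inequalities are verified, the second one via concavity of $f(T_{21})=\alpha(1-T_{22})-T_{21}(2\alpha-1)$ checked at the endpoints $T_{21}=0$ and $T_{21}=1-T_{11}$). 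Note also that your candidate swap (scale column $k$, exchange at row $l$) is not the paper's and can be cornered: if $T_{lk}=0$, or if the off-diagonal mass of column $l$ sits in a single entry, the new column $l$ may violate the triangle inequality for every $\alpha$, so the choice of which entry to swap and which column to scale is itself part of the argument, not a free parameter.

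The fallback you offer does not close the gap either. Taking a Birkhoff decomposition $T=\sum_i\lambda_iP_i$ and dressing the permutations with diagonal phases does produce channels with classical action $T$, but perfect distinguishability of two such mixed-unitary channels requires the two output states (generically of rank equal to the number of Birkhoff terms) to have orthogonal supports; in the problematic regime where $T$ is close to, but not equal to, a permutation matrix, it is not clear that any choice of phases and input achieves this, and you give no construction or criterion. As written, the argument establishes $\M(T)\geq 2$ only when some column of $T$ satisfies the triangle inequality, i.e., it reproduces Corollary~\ref{lem:channel_triangle} but not the proposition.
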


\begin{proof}
	First, if there exists a column of $T$ that satisfies the triangle inequality, i.e., the largest entry is smaller than the sum of the remaining entries, then $\M(T)\geq 2$ due to Lemma~\ref{lem:channel_triangle}. Otherwise, we deal with $T$ such that each column contains an element larger than $\frac{1}{2}$. Without loss of generality we can assume those elements are placed on the diagonal of the matrix $T$. Similarly, without loss of generality we may assume that the largest of the non-diagonal elements of $T$ is $T_{21}$. The plan now is to show that for a proper choice of $\alpha$, the swap procedure $S_{21}^{2\alpha}$ transforms $T$ into $T'$ such that the triangle inequality is satisfied by columns 1 and 2 of $T'$. This will allow us to use Proposition~\ref{prop:transposition} and conclude that $\M(T)\geq 2$. To achieve this we will separately consider two situations: $T_{11}>T_{22}$ and $T_{11} \leq T_{22}$.
	
	First assume $T_{11}<T_{22}$ and choose $S_{21}^{2\alpha}$ with $\alpha =1$, so that
	\begin{equation}
		T =
		\begin{pmatrix}
		T_{11} & T_{12} & \ldots \\
		T_{21}  & T_{22} &\ldots \\
		T_{31} &T_{32}& \ldots \\
		\vdots  & \vdots
		\end{pmatrix}
		\xrightarrow{S_{21}^{2\alpha}}
		T'=
		\begin{pmatrix}
		T_{11} & T_{12} & \ldots \\
		T_{22} & T_{21} &\ldots \\
		T_{31} &T_{32}& \ldots \\
		\vdots  & \vdots
		\end{pmatrix}\!.
	\end{equation}
	As $T$ is bistochastic, we have
	\begin{align}
			T_{21} + \sum_{k \neq 1} T_{2k} = 1 &\implies  T_{21} \leq 1 - T_{22}=\sum_{k \neq 2} T_{k2},
	\end{align}
	therefore the second column of $T'$ satisfies the triangle inequality (because the largest element in column 2 of $T'$ is $T_{21}$). Similarly, we have
	\begin{equation}
		T_{22}\leq 1-T_{21}=\sum_{k \neq 2} T_{k1},
	\end{equation}
	so the first column of $T'$ also satisfies the triangle inequality (because the largest element in column 1 of $T'$ is $T_{22}$). We conclude that, due to Proposition~\ref{prop:transposition}, there exist two perfectly distinguishable channels with classical action $T$.
	
	Let us now turn to the second case, $T_{11}\geq T_{22}$. Again, we obtain $T'$ by a swap procedure $S_{21}^{2\alpha}$,
	\begin{equation}
		T=
		\begin{pmatrix}
			T_{11} & T_{12}& \ldots \\
			T_{21}  & T_{22} &\ldots \\
			T_{31} &T_{32}& \ldots \\
			\vdots  & \vdots
		\end{pmatrix}
		\rightarrow
		T'=
		\begin{pmatrix}
			T_{11} & \alpha T_{12}& \ldots \\
			\alpha T_{22}  & T_{21} &\ldots \\
			T_{31} & \alpha T_{32}& \ldots \\
			\vdots  & \vdots
		\end{pmatrix}
	\end{equation}
	with
	\begin{equation}
		\alpha = \frac{2 T_{11} + T_{21} -1}{T_{22}}.
	\end{equation}
	Due to bistochasticity of $T$ we have $T_{21}\leq 1-T_{11}$, so
	\begin{align}
		\alpha T_{22}  = 2T_{11} + T_{21}-1 \leq T_{11},
	\end{align}
	meaning that $T_{11}$ is the largest element in the first column of $T'$. This column satisfies the triangle inequality, because
	\begin{align}
		\sum_{k \neq 1} T'_{k1} &= \alpha T_{22} + (1 - T_{11} -T_{21})=T_{11}.
	\end{align}
	
	It remains to show that the second column of $T'$ satisfies the triangle inequality. We denote the second largest element in the second column of $T$ by $x$. If $T_{21} \geq \alpha x$ then $T_{21}$ is the greatest element in the second column of~$T'$. Then, the triangle inequality has the following form
	\begin{align}
		T_{21} \leq \sum_{k \neq 2} \alpha T_{k2}= \alpha (1- T_{22}),
	\end{align}
	which is equivalent to
	\begin{equation}
		(2T_{11} - 1)(1-T_{22}) - (2T_{22} - 1)T_{21}\geq 0.
	\end{equation}
	As $2T_{11} -1 \geq 2T_{22} -1$ and $1-T_{22} \geq T_{21}$, the above inequality holds.
	
	If $T_{21} < \alpha x$ then $\alpha x$ is the greatest element in the second column of~$T'$. Then, the triangle inequality has the following form
	\begin{equation}
		\alpha x \leq \sum_{k \neq 2} \alpha T_{k2} - \alpha x + T_{21} =\alpha(1-T_{22}-x)+T_{21},\!
	\end{equation}
	which is equivalent to
	\begin{equation}
		\alpha(1-T_{22})-2\alpha x + T_{21}\geq 0.
	\end{equation}
	Since $T_{21} \geq x$ it is sufficient to check that the function $f$ defined by	
	\begin{equation}
		f(T_{21}) := \alpha(1-T_{22}) - T_{21}(2\alpha -1)
	\end{equation}
	is greater or equal $0$ for $T_{21} \in  [0,1-T_{11}]$. The function $f$ is concave, i.e., 
	\begin{equation}
		f(ps + (1-p)t) \geq p f(s) + (1-p)f(t),
	\end{equation}
	with $p \in [0,1]$. It is thus sufficient to check that $f(0) \geq 0$ and $f(1-T_{11}) \geq 0$. By straightforward calculation, one obtains
	\begin{subequations}
	\begin{align}
		f(0) &= \frac{1}{T_{22}}(2T_{11}-1)(1-T_{22})\geq 0, \\
		f(1-T_{11}) &= \frac{1}{T_{22}}(2T_{11}-1)(T_{11}-T_{22})\geq 0.
	\end{align}
	\end{subequations}
	Therefore $f(T_{21})\geq 0$ for all $T_{21} \in [0,1-T_{11}]$. This means that the triangle inequality is satisfied for the second column of $T'$ and, due to Proposition~\ref{prop:transposition}, ends the proof.
\end{proof}

The above result can be further refined for a particular subset of bistochastic matrices defined in the following way.
\begin{defn}[Circulant matrix]
	A stochastic matrix $T$ is called \textit{circulant} if it is of the form:
	\begin{equation}
		\label{eq:circulant}
		T = \sum_{k=1}^{d} \lambda_k X^k,
	\end{equation}
	with $\v{\lambda}\in\Delta_d$ and 
	\begin{equation}
		X=\sum_{k=1}^{d} \ketbra{k\oplus 1}{k},
	\end{equation}
	where $\oplus$ denotes addition modulo $d$.
\end{defn}

For this particular family of bistochastic matrices we can prove a result analogous to Proposition~\ref{prop:unistochastic} that concerns unistochastic matrices.
\begin{prop}
	\label{prop:circulant}
	For every circulant $T$ the restricted distinguishability number $\tilde{\M}(T)=d$.
\end{prop}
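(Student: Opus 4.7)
The plan is to mirror the construction used in the proof of Proposition~\ref{prop:unistochastic} for unistochastic $T$, but with the single unitary $U$ replaced by the mixed-unitary channel generated by the cyclic shifts $\{X^k\}$. Since every circulant $T$ is by definition of the form $\sum_k \lambda_k X^k$, the natural candidate for a base channel is
\begin{equation}
	\Phi_{0}(\rho) := \sum_{k=1}^{d} \lambda_k X^k \rho X^{k\dagger},
\end{equation}
whose classical action is precisely $T$ because $X^k \ketbra{l}{l} X^{k\dagger} = \ketbra{l\oplus k}{l\oplus k}$. I would then define the family of $d$ candidate channels by pre- or post-composing with the diagonal unitaries $D^{(m)}$ from Eq.~\eqref{eq:diagonal},
\begin{equation}
	\Phi^{(m)}(\rho) := D^{(m)} \Phi_{0}(\rho) D^{(m)\dagger},\qquad m=1,\dots,d,
\end{equation}
so that the Kraus operators are $K_k^{(m)} = \sqrt{\lambda_k}\, D^{(m)} X^k$.

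The first step is to check that all the $\Phi^{(m)}$ share the classical action $T$. This reduces to $|D^{(m)}_{jj}|^2 = 1$ for every $m,j$ (which follows directly from $D^{(m)}_{ll} = \sqrt{d} F_{ml}$), so that the diagonal conjugation by $D^{(m)}$ is invisible on populations, and to the direct computation $\sum_k \lambda_k |\bra{j} X^k \ket{l}|^2 = \lambda_{j\ominus l} = T_{jl}$.

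The second, central step is to exhibit an unentangled input that distinguishes the $\Phi^{(m)}$. The correct choice is the uniform superposition
\begin{equation}
	\ket{\psi_+} = \frac{1}{\sqrt{d}}\sum_{l=1}^d \ket{l},
\end{equation}
which is a fixed point of every cyclic shift, $X^k \ket{\psi_+} = \ket{\psi_+}$. This immediately collapses $\Phi_{0}(\ketbra{\psi_+}{\psi_+})$ to the pure state $\ketbra{\psi_+}{\psi_+}$, and hence
\begin{equation}
	\Phi^{(m)}(\ketbra{\psi_+}{\psi_+}) = D^{(m)}\ketbra{\psi_+}{\psi_+}D^{(m)\dagger}.
\end{equation}
A short computation using $D^{(m)}_{ll} = \sqrt{d}F_{ml}$ and the unitarity of $F$ yields $\bra{\psi_+} D^{(m)\dagger} D^{(m')} \ket{\psi_+} = \sum_l F_{ml}^* F_{m'l} = \delta_{mm'}$, so the $d$ output states $D^{(m)}\ket{\psi_+}$ are mutually orthonormal. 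This gives $\tilde{\M}(T) \geq d$, and combined with the universal bound $\tilde{\M}(T) \leq d$ noted at the beginning of Section~\ref{sec:channels}, the equality $\tilde{\M}(T) = d$ follows.

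I do not expect a real obstacle: the entire argument hinges on the observation that $\ket{\psi_+}$ is simultaneously an eigenvector of every $X^k$ appearing in the decomposition of a circulant matrix, which lets the mixed-unitary channel $\Phi_0$ act as the identity on this particular input, reducing the problem to the unitary case already handled in Proposition~\ref{prop:unistochastic}. The only place one needs to be a little careful is to ensure that the channels $\Phi^{(m)}$ are genuinely distinct (not just equal as maps acting on $\ket{\psi_+}$), which is automatic since they are distinguishable and distinguishable channels are distinct.
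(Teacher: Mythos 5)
Your proof is correct, and in fact your channel family coincides (up to the direction of the cyclic shift and normalization) with the paper's: unpacking the Jamio{\l}kowski states $J_{\Phi^{(n)}}=\sum_\alpha\lambda_\alpha\ketbra{\psi^{(n)}_\alpha}{\psi^{(n)}_\alpha}$ with $\ket{\psi^{(n)}_\alpha}\propto\sum_k F_{nk}\ket{k,k\oplus\alpha}$ shows that the paper's channels are precisely mixtures of cyclic shifts composed with the Fourier-diagonal phases $D^{(n)}$, i.e., essentially your $\Phi^{(m)}$. Where you genuinely differ is in how distinguishability is certified. The paper verifies that the vectors $\ket{\psi^{(n)}_\alpha}$ are mutually orthogonal, so the Jamio{\l}kowski states have orthogonal supports --- an argument that, taken literally, certifies discrimination with the maximally entangled input and hence $\M(T)\geq d$. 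You instead exhibit the separable input $\ket{\psi_+}$, a common fixed point of all the shifts $X^k$, so that each $\Phi^{(m)}$ outputs the pure state $D^{(m)}\ket{\psi_+}$ (a row of $F$), and $\bra{\psi_+}D^{(m)\dagger}D^{(m')}\ket{\psi_+}=\delta_{mm'}$; combined with the general bound $\tilde{\M}(T)\leq d$ this yields the claim. Since the proposition concerns the restricted number $\tilde{\M}(T)$, your route is actually better matched to the statement: orthogonality of Jamio{\l}kowski states by itself only guarantees entanglement-assisted discrimination, so the conclusion $\tilde{\M}(T)=d$ requires exactly the kind of separable-input witness you make explicit, in direct analogy with the proof of Proposition~\ref{prop:unistochastic}. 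Your verifications of the classical action (using $|D^{(m)}_{jj}|=1$ and $\sum_k\lambda_k|\bra{j}X^k\ket{l}|^2=T_{jl}$) are also correct; the only cosmetic point is to state that $\ominus$ denotes subtraction modulo $d$.
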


\begin{proof}
	For a given circulant matrix $T$, define a set of $d$ quantum channels \mbox{$\{\Phi^{(n)}\}_{n=1}^d$} through their Jamio{\l}kowski states,
	\begin{equation}
		J_{\Phi^{(n)}} = \sum_{\alpha=1}^{d} \lambda_\alpha \ket{\psi_\alpha^{(n)}}\bra{\psi_\alpha^{(n)}},
	\end{equation}
	with $\v{\lambda}$ defining $T$ through Eq.~\eqref{eq:circulant} and $\ket{\psi_\alpha^{(n)}}$ given by	
	\begin{equation}
		\ket{\psi_\alpha^{(n)}}=\frac{1}{\sqrt{d}} \sum_{k=1}^{d} F_{nk}\ket{k,k \oplus \alpha}.
	\end{equation}
	By direct inspection one can check that 
	\begin{equation}
		\bra{kl}J_{\Phi^{(n)}}\ket{kl}=\frac{1}{d}T_{kl},
	\end{equation}
	so that for all $n$ the classical action of $\Phi^{(n)}$ is given by $T$. Moreover,	\begin{equation}
		\bra{\psi_\beta^{(m)}}{\psi_{\alpha}^{(n)} }\rangle =\delta_{\alpha\beta} \delta_{mn} ,
	\end{equation}
	with the first Kronecker delta coming from orthogonality of supports and the second one from orthogonality of columns of the Fourier matrix $F$. This implies orthogonality of the Jamio{\l}kowski states $J_{\Phi^{(n)}}$, and thus quantum channels $\Phi^{(n)}$ sharing the same classical action $T$ are perfectly distinguishable.
\end{proof}

\subsection{Qubit channels}

In this final section we provide a solution for the problem of distinguishing classically indistinguishable qubit channels. We start by noting that a classical action of a general qubit channel is given by
\begin{equation}
	\label{eq:qubit_classical}
	T=\begin{pmatrix}
		a & 1-b\\
		1-a & b
	\end{pmatrix},
\end{equation}
with $0\leq a,b\leq 1$. For $a=b$ we deal with bistochastic matrices that for two-dimensional systems coincide with unistochastic matrices. Without loss of generality, we may assume that $a \geq b $ and introduce \mbox{$\Delta:=a - b\geq 0$}.	

The full characterization of restricted distinguishability numbers for qubit channels is given by the following Proposition and is illustrated in Fig.~\hyperref[fig:qubit]{3a}.

\begin{prop}
	\label{prop:qubit_separable}
 	Restricted distinguishability number $\tilde{\M}(T)$  for a qubit classical action $T$ parametrized as in Eq.~\eqref{eq:qubit_classical} is given by
	\begin{equation}
		\tilde{\M}(T)=\begin{dcases}
		 	1&:\mathrm{~~for~~}\frac{1}{2}<|a-b|\leq 1,\\
		 	2&:\mathrm{~~for~~}0\leq |a-b|\leq \frac{1}{2}.\\
		\end{dcases}
	\end{equation}
\end{prop}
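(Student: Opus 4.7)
Since $\tilde{\M}(T)\le d=2$ is automatic for qubits, the proposition reduces to a dichotomy between the regimes $\Delta\le 1/2$ and $\Delta>1/2$; by the symmetry of the problem we may assume $a\ge b$, so $\Delta=a-b$.

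\emph{Necessity} ($\Delta>1/2\Rightarrow\tilde{\M}(T)=1$): I would argue by contradiction. Suppose two channels $\Phi^{(1)},\Phi^{(2)}$ with classical action $T$ are perfectly distinguished without entanglement by a pure input $\ket{\psi}\bra{\psi}$ (convexity allows this restriction). Since two orthogonal qubit density matrices are necessarily pure and antipodal on the Bloch sphere, the midpoint channel $\bar{\Phi}:=(\Phi^{(1)}+\Phi^{(2)})/2$, which also has classical action $T$, must map $\ket{\psi}\bra{\psi}$ to the maximally mixed state. Writing $\bar{\Phi}$ in the affine Bloch form $\v{r}\mapsto \bar{M}\v{r}+\bar{\v{t}}$, the classical action pins $\bar{M}_{33}=a+b-1$ and $\bar{t}_3=\Delta$; vanishing of the $z$-component of $\bar{\Phi}(\v{r})$ at unit $\v{r}$ together with Cauchy--Schwarz yields $\sqrt{\bar{M}_{31}^2+\bar{M}_{32}^2+\bar{M}_{33}^2}\ge\Delta$. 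On the other hand, positivity of the $\{10,11\}$ principal $2\times 2$ sub-block of $J_{\bar{\Phi}}$ (the binding block when $a\ge b$) forces $\bar{M}_{31}^2+\bar{M}_{32}^2\le 4(1-a)b$, and the algebraic identity $4(1-a)b+(a+b-1)^2=(1-\Delta)^2$ then gives the matching upper bound $\sqrt{\bar{M}_{31}^2+\bar{M}_{32}^2+\bar{M}_{33}^2}\le 1-\Delta$. Chaining the two inequalities yields $\Delta\le 1-\Delta$, contradicting the hypothesis.

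\emph{Sufficiency} ($\Delta\le 1/2\Rightarrow\tilde{\M}(T)\ge 2$): Here the plan is to exhibit a distinguishing pair explicitly. Observe that conjugation by the Pauli $Z$ preserves computational-basis populations, so the supermap $\Phi\mapsto Z\Phi Z$ preserves classical actions; in Bloch coordinates it sends output $(r_1,r_2,r_3)$ to $(-r_1,-r_2,r_3)$, so the pair $(\Phi,Z\Phi Z)$ is perfectly distinguishable on input $\ket{\psi}$ iff $\Phi(\ket{\psi}\bra{\psi})$ is a pure state lying on the Bloch equator. The task therefore reduces to producing, for every admissible $(a,b)$ with $\Delta\le 1/2$, a single channel $\Phi$ with classical action $T$ that maps some pure $\ket{\psi}$ to a pure equatorial state. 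The condition that the $z$-component of $\Phi(\v{r})$ vanish at a unit $\v{r}$ is solvable precisely when $\sqrt{M_{31}^2+M_{32}^2+M_{33}^2}\ge\Delta$, while the CP upper bound extracted in the necessity step keeps the admissible interval $[\Delta,1-\Delta]$ non-empty exactly in this regime. Saturating the CP bound via a rank-two Jamio{\l}kowski-state Ansatz (two Kraus operators, both collapsing $\ket{\psi}$ to a common ray) and using the remaining freedom in the first two rows of $M$ and in $t_1,t_2$ to place the output on the unit sphere produces the required channel.

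The main obstacle lies in the sufficiency direction: the CP bound and the geometric constraint coincide exactly at the critical value $\Delta=1/2$, so one has to verify in closed form that the rank-two Kraus Ansatz genuinely covers the full parameter region $\{(a,b):a\ge b,\ \Delta\le 1/2\}$ while simultaneously preserving complete positivity of both channels and ensuring that the output is not merely equatorial but also pure.
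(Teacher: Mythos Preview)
Your necessity argument is correct and complete, but takes a different route from the paper. The paper argues purely geometrically: the image of any qubit channel is an ellipsoid whose centre (the image of the maximally mixed state) has $z$-coordinate $z_c=\Delta$; containment in the Bloch ball forces the $z$-extent to be at most $1-\Delta$, so for $\Delta>1/2$ the whole image lies in one hemisphere and cannot contain two orthogonal states. Your midpoint-channel argument reaches the same inequality $\Delta\le 1-\Delta$ via the $\{10,11\}$ principal minor of $J_{\bar\Phi}$ and the identity $4(1-a)b+(a+b-1)^2=(1-\Delta)^2$; this is more algebraic but equally valid, and it has the minor advantage of isolating exactly which CP constraint is binding.

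For sufficiency, your reduction is sound and in fact coincides structurally with what the paper does: one checks directly from the Jamio{\l}kowski states given there that $\Phi^{(-)}(\cdot)=Z\,\Phi^{(+)}(\cdot)\,Z$, so the paper's pair is precisely an instance of your $Z$-conjugation scheme, and $\Psi^{(+)}$ is rank two as your Ansatz predicts. The gap you yourself flag is real, however: the sentence ``using the remaining freedom in the first two rows of $M$ and in $t_1,t_2$ to place the output on the unit sphere'' is not a proof, because CP couples all entries of $M$ and $\v{t}$ nonlinearly and it is not obvious that the required freedom survives. The paper closes this gap by writing down the rank-two Jamio{\l}kowski state $J_{\Psi^{(+)}}$ explicitly (after a unitary pre-rotation $U$ that makes the formulas tractable) and exhibiting the input state mapped to $\ket{+}$. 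To turn your proposal into a proof you would need to supply a comparable explicit witness; the paper's $U$ and $J_{\Psi^{(\pm)}}$ do exactly this job.
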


\begin{proof}
	We first consider the case \mbox{$\Delta \in \left[ 0 , \frac{1}{2} \right] $}. We define two quantum channels, $\Phi^{(+)}$ and $\Phi^{(-)}$, in the following way,
	\begin{equation}
		\Phi^{(\pm)}(\cdot)= \Psi^{(\pm)}(U(\cdot)U^\dagger),
	\end{equation}
	with a unitary $U$ given by
	\begin{equation}
		U = \frac{1}{\sqrt{1-\Delta}}
		\begin{pmatrix}
		\sqrt{1-a} & -\sqrt{b}\\
		\sqrt{b} & \sqrt{1-a}
		\end{pmatrix}
	\end{equation}
	and quantum channels $\Psi^{(\pm)}$ defined by their Jamio{\l}kowski states,
	\begin{equation}
		J_{\Psi^{(\pm)}} = \frac{1}{2}
		\begin{pmatrix}
		\Delta & 0 & \pm \Delta & 0\\
		0 & 1 &\pm\sqrt{1 - 2\Delta} &0 \\
		\pm \Delta & \pm\sqrt{1 - 2\Delta} & 1 - \Delta &0\\
		0&0&0&0
		\end{pmatrix}.
	\end{equation}
	Using the fact that the Jamio{\l}kowski states of $\Phi^{(\pm)}$ are related to those of $\Psi^{(\pm)}$ by
	\begin{equation}
		J_{\Phi^{(\pm)}}=\left(\1 \otimes U \right) J_{\Psi^{(\pm)}} \left(\1 \otimes U^{\dagger} \right),
	\end{equation}
	it is straightforward to verify that the classical action of $\Phi^{(\pm)}$ (encoded on the diagonal of $J_{\Phi^{(\pm)}}$) is given by $T$ parametrized as in Eq.~\eqref{eq:qubit_classical}. Moreover, a state \mbox{$\rho= U^\dagger \ketbra{\psi}{\psi} U$} with
	\begin{equation}
		\ket{\psi} = \frac{\ket{0}+\sqrt{1 - 2 \Delta}\ket{1}}{\sqrt{2(1-\Delta)}}
	\end{equation}
	is mapped by $\Phi^{(\pm)}$ to orthogonal states,
	\begin{equation}
		\Phi^{(\pm)}(\rho)=\Psi^{(\pm)}(\ketbra{\psi}{\psi})=\ketbra{\pm}{\pm},
	\end{equation}
	so that $\Phi^{(+)}$ and $\Phi^{(-)}$ are perfectly distinguishable.

	\begin{figure}[t]
		\includegraphics[width=\columnwidth]{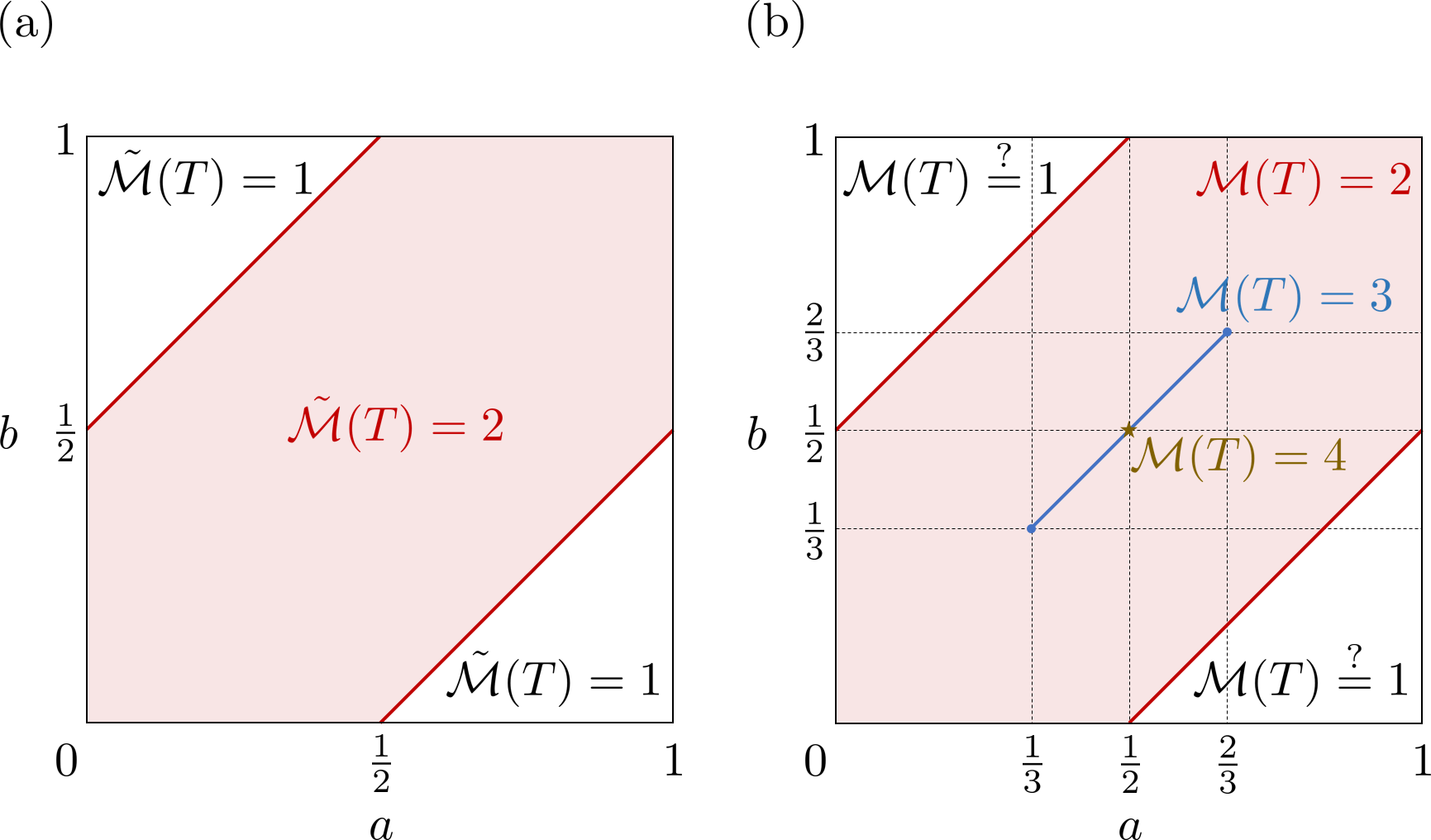}
		\caption{\label{fig:qubit} \emph{Distinguishing qubit channels.} Regions of the parameter space, describing the set of qubit classical actions via Eq.~\eqref{eq:qubit_classical}, corresponding to different distinguishability numbers. (a) Regions with different restricted distinguishability numbers $\tilde{\M}(T)$. (b) Regions with different distinguishability numbers $\M(T)$. The star in the middle corresponds to van der Waerden matrix $W$ that can be coherified to four perfectly distinguishable unitary matrices given by Eqs.~\eqref{eq:van_der_coher1}-\eqref{eq:van_der_coher2}; while the endpoints of the blue segment can be coherified to three perfectly distinguishable unitary matrices given by Eqs.~\eqref{eq:unitary_endpoint1}-\eqref{eq:unitary_endpoint3} with $\phi=\theta=2\pi/3$ or $\phi=4\pi/3$ and $\theta=2\pi/3$.}
	\end{figure}
	
	We will now show that for \mbox{$\Delta >\frac{1}{2}$} we have \mbox{$\M(T)=1$}. The image of a qubit channel $\Phi$ is an ellipsoid inside a Bloch ball. Antipodal points $\ketbra{0}{0}$
	and $\ketbra{1}{1}$ are mapped onto antipodal points,
	\begin{subequations}
	\begin{align}
		\Phi\left( \ketbra{0}{0} \right) & = \frac{1}{2}
		\begin{pmatrix}
			1 + z_{0} & x_{0} - i y_{0} \\
			x_{0} + i y_{0} & 1 - z_{0}
		\end{pmatrix}, \\
		\Phi\left( \ketbra{1}{1} \right) & = \frac{1}{2}
		\begin{pmatrix}
			1 + z_{1} & x_{1} - i y_{1} \\
			x_{1} + i y_{1} &1 - z_{1}
		\end{pmatrix}.
	\end{align}
	\end{subequations}	
	Fixing $T$ corresponds to fixing $z_0$ and $z_1$,
	\begin{subequations}
	\begin{align}
		\bra{0} \Phi\left( \ketbra{0}{0} \right) \ket{0} &= \frac{1}{2} \left( 1 + z_{0} \right)=a, \\
		\bra{1} \Phi\left( \ketbra{1}{1} \right) \ket{1} &= \frac{1}{2} \left( 1 - z_{1} \right)=b,
	\end{align}
	\end{subequations}
	so that $z_{0} = 2a- 1$ and	$z_{1} = 1 - 2b$. Now, the centre of the ellipsoid lies in the middle between the antipodal points and
	its $z$ coordinate is equal to \mbox{$z_{c} = \Delta$}. The $z$ coordinate of any point belonging to ellipsoid must lie between $z_{c}  + \zeta$ and $z_{c} - \zeta$, for some \mbox{$\zeta\geq 0$}. If \mbox{$z_c=\Delta > \frac{1}{2}$} then $\zeta < \frac{1}{2}$,
	because the ellipsoid has to lie inside the Bloch ball. Thus, \mbox{$z_{c} - \zeta > 0$} and
	the entire ellipsoid lies inside the northern hemisphere of the Bloch ball. Analogously, if \mbox{$z_c=\Delta < - \frac{1}{2}$} then  $\zeta  < \frac{1}{2}$ and the entire ellipsoid lies inside the southern hemisphere. In either case, regardless of the choice of $\Phi$, the image of $\Phi$	lies entirely inside one of the hemispheres and does not contain two orthogonal states. This implies that one cannot construct two channels with the same classical action $T$ that will be perfectly distinguishable without using entangled states.
\end{proof}

We now proceed to entangled-assisted distinguishability protocols. Our results on distinguishability numbers for qubit channels are captured by the following Proposition and are illustrated in Fig.~\hyperref[fig:qubit]{3b}

\begin{prop}
	Distinguishability number $\M(T)$ for a qubit classical action $T$ parametrized as in Eq.~\eqref{eq:qubit_classical} satisfies
	\begin{equation}
		\!\!\!\M(T)=\begin{dcases}
		2&:\mathrm{~for~}0<|a-b|\leq \frac{1}{2},\\
		3&:\mathrm{~for~}a=b\mathrm{~and~} a\in\left[\frac{1}{3},\frac{2}{3}\right]\setminus\left\{\frac{1}{2}\right\},\\
		4&:\mathrm{~for~} a=b=\frac{1}{2}.	
		\end{dcases}
	\end{equation}
\end{prop}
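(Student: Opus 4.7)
The plan is to handle each of the three cases of the proposition separately, in each matching a constructive lower bound with an upper bound proved by contradiction. The case $a=b=1/2$ (i.e.\ $T=W$) is immediate: Proposition~\ref{prop:schur_product} specialized to $d=2$ produces the four unitaries $U^{(kl)}=D^{(k)}FD^{(l)\dagger}$ sharing classical action $W$, which are perfectly distinguished on one half of $\ket{\Omega}/\sqrt{2}$, and the universal bound $\M(T)\leq d^2=4$ closes this case.

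For the interior bistochastic range $a=b\in[1/3,2/3]\setminus\{1/2\}$ I construct three perfectly distinguishable unitary channels via the ansatz $U^{(k)}=D^{(k)}U_0 E^{(k)}$ with $U_0=\bigl(\begin{smallmatrix}\sqrt{a}&\sqrt{1-a}\\\sqrt{1-a}&-\sqrt{a}\end{smallmatrix}\bigr)$ and diagonal unitaries $D^{(k)}=\mathrm{diag}(1,e^{i\phi_k})$, $E^{(k)}=\mathrm{diag}(1,e^{i\psi_k})$. Each $U^{(k)}$ shares classical action $T$, and pairwise Hilbert--Schmidt orthogonality $\mathrm{tr}(U^{(m)\dagger}U^{(n)})=0$ reduces to
\begin{equation}
a\cos\tfrac{\alpha+\beta}{2}+(1-a)\cos\tfrac{\alpha-\beta}{2}=0
\end{equation}
in the relative phases $\alpha=\phi_n-\phi_m$, $\beta=\psi_n-\psi_m$. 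The symmetric choice $(\phi_2,\psi_2)=(\phi,\psi)$, $(\phi_3,\psi_3)=(-\phi,-\psi)$ collapses the three pairwise constraints to two equations whose elimination gives $\cos^2 v=a/[2(1-a)]$ and $\cos^2 u=(1-a)/(2a)$ with $u=(\phi+\psi)/2$, $v=(\phi-\psi)/2$; these admit real solutions precisely for $a\in[1/3,2/3]$ and recover at the endpoints the phases quoted in the caption. For the matching upper bound I argue by contradiction: if four channels with classical action $T$ were perfectly distinguishable via some input $\ket{\Psi}$, then the four orthogonal outputs $\sigma_n=(\Phi^{(n)}\otimes\I)(\ket{\Psi}\bra{\Psi})$ in $\mathbb{C}^4$ would have ranks summing to at most four and each at least one, hence every $\sigma_n$ would be pure. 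Since four orthogonal pure qubit states do not exist, $\ket{\Psi}$ must have full Schmidt rank, and purity of $\sigma_n$ then forces every $\Phi^{(n)}$ to be unitary (all its Kraus operators become proportional once $\ket{\Psi}$ is written in its Schmidt basis). The four orthogonal unitaries form an orthonormal basis of the space of $2\times 2$ matrices, so $\sum_n U^{(n)}\rho U^{(n)\dagger}=2\,\mathrm{tr}(\rho)\,\iden$; reading the $(0,0)$ entry at $\rho=\ket{0}\bra{0}$ yields $4a=2$, i.e.\ $a=1/2$, contradicting $a\neq 1/2$.

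For the non-bistochastic strip $0<|a-b|\leq 1/2$, the lower bound $\M(T)\geq 2$ is inherited from Proposition~\ref{prop:qubit_separable}. For the upper bound, the crucial structural observation is that no unitary channel can realize a non-bistochastic classical action: column orthogonality of any unitary $U$ with $|U|^{\circ 2}=T$ would demand $\sqrt{a(1-b)}=\sqrt{(1-a)b}$, forcing $a=b$. Suppose three channels with classical action $T$ (with $a\neq b$) were perfectly distinguishable via some $\ket{\Psi}$. A product $\ket{\Psi}$ is excluded because it would reduce the three orthogonal outputs to three mutually orthogonal positive operators on a single qubit, which is impossible. So $\ket{\Psi}$ has full Schmidt rank; among the three orthogonal outputs in $\mathbb{C}^4$ the only admissible rank partitions are $(1,1,1)$ and the permutations of $(2,1,1)$, so at least two $\sigma_n$ are pure, and purity with a full-Schmidt-rank input forces the corresponding $\Phi^{(n)}$ to be unitary, contradicting $a\neq b$.

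The main technical burden is this upper-bound case analysis, weaving together the rank partitions of orthogonal outputs in $\mathbb{C}^4$, the purity-of-output-implies-unitarity lemma for full-Schmidt-rank inputs, and the structural fact that unitary qubit channels have bistochastic classical action; by contrast, the bistochastic range $[1/3,2/3]$ in the middle case emerges cleanly as the feasibility region of the underlying trigonometric system.
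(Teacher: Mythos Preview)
Your overall architecture matches the paper's: rank bookkeeping on the orthogonal outputs in $\mathbb{C}^4$ to force channels to be unitary, combined with explicit constructions. The construction of three unitaries via the ansatz $D^{(k)}U_0E^{(k)}$ is exactly the paper's (your symmetric parametrization $(\phi,\psi),(-\phi,-\psi)$ differs from the paper's $(\phi,\theta),(2\phi,2\theta)$, but both solve the same trigonometric system). The non-bistochastic case is handled identically.

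There is, however, one genuine gap in your upper bound $\M(T)\le 3$ for the bistochastic case. From perfect distinguishability of the four unitary channels via some $\ket{\Psi}$ you only obtain
\[
\bra{\Psi}(U^{(m)\dagger}U^{(n)}\otimes\iden)\ket{\Psi}=\tr{U^{(m)\dagger}U^{(n)}\rho_\Psi}=0,
\]
which is \emph{not} the Hilbert--Schmidt orthogonality $\tr{U^{(m)\dagger}U^{(n)}}=0$ you need for the unitary 1-design identity $\sum_n U^{(n)}\rho U^{(n)\dagger}=2\,\tr{\rho}\,\iden$. The paper closes this gap with a qubit-specific lemma: writing $V^\dagger U=U_0\,\mathrm{diag}(e^{i\phi_1},e^{i\phi_2})\,U_0^\dagger$, the condition $\tr{V^\dagger U\rho}=0$ forces $e^{i\phi_1}=-e^{i\phi_2}$, hence $\tr{V^\dagger U}=0$. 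An even shorter fix for your setup: since the four pure outputs $(U^{(n)}\otimes\iden)\ket{\Psi}$ form an orthonormal basis of $\mathbb{C}^4$, their sum is $\iden_4$; tracing out system 1 gives $4\,\mathrm{tr}_1\ket{\Psi}\bra{\Psi}=2\iden_2$, so $\ket{\Psi}$ is maximally entangled and Hilbert--Schmidt orthogonality follows.

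Once that gap is filled, your endgame differs from the paper's: you read off $4a=2$ from the depolarizing identity, whereas the paper applies the permutohedron bound to the vectorized unitaries, whose common classical version is $\tfrac{1}{2}(a,1-a,1-a,a)$, forcing $\max_k p_k\le 1/4$ and hence $a=1/2$. Your route is slicker here; the paper's has the advantage that the same permutohedron bound also yields the necessary condition $a\in[1/3,2/3]$ for $\M(T)\ge 3$ (which your proposition statement does not require you to prove, so you are not missing anything).
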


\begin{proof}
	First, we will focus on classical action $T$ that is not bistochastic, i.e., $a\neq b$ implying $\Delta>0$. A general entangled two-qubit state is given by
	\begin{equation}
		\ket{\psi}=\sum_{j,k=0}^1 c_{jk} \ket{jk}.
	\end{equation}
	Now, the output $\rho^{(n)}$ of a channel $\Phi^{(n)}$ acting on one part of this state can be written as
	\begin{equation}
		\label{eq:rank_argument1}
		\!\!\rho^{(n)}:=\Phi^{(n)}\otimes\I (\ketbra{\psi}{\psi})=(\iden\otimes[\psi]^\top)J_{\Phi^{(n)}} (\iden\otimes[\psi]^\top)^\dagger,\!
	\end{equation}
	where $[\psi]$ is a matrix obtained from $\ket{\psi}$ via mapping $\ket{jk}\rightarrow\ketbra{j}{k}$. Since we analyse entanglement-assisted discrimination, we may assume that the Schmidt number of $\ket{\psi}$ is 2, and thus $[\psi]$ is invertible. Moreover, as $T$ is not unistochastic, $\Phi^{(n)}$ cannot be a unitary~\cite{korzekwa2018coherifying} and thus the rank of $J_{\Phi^{(n)}}$ has to be at least 2. Therefore, for any channel $\Phi^{(n)}$ whose classical action $T$ is not bistochastic, the rank of the output state $\rho^{(n)}$ must be at least $2$. However, the necessary condition for the set $\{\rho^{(n)}\}_{n=1}^M$ to be mutually orthogonal is
	\begin{equation}
		\label{eq:rank_argument2}
		\sum_{n=1}^M \mathrm{rank}(\rho^{(n)})\leq 4,
	\end{equation}
	and so $M\leq 2$. This proves that for $0<|a-b|\leq 1/2$ we have $\M(T)=2$, due to
	\begin{equation}
		2=\tilde{\M}(T)\leq \M(T) \leq 2,
	\end{equation}
	where the first equality comes from Proposition~\ref{prop:qubit_separable}.
	
	We now proceed to bistochastic classical actions, \mbox{$a=b$}. We will first show that if two qubit unitary channels, $U$ and $V$, are perfectly distinguishable with the use of some entangled state $\ket{\Psi}$, they are mutually orthogonal, $\tr{UV^\dagger}=0$, meaning also that they are perfectly distinguishable with the use of maximally entangled state $\ket{\Omega}$. As a result, looking for $M$ perfectly distinguishable unitary channels, we may only focus on a single input state $\ket{\Omega}$. To see this, note that perfect distinguishability of unitaries $U$ and $V$ with the use of state $\ket{\Psi}$ means
	\begin{align}
		0 &= \bra{\Psi} \left( V^{\dagger} \otimes \1 \right) \left( U \otimes \1 \right)  \ket{\Psi}	 = \tr { V^{\dagger} U \rho},
	\end{align}
	where $\rho = \trr{2}{\ketbra{\Psi}{\Psi}}$. Since the matrix $V^{\dagger} U$ has a spectral decomposition
	\begin{equation}
		V^{\dagger} U   =
		U_{0}
		\begin{pmatrix}
		e^{i \phi_1} & 0 \\
		0 & e^{i \phi_2}
		\end{pmatrix}
		U_{0}^{\dagger} = U \Lambda U_0^{\dagger},
	\end{equation}
	we can rewrite the distinguishability condition as
	\begin{equation}
		\tr { V^{\dagger} U \rho} =\tr{ \Lambda U_{0}^{\dagger} \rho U_0 } = 0.
	\end{equation}
	Defining $\rho^{\prime} =  U_{0}^{\dagger} \rho U_0 $, we obtain
	\begin{equation}
		e^{i \phi_1} \rho^{\prime}_{11} + e^{i \phi_2} \rho^{\prime}_{22} = 0.
	\end{equation}
	This implies $\rho^{\prime}_{11} =\rho^{\prime}_{22}$ and $e^{i \phi_1} = - e^{i \phi_2}  $. Thus,
	\begin{equation}
		\tr { V^{\dagger} U } = \tr{\Lambda} = 0.
	\end{equation}
	But, this implies perfect distinguishability between $U$ and $V$ with the use of maximally entangled state $\ket{\Omega}$, because 
	\begin{align*}
		\bra{\Omega} \left( V^{\dagger} \otimes \1 \right) \left( U \otimes \1 \right)  \ket{\Omega} = \tr{V^{\dagger}{U}}=0.
	\end{align*}
	
	We will now find necessary conditions for $\M(T)=4$. Using an analogous rank argument as before (captured by Eqs.~\eqref{eq:rank_argument1}-\eqref{eq:rank_argument2}), we see that all four channels must be rank~1, i.e., be unitary. As explained above, these unitaries $U_i$ must be orthogonal, meaning that there must exist four mutually orthogonal states $\frac{1}{\sqrt{2}}|U_i\rangle\rangle$ with classical version $\frac{1}{2}(a,1-a,1-a,a)$. From the permutohedron bound, Proposition~\ref{prop:necessary}, we know that a necessary condition for this is $a=1/2$. Moreover, this condition is sufficient, as the following four unitaries with classical action $T$ are all mutually orthogonal:
	\begin{subequations}
	\begin{align}
		\label{eq:van_der_coher1}
		U_1=\frac{1}{\sqrt{2}}\begin{pmatrix}
		-1&1\\
		1&1
		\end{pmatrix},\quad
		U_2=\frac{1}{\sqrt{2}}\begin{pmatrix}
		1&-1\\
		1&1
		\end{pmatrix},\\
		U_3=\frac{1}{\sqrt{2}}\begin{pmatrix}
		1&1\\
		-1&1
		\end{pmatrix},\quad
		U_4=\frac{1}{\sqrt{2}}\begin{pmatrix}
		1&1\\
		1&-1
		\end{pmatrix}.
		\label{eq:van_der_coher2}
	\end{align}
	\end{subequations}
	
	We proceed to finding necessary conditions for $\M(T)=3$. Again, from the rank argument, the considered three channels are either all unitary, or two of them are unitary and one has rank 2. In the first case, we can use the orthogonality condition, so that the existence of three perfectly distinguishable unitary channels is equivalent to the existence of three mutually orthogonal states $\frac{1}{\sqrt{2}}|U_i\rangle\rangle$ with classical version $\frac{1}{2}(a,1-a,1-a,a)$. From the permutohedron bound, we clearly see that it is possible only if $a\in[1/3,2/3]$. In the second case, we have two unitary channels $U$ and $V$, and the third channel is a mixed unitary channel
	\begin{equation}
		\label{eq:mixed_unitary}
		\Phi(\cdot)=\lambda S(\cdot)S^\dagger + (1-\lambda) T(\cdot)T^\dagger,
	\end{equation}
	with $S,T$ unitary and $\lambda\in(0,1)$, because all unital (bistochastic) qubit channels are mixed-unitary channels~\cite{landau1993birkhoff}. Perfect distinguishability between $U$ and $\Phi$ implies then that one can perfectly distinguish between $U$ and $S$, and between $U$ and $T$. Analogous implication holds for $V$. Therefore, perfect distinguishability between $U$, $V$ and $\Phi$ is equivalent to the existence of two sets of mutually orthogonal vectors: \mbox{$\frac{1}{\sqrt{2}}\{\vect{U},\vect{V},\vect{S}\}$}
	and  \mbox{$\frac{1}{\sqrt{2}}\{\vect{U},\vect{V},\vect{T}\}$}. Applying the permutohedron bound to these two sets yields:
	\begin{subequations}
	\begin{align}
		\!\!\! (a,1-a,1-a,a)+\frac{1}{2}(s_1,s_2,s_3,s_4)\leq (1,1,1,1)
		\label{eq:perm_add_1},\\
		\!\!\! (a,1-a,1-a,a)+\frac{1}{2}(t_1,t_2,t_3,t_4)\leq (1,1,1,1),
		\label{eq:perm_add_2}
	\end{align}
	\end{subequations}
	where the inequalities are elementwise and vectors $\v{s}$ and $\v{t}$, according to Eq.~\eqref{eq:mixed_unitary}, satisfy
	\begin{equation}
		\lambda \v{s} + (1-\lambda) \v{t} =(a,1-a,1-a,a).
	\end{equation}
	We clearly see that if $a>2/3$ then either Eq.~\eqref{eq:perm_add_1} or Eq.~\eqref{eq:perm_add_1} does not hold, because either $s_1$ or $t_1$ must be larger than $a$. Similarly, one of these equations does not hold for $a<1/3$, because either $s_2$ or $t_2$ must be larger than $1-a$.
	
	We can thus conclude that the necessary condition for $\M(T)=3$ is $a\in[1/3,2/3]$. Moreover, this condition is sufficient, since one can find three unitary channels with a fixed classical action $T$ that, when acting on one part of a maximally entangled state $\ket{\Omega}$, map it to three orthogonal states. More precisely, consider the following unitaries	
	\begin{subequations}
	\begin{align}
		\label{eq:unitary_endpoint1}
		U_{1} &=
		\begin{pmatrix}
		\sqrt{a} & \sqrt{1-a} \\
		\sqrt{1-a} & - \sqrt{a}
		\end{pmatrix}, \\
		\label{eq:unitary_endpoint2}
		U_{2} &=
		\begin{pmatrix}
		\sqrt{a} & \sqrt{1-a} e^{i \theta} \\
		\sqrt{1-a} e^{i \phi} & -\sqrt{a} e^{i (\phi+\theta)}
		\end{pmatrix}, \\
		\label{eq:unitary_endpoint3}
		U_{3} &=
		\begin{pmatrix}
		\sqrt{a} & \sqrt{1-a} e^{i 2\theta} \\
		\sqrt{1-a} e^{2 i \phi} & -\sqrt{a} e^{2 i (\phi+\theta)}
		\end{pmatrix}.
	\end{align}
	\end{subequations}
	with $\phi$ and $\theta$ specified by:
	\begin{equation}
		2a-1 =
		\left(\cot \frac{\phi}{2} \right) \sqrt{ \frac{1 - \cot^2 \frac{\phi}{2}}{1 + 3\cot^2 \frac{\phi}{2}} },
	\end{equation}
	and
	\begin{equation}
		\cot^2 \frac{\theta}{2} = \frac{1 - \cot^2 \frac{\phi}{2}}{1 + 3\cot^2 \frac{\phi}{2}}.
	\end{equation}	
	One can check by direct calculation that when \mbox{$a\in[1/3,2/3]$} the above unitaries are indeed orthogonal.
\end{proof}

As a final remark, let us comment on the most well-studied qubit channels, the phase-damping channel and the (generalised) amplitude-damping channel, from the perspective of our work. It is straightforward to notice that the classical action of a phase-damping channel, specified by Kraus operators
\begin{equation}
\label{eq:phase_damp}
	K^{PD}_1=\begin{pmatrix}
		1 & 0\\
		0 & \sqrt{1-\lambda}
	\end{pmatrix},\quad
	K^{PD}_2=\begin{pmatrix}
		0 & 0\\
		0 & \sqrt{\lambda}
	\end{pmatrix},
\end{equation}
is given by the identity matrix for any value of the damping parameter $\lambda\in[0,1]$. Thus, through Proposition~\ref{prop:schur_product}, there exist two perfectly distinguishable channels with the same classical action as the phase-damping channel, e.g., the identity and the phase flip channels. On the other hand, the classical action of the amplitude-damping channel, specified by Kraus operators
\begin{equation}
\label{eq:amplitude_damp}
K^{AD}_1=\begin{pmatrix}
	1 & 0\\
	0 & \sqrt{1-\gamma}
\end{pmatrix},\quad
K^{AD}_2=\begin{pmatrix}
	0 & \sqrt{\gamma}\\
	0 & 0
\end{pmatrix},
\end{equation}
is given by the matrix $T$ from Eq.~\eqref{eq:qubit_classical}, with $a=1$ and $b=1-\gamma$. From Figs.~\hyperref[fig:qubit]{3a}~and~\hyperref[fig:qubit]{3b}, we see that for small damping parameters $\gamma\leq 1/2$ there are two perfectly distinguishable channels with the same classical action as the amplitude damping channel, but for $\gamma>1/2$ there exists only one such channel. Finally, the classical action of the generalised amplitude damping channel, specified by Kraus operators
\begin{subequations}
\begin{align}
\label{eq:gen_amplitude_damp}
&K^{GAD}_1=\sqrt{p}K_1^{AD},\quad
K^{GAD}_2=\sqrt{p}K_2^{AD},\\
&K^{GAD}_3=\sqrt{1-p}\begin{pmatrix}
\sqrt{1-\gamma} & 0\\
0 & 1
\end{pmatrix},\\
&K^{GAD}_4=\sqrt{1-p}K_2^{AD\dagger},
\end{align}
\end{subequations}
is given by $T$ with \mbox{$a=p+(1-p)(1-\gamma)$} and \mbox{$b=1-p+p(1-\gamma)$}. This means that, depending on $p\in[0,1]$ and $\gamma\in[0,1]$, the parameters of the classical action can take the values $a\in[0,1]$ and $b\in[1-a,1]$ (the upper-right half of Figs.~\hyperref[fig:qubit]{3a}~and~\hyperref[fig:qubit]{3b}). Therefore, the number of perfectly distinguishable channels with the same classical action as the generalised amplitude-damping channel can vary between 1 (e.g., for $p=1$ and $\gamma=1$) and 4 (only for $p=1/2$ and $\gamma=1$).

\section{Outlook}
\label{sec:outlook}

In this work, motivated by the studies on loss of quantum information due to decoherence, we analyzed different ways in which one can coherify a classical probability vector to obtain distinct quantum states. More precisely, we investigated the problem of finding the maximal number of perfectly distinguishable quantum states which all decohere to the same classical state represented by a fixed probability vector. We described general properties and found bounds for the $M$-distinguishability regions ${\cal A}_d^M$ -- the subsets of the probability simplex containing classical states that can be coherified to $M$ perfectly distinguishable quantum states. 

An analogous problem was studied for classical stochastic matrices, which can be coherified into quantum channels. For a given stochastic transition matrix $T$ of order $d$ we studied the distinguishability number ${\cal M}(T)$ and the restricted distinguishability number $\tilde{\M}(T)$ -- the maximal number of perfectly distinguishable quantum channels (with and without the access to entangled states) which share the same classical action $T$. We found general bounds for distinguishability numbers, showed that $\tilde{\M}(T)=d$ for all unistochastic $T$ and that $\M(T)\geq 2$ for all bistochastic $T$. We have also solved this problem in the simplest case of $d=2$, characterizing the set of classically indistinguishable qubit channels.

Our work opens many potential avenues for future research. First, in the current work we have focused exclusively on the condition of perfect distinguishability, so a natural next question concerns the behaviour of $M$-distinguishability regions (and distinguishability numbers) under $\epsilon$-smoothing of that condition, i.e., when a distinguishability protocol is allowed to fail with some small probability $\epsilon$. This is not only important from a practical point of view (as in any realistic protocol state preparations are prone to noise), but may also bring deeper insight into the structure of the sets of classically indistinguishable states and channels. Note, for example, that for large prime dimensions $d$, while there are obviously just $d$ orthogonal states, one can construct $d^2+d$ \emph{almost orthogonal} states, with overlap $1/d\xrightarrow{d\rightarrow\infty} 0$, by choosing $d$ basis states from each of $d+1$ mutually unbiased bases~\cite{durt2010mutually}. This suggests that $\epsilon$-smoothing might have a significant effect on $M$-distinguishability regions and a particular technical question one may want to ask is: how does the error $\epsilon$ of distinguishing $M$ states scale with the distance from a given $M$-permutohedron.

One can also try to explore further the following simple observation. Similarly to the fact pointed out in Ref.~\cite{sacchi2005entanglement} that entanglement can enhance the distinguishability of entanglement-breaking channels, we see that coherence can enhance distinguishability of completely decohering channels. As a particular example consider the following dual quantum channels,
\begin{equation}
	\Phi^{(1)}(\cdot)=\D(Y(\cdot)Y^\dagger),\quad \Phi^{(2)}(\cdot)=\D(Y^\dagger(\cdot)Y),
\end{equation}
with
\begin{equation}
	Y=\frac{1}{\sqrt{2}} 
	\begin{pmatrix}
		1 & 1  \\
		-1 & 1
	\end{pmatrix}.
\end{equation}
Both these channels have the same classical action and are completely decohering, meaning that the output of both $\Phi^{(1)}$ and $\Phi^{(2)}$ is the same for every incoherent input state. At the same time, we see that a state \mbox{$\ket{+}\propto\ket{0}+\ket{1}$} allows one to perfectly distinguish between $\Phi^{(1)}$ and $\Phi^{(2)}$, as they send it to orthogonal states $\ket{0}$ and $\ket{1}$. This extends the initial idea of Ref.~\cite{sacchi2005entanglement} that distinguishability of resource destroying maps~\cite{liu2017resource} can be improved by using resource states.
		
Last, but not least, from the resource-theoretic perspective one may be interested in quantifying the amount of resources needed to distinguish between classically indistinguishable states and channels. Recall that classical constraints may arise either through a lack of phase reference in the presence of a superselection rule~\cite{bartlett2007reference}, or in the scenarios studied within the resource theory of coherence~\cite{baumgratz2014quantifying,levi2014quantitative}. One can then ask about minimal amounts of resources, e.g., a minimal size of a phase reference, allowing one to overcome those constraints and perform a perfect distinguishability protocol.

\section*{Acknowledgements}

We have a pleasure to thank Francesco Buscemi, Christopher Chubb, David Jennings and Karl Svozil for inspiring discussions. We are also grateful to Wojciech Bruzda for identifying isolated complex Hadamard matrices for dimensions $d \le 17$, to Grzegorz Rajchel for using his code to numerically verify whether a given bistochastic matrix of order four is unistochastic, and to Stanis{\l}aw Pajka for constructing a useful model to demonstrate the coherification of bistochastic maps -- see Fig.~\ref{fig:channel_coher}. We acknowledge financial support from the ARC via the Centre of Excellence in Engineered Quantum Systems, project number CE170100009 (K.K.) and Polish National Science Centre under the project numbers 2016/22/E/ST6/00062 (Z.P.) and 2015/18/A/ST2/00274 (K.{\.Z}.).

\appendix

\section{Visualizing the coherification procedure}
\label{app:coher}

\subsection {Quantum states}

Looking for a coherification of a classical state $\v{p}$ we aim at finding its preimage with respect to the completely decohering channel $\D$, i.e., a quantum state $\rho$ such that $\diag{\rho}=\v{p}$~\cite{korzekwa2018coherifying}. In other words, different coherifications of $\v{p}$ correspond to different quantum states with a fixed diagonal (representing populations in the distinguished basis $\{\ket{i}\}$) but different off-diagonal terms (representing coherences with respect to $\{\ket{i}\}$). The \emph{strength} of coherification can be measured, for example, by the purity or $l_1$-norm of coherence of the coherified state. However, positivity of $\rho$ constrains the off-diagonal terms and the extreme case, which we refer to as complete coherification, corresponds to a pure state $\ket{\psi}$. Now, a set of all complete coherifications of $\v{p}$ is a set of all pure states $\ket{\psi}$ satisfying $\v{p}=\diag{\ketbra{\psi}{\psi}}$. Therefore, the complete coherification procedure can be seen as \emph{quantization} of the simplex of classical probability vectors -- the set of all classical input states is mapped to the set of all pure quantum states while preserving the measurement statistics in the distinguished basis. Similarly, if we constrain the strength of coherification, we will map the classical simplex to a set of mixed quantum states corresponding to partially decohered pure states.

Let us visualize this concept using the simplest example of a qubit system. A distribution over a classical bit can be represented by a unit segment with extremal points corresponding to sharp distributions $(1,0)$ and $(0,1)$, see Fig.~\ref{fig:qubit_coher}. We can now embed this classical state space into a quantum one, i.e., the interval $[0,1]$ representing classical probabilistic states becomes embedded inside the $3$-dimensional Bloch ball containing the density matrices of size $d=2$. This is visualized in Fig.~\ref{fig:qubit_coher} as inserting the unit segment into a balloon. Now, the coherification procedure can be understood as inflating the balloon, effectively expanding the state space. Observe that the classical pure states, $(1,0)$ and $(0,1)$, do not change their positions and become quantum basis states, $|0\rangle$ and $|1\rangle$. From the presented picture it is clear that coherification can be seen as inverse of the decoherence process, which leads to the diminishing of the off-diagonal entries of the density matrix, see Fig.~\ref{fig:qubit_coher}.

\begin{figure}[t]
	\includegraphics[width=\columnwidth]{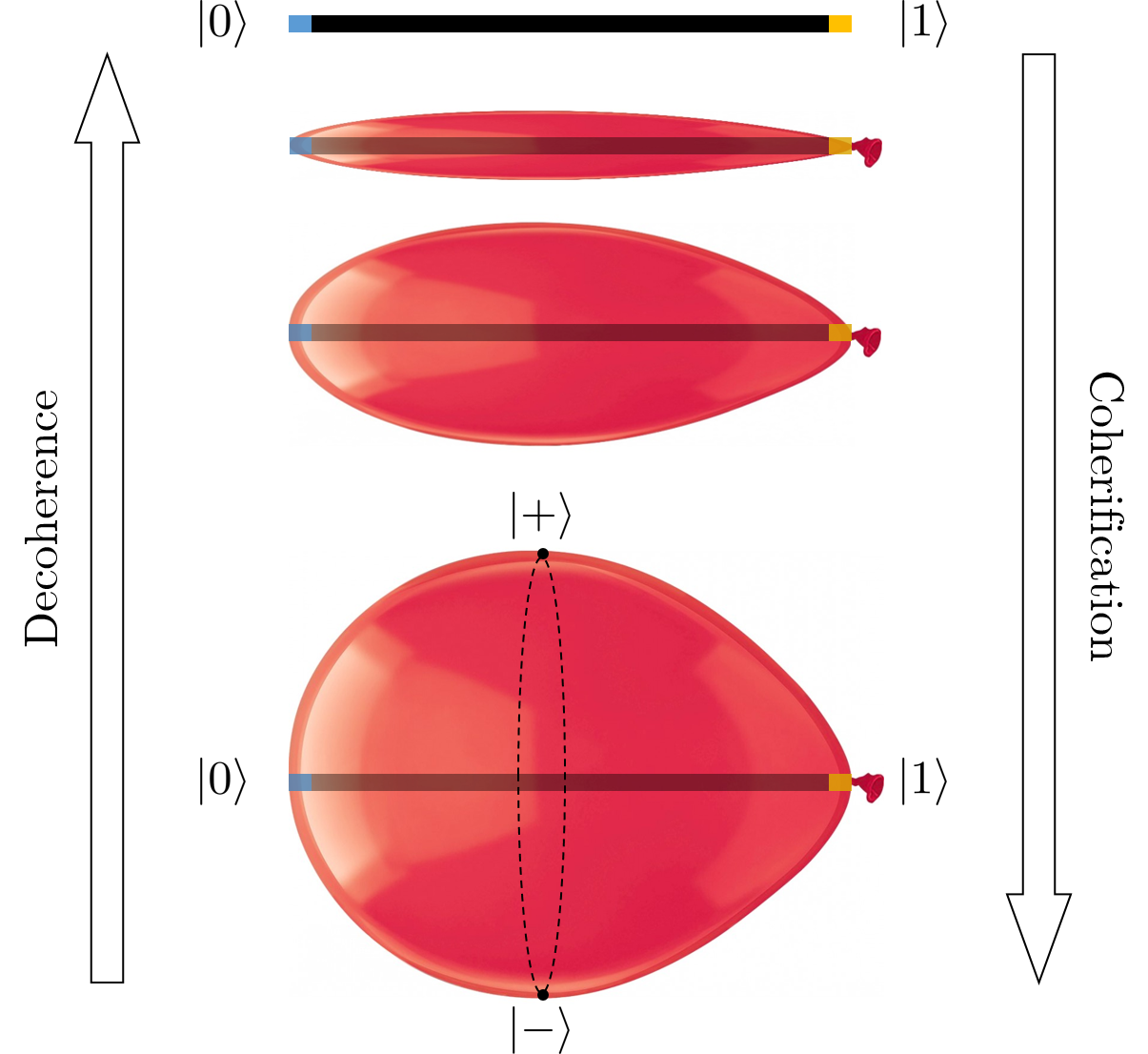}
	\caption{\label{fig:qubit_coher} \emph{Coherification of a qubit.} Probabilistic states of a classical bit (represented by a unit segment with endpoints given by sharp distributions) can be embedded in a quantum state space of density matrices of size $d=2$ (represented by a red balloon). Coherification procedure for qubit systems (visualized by inflating the balloon) continuously expands the state space from a classical simplex to the Bloch sphere of pure states. As such, it can be seen as the inverse of a decohering process which, eventually, brings any quantum state $\rho$ back to the diagonal matrix representing a classical state $\v{p}={\rm diag}(\rho)$.}
	\end{figure}

\begin{figure}[t]
	\begin{tikzpicture}			
		\node at (0\columnwidth,0\columnwidth) {\includegraphics[width=0.6\columnwidth]{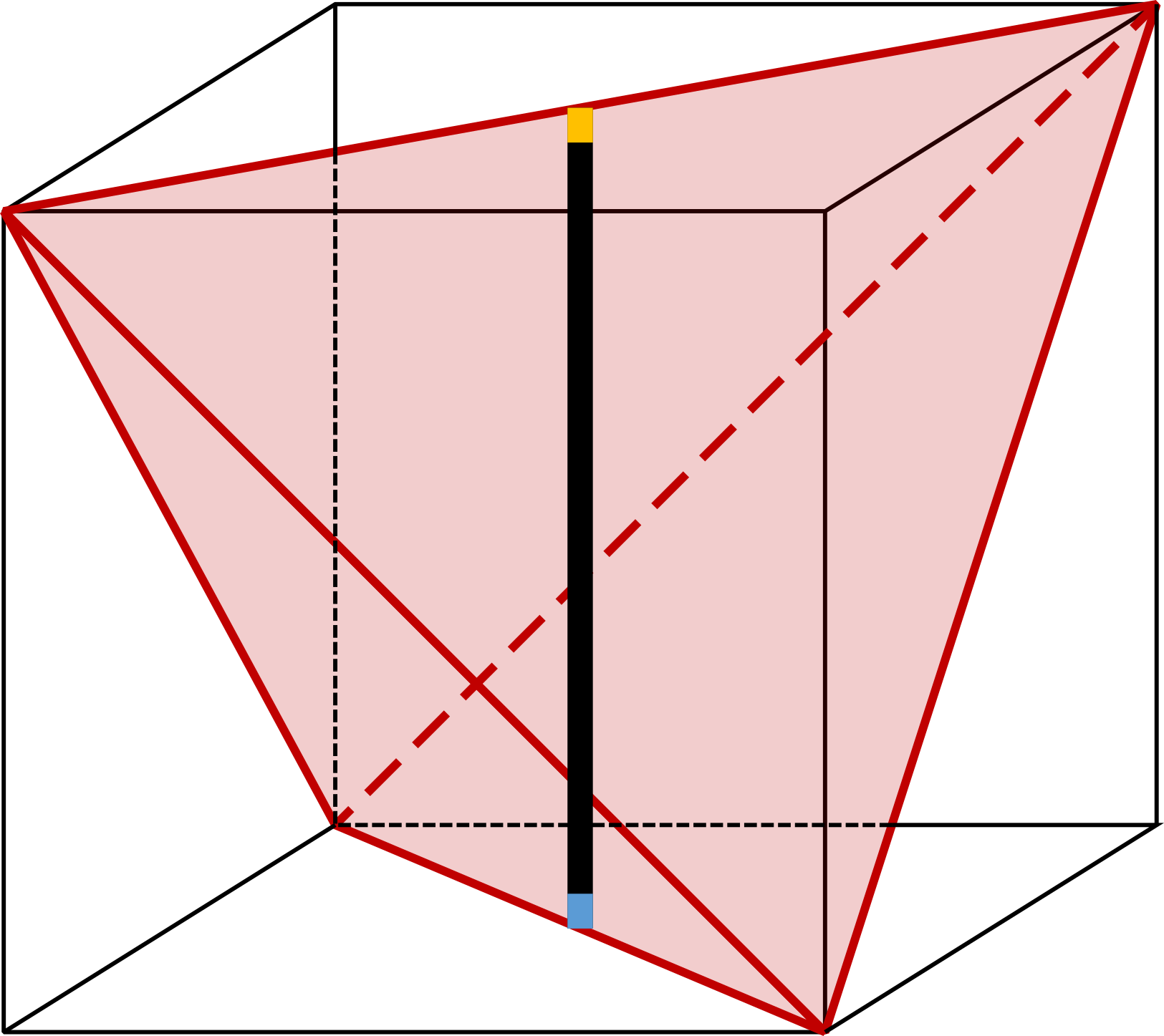}};
		\node at (0.13\columnwidth,-0.31\columnwidth) {\color{black}$\sigma_x(\cdot)\sigma_x$};
		\node at (-0.2\columnwidth,-0.15\columnwidth) {\color{black}$\sigma_y(\cdot)\sigma_y$};
		\node at (0.3\columnwidth,0.3\columnwidth) {\color{black}$\sigma_z(\cdot)\sigma_z$};
		\node at (-0.33\columnwidth,0.17\columnwidth) {\color{black}$\I$};
		\node at (-0\columnwidth,0.24\columnwidth) {\color{black}$\D$};
		\node at (-0.03\columnwidth,-0.245\columnwidth) {\color{black}$\sigma_xD(\cdot)\sigma_x$};
	\end{tikzpicture}
	\caption{\label{fig:qubit_unital} \emph{The set of unital qubit channels.} The set of unital quantum channels acting on a qubit system forms a regular tetrahedron spanned by the identity channel $\I$ and three unitary Pauli channels. It contains a one-dimensional set of classical channels (bistochastic matrices) given by the interval that joins completely decohering map $\D$ and $\D$ followed by a permutation, i.e., by a Pauli $x$ channel. These extremal classical channels correspond to equal mixtures of $\I$ and $\sigma_z(\cdot)\sigma_z$, and equal mixtures of $\sigma_x(\cdot)\sigma_x$ and $\sigma_y(\cdot)\sigma_y$, respectively.}
\end{figure}

\subsection {Quantum channels}

Looking for a coherification of a classical stochastic matrix $T$ we aim at finding the preimage of its Jamio{\l}kowski state with respect to the completely decohering channel~$\D$, i.e., we look for a quantum channel $\Phi$ such that its Jamio{\l}kowski state $J_\Phi$ satisfies $\diag{J_\Phi}=\vect{T}$~\cite{korzekwa2018coherifying}. In other words, different coherifications of $T$ correspond to different quantum channels with a fixed classical action (representing population transitions in the distinguished basis $\{\ket{i}\}$) but varying otherwise, e.g., with different action on the off-diagonal terms. Note that, although coherification of quantum channels is defined via the coherification of corresponding quantum states, due to an additional trace-preserving constraint the process is more involved, and complete coherification is generally impossible~\cite{korzekwa2018coherifying}. Nevertheless, the coherification procedure can again be seen as \emph{quantization} of the classical space of stochastic matrices -- first, one embeds this space in the space of quantum channels, and then maps every stochastic matrix $T$ into a channel with classical action $T$.

Unlike the set of one-qubit quantum states, which has only three dimensions and can thus be conveniently visualized, the set of all one-qubit quantum channels has 12 dimensions, which makes it hard to analyze. Fortunately, every unital channel acting on a qubit system is unitarily equivalent to a Pauli channel, 
\begin{equation}
	\Psi_{\v{p}}(\cdot)=\sum_{j=0}^3 p_i \sigma_j (\cdot) \sigma_j,
\end{equation}
with $\sigma_j$ denoting three Pauli matrices appended by the identity matrix, $\sigma_0=\iden$, and $\v{p}$ being a classical probability vector of length four. Thus the set all Pauli channels (and, hence, the set of all unital channels) can be represented by a regular $3$-dimensional tetrahedron, which can be easily visualized, see Fig.~\ref{fig:qubit_unital}.

Classical unital channels correspond to bistochastic matrices which, in the case of a 2-dimensional system, can be parametrized by a single number $a\in[0,1]$,
\begin{equation}
	B_a=
	\begin{pmatrix}
		a & 1-a  \\
		1-a & a
	\end{pmatrix}.
\end{equation}
These classical channels, after embedding in the space of unital quantum channels, form an interval within the tetrahedron of unital quantum channels, see Fig.~\ref{fig:qubit_unital}. The endpoints of the interval, $B_1$ and $B_0$, correspond to a completely decohering channel $\D$ and $\D$ followed by the Pauli $x$ channel. To see this, note that the Jamio{\l}kowski state of a classical channel $B_1$ is given by \mbox{$J_{B_1}\propto \ketbra{00}{00}+\ketbra{11}{11}$}, while \mbox{$J_\I=\ketbra{\Omega}{\Omega}$} and \mbox{$J_{\sigma_z}=\ketbra{\Omega'}{\Omega'}$} with \mbox{$\ket{\Omega'}\propto\ket{00}-\ket{11}$}. It is thus clear that \mbox{$J_{B_1}=\frac{1}{2}(J_\I+J_{\sigma_z})$}, so that the classical channel $B_1$ is given by the equal mixture of identity and Pauli $z$ channels, which in turn is equal to~$\D$. Similarly, the Jamio{\l}kowski state of a classical channel $B_0$ is given by \mbox{$J_{B_0}\propto \ketbra{01}{01}+\ketbra{10}{10}$}, while \mbox{$J_{\sigma_x}=\ketbra{\omega}{\omega}$} and \mbox{$J_{\sigma_y}=\ketbra{\omega'}{\omega'}$} with \mbox{$\ket{\omega}\propto\ket{01}+\ket{10}$} and \mbox{$\ket{\omega'}\propto\ket{01}-\ket{10}$}. Analogously, we have that \mbox{$J_{B_0}=\frac{1}{2}(J_{\sigma_x}+J_{\sigma_y})$}, so that the classical channel $B_0$ is given by the equal mixture of Pauli $x$ and $y$ channels, which in turn is equal to $\sigma_x\D(\cdot)\sigma_x$.

To visualize the coherification procedure of the set of classical bistochastic maps we may again imagine inserting the unit interval (representing classical channels) inside a balloon and inflating it. This time, however, the balloon is confined inside the regular tetrahedron of unital channels, see Fig.~\ref{fig:channel_coher}. In practice, it is hardly possible to inflate the balloon so that it reaches the corners of the tetrahedron, which corresponds to complete coherification of classical bistochastic channels to unitary channels. Note also that, similarly to quantum states, coherification of quantum channels can be seen as the inverse process to the decohering supermap (decohering the Jamio{\l}kowski state of a channel), which in the current case sends all elements of the tetrahedron back to the unit interval of classical bistochastic matrices, see Fig.~\ref{fig:channel_coher}.

\begin{figure}[t]
	\begin{center}
		\includegraphics[width=\columnwidth]{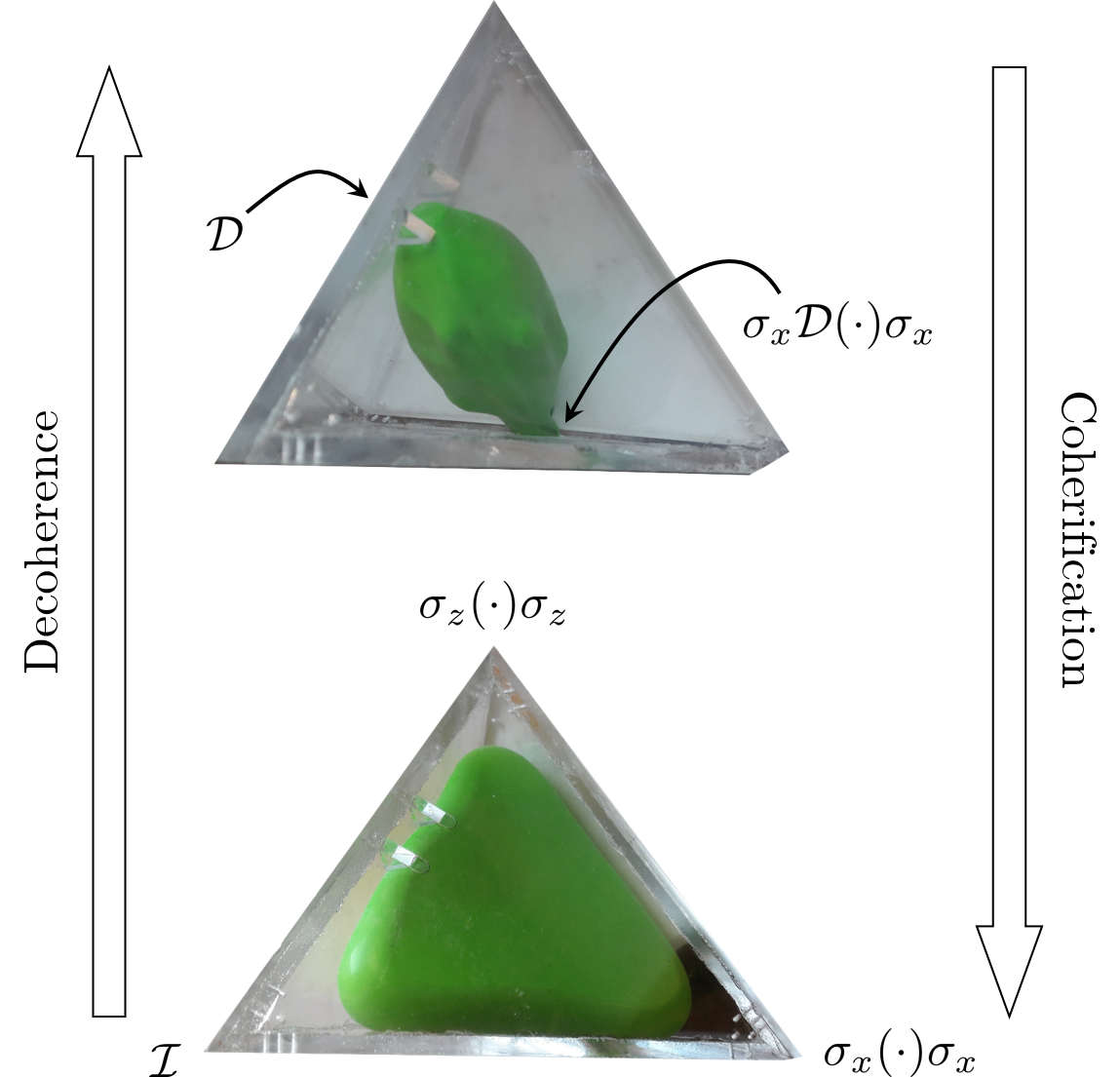}
		\caption{\emph{Coherification of unital qubit channels.} Coherification of the set of classical bistochastic matrices, which expands the unit interval representing them to the full tetrahedron of quantum unital channels, can be visualized by inflating a balloon on a stick inside a tetrahedron. Decoherence in the space of quantum channels shrinks the tetrahedron back to the unit interval.}
		\label{fig:channel_coher}		
	\end{center}
\end{figure}

\section{Proof of Proposition~\ref{prop:insufficient}}
\label{app:insufficient}

\begin{proof}
	Consider a $d$-dimensional probability vector
	\begin{equation}
		\v{p}=\frac{1}{d-1}\left(\frac{2}{3},\frac{2}{3},\frac{2}{3},1,\dots,1\right),
	\end{equation}
	with $d>2$ being even. We will prove that despite the fact that $\v{p}\in\P_d^{d-1}$ (so that it satisfies the necessary condition of Proposition~\ref{prop:necessary}), there does not exist \mbox{$d-1$} perfectly distinguishable states with a fixed classical version $\v{p}$. First, note that $\v{p}$ lies at the boundary of $\P_d^{d-1}$, so that due to Lemma~\ref{lem:boundary} we can restrict our considerations to pure states. This, via Lemma~\ref{lem:dist_uni}, means that finding $d-1$ vectors with classical version $\v{p}$ is equivalent to the unistochasticity of the following $d\times d$ matrix,
	\begin{equation}
		T = \frac{1}{d-1}\renewcommand*{\arraystretch}{1.25}
		\begin{pmatrix}
			\frac{2}{3} &  \cdots & \frac{2}{3} & \frac{d-1}{3} \\
			\frac{2}{3} &  \cdots & \frac{2}{3} & \frac{d-1}{3} \\
			\frac{2}{3} &  \cdots & \frac{2}{3} & \frac{d-1}{3} \\
			1 &  \cdots & 1 & 0 \\
			\vdots    & \ddots & \vdots & \vdots \\
			1 &  \cdots & 1 & 0 \\
		\end{pmatrix}.
	\end{equation}
	We will now assume that $T$ is unistochastic, with the corresponding unitary matrix denoted by $U$, and show that for even $d$ this leads to a contradiction. 
	
	Without loss of generality we can assume that the first row and the first column of $U$ are real and nonnegative. Due to orthogonality of the first and last column of $U$, the first three elements of the last column of $U$ are equal to either $\v{c}$ or $\v{c}'$ with
	\begin{subequations}
		\begin{align}
		\v{c} &= \sqrt{\frac{d-1}{3}}\left(1,e^{i \frac{2\pi}{3}},e^{-i \frac{2\pi}{3}} \right),\\
		\v{c}' &= \sqrt{\frac{d-1}{3}}\left(1,e^{-i \frac{2\pi}{3}},e^{i \frac{2\pi}{3}} \right).
		\end{align}
	\end{subequations}
	We can assume that the last column of $U$ is specified by~$\v{c}$, as the proof for the alternative choice is analogous. Similarly, for $1<i<d$, due to orthogonality of the $i$-th and last column, the first three elements of the $i$-th column are equal to either $\v{a}$ or $\v{b}$ with
	\begin{subequations}
		\begin{align}
		{\v{a}} &= \sqrt{\frac{2}{3}}\left( 1,1,1 \right) \\
		{\v{b}} &= \sqrt{\frac{2}{3}}\left(1,e^{-i \frac{2\pi}{3}},e^{i \frac{2\pi}{3}} \right).
		\end{align}
	\end{subequations}
	
	The postulated matrix $U$ (up to permutation of columns) has consequently the following form
	\begin{align}
	\!\!\!U = \frac{1}{\sqrt{d-1}} 
	\left(
	\begin{array}{ccc|ccc|c}
	\v{a}  &   \ldots &  \v{a}  & 
	\v{b}  &   \ldots & \v{b}  &  \v{c} \\
	\hline 
	\multicolumn{3}{c|}{ } & \multicolumn{3}{c|}{}  & 0 \\
	\multicolumn{3}{c|}{A} & \multicolumn{3}{c|}{B} & \vdots \\
	\multicolumn{3}{c|}{ } & \multicolumn{3}{c|}{}  & 0
	\end{array}
	\right)\!,\!\!
	\end{align}
	with $A$ being a $(d-3) \times d_A$ matrix, $B$ being a $(d-3) \times d_B$ matrix, and \mbox{$d_A+d_B=d-1$}. Orthogonality of the first $d_A$ columns of $U$ implies 
	\begin{equation}
	\braket{a_k}{a_l} = 
	\begin{cases}
	-2 & \text{for} \quad k \neq l\\
	d-3 & \text{for} \quad k=l 
	\end{cases},
	\end{equation}
	with the vectors $\ket{a_k}$ denoting the columns of the $A$. Analogous condition holds for the columns of $B$, denoted by $\ket{b_k}$.
	
	It is now straightforward to compute $A^{\dagger} A $ and $B^{\dagger} B $, yielding
	\begin{subequations}
		\begin{align}
		A^{\dagger} A & = (d-1) \1_{d_A} - 2d_A \ket{+_{d_A}}\bra{+_{d_A}} ,\\
		B^{\dagger} B & = (d-1) \1_{d_B} - 2d_B \ket{+_{d_B}}\bra{+_{d_B}} ,
		\end{align}
	\end{subequations}
	where
	\begin{equation}
	\ket{+_{d_A}}= \frac{1}{\sqrt{d_A}} \sum_k\ket{k},
	\end{equation}
	and equivalently for $d_B$.	We conclude that
	\begin{subequations}
		\begin{align}
		\label{eq:rankA}
		\rank A &= 
		\begin{cases}
		d_A &\text{for} \quad 2d_A+1 \neq d,\\
		d_A-1 & \text{for} \quad  2d_A+1 = d,
		\end{cases}\\
		\label{eq:rankB}	
		\rank B &= 
		\begin{cases}
		d_B &\text{for} \quad 2d_B +1 \neq d,\\
		d_B-1 & \text{for} \quad  2d_B+1 = d.
		\end{cases}
		\end{align}
	\end{subequations}
	
	The final orthogonality relations are between column $k$ from the first block, $k\leq d_A$, and column $l$ from the second block, $l>d_A$. They give
	\begin{align}
		0 &= \v{a}\cdot\v{b} + \braket{a_k}{b_{l-d_A}} = \braket{a_k}{b_{l-d_A}},
	\end{align}
	which means that $A^{\dagger}B=0$. In other words 
	\begin{equation} 
		\text{span} \{ \ket{a_k} \}_{k=1}^{d_A}  \perp \text{span} \{ \ket{b_l} \}_{l=1}^{d_B},
	\end{equation}
	so that
	\begin{equation}
		\rank{A}+\rank{B}=\rank(A|B),
	\end{equation}
	where $A|B$ is a $(d-3)\times(d-1)$ matrix build by concatenating matrices $A$ and $B$. Since $\rank(A|B)\leq 3$, we have
	\begin{equation}
		\rank{A}+\rank{B} \leq d-3,
	\end{equation}
	which, due to Eqs.~\eqref{eq:rankA}-\eqref{eq:rankB}, is impossible for an even dimension $d$. Thus, the postulated matrix $U$ cannot exist.
\end{proof}

\section{Distinguishability region $\A_4^3$}
\label{app:tetra_skeleton}

In this appendix we describe the structure of distinguishability region $\A_4^3$. We first show that the subset of permutohedron~$\P_4^3$ presented in Fig.~\hyperref[fig:tetra_skeleton]{2a} does belong to~$\A_4^3$. We then prove that the subset of permutohedron~$\P_4^3$ presented in Fig.~\hyperref[fig:tetra_skeleton]{2b} does not belong to~$\A_4^3$. Finally, we formulate a conjecture on the exact form of~$\A_4^3$ (presented in Fig.~\hyperref[fig:tetra_skeleton]{2c}) and support it numerically.

\subsection{Subset of $\P_4^3$ belonging to $\A_4^3$}

The permutohedron $\P_4^3$ is a tetrahedron with vertices given by $\v{f}^1=\frac{1}{3}(1,1,1,0)$ and $\v{f}^i$ for $i\in\{2,3,4\}$ are given by permutations of $\v{f}^1$. Without loss of generality, a point on the edge of this tetrahedron has the form \mbox{$\v{p}^{(s)} = \frac{1}{3} \left( s,1-s,1,1\right)$} with $s \in \left[0,\frac{1}{2}\right]$. We will consider a point $\v{p}^{(s,t)}$ on a line connecting $\v{p}^{(s)}$ and the centre $\v{\eta}$ of $\P_4^3$, 
\begin{equation} 
	\v{p}^{(s,t)} = 3\left(\frac{1}{3} - t \right)\v{\eta} + 3 t \v{p}^{(s)},
\end{equation} 
with $t \in \left[0,\frac{1}{3}\right]$. We will now show that all such points belong to the distinguishability region $\A_4^3$. In order to achieve this we will consider the following three pure states
\begin{align}
	\ket{\psi_1} & = x_1\ket{1}+x_2\ket{2}+ x_3 \ket{3}+x_4\ket{4}, \\
	\ket{\psi_2} & = x_1\ket{1}+x_2e^{i \alpha_2}\ket{2}+ x_3e^{i \alpha_3} \ket{3}+x_4e^{i \alpha_4}\ket{4}, \\
	\ket{\psi_3} & = x_1\ket{1}+x_2e^{i \alpha_2}\ket{2}+ x_3e^{i \alpha_4} \ket{3}+x_4e^{i \alpha_3}\ket{4},
\end{align}
with $x_i = \sqrt{p^{(s,t)}_{i}}$ and prove that for all $s\in \left[0,\frac{1}{2}\right]$ and $t \in \left[0,\frac{1}{3}\right]$ there exists a choice of phases \mbox{$\{\alpha_2,\alpha_3,\alpha_4\}$}, such that the above states are mutually orthogonal. 

The overlap $\braket{\psi_2}{\psi_3}$ reads
\begin{equation}
	\braket{\psi_2}{\psi_3} = \frac{1}{2}\left( 1 - t + \left( 1+t\right) \cos(\alpha_3 - \alpha_4) \right),
\end{equation}  
so that orthogonality condition, $\braket{\psi_2}{\psi_3}=0$, gives
\begin{equation}
	\label{alpha3_condition}
	\alpha_3 = \arccos\frac{t-1}{t+1} + \alpha_4.
\end{equation}
The remaining overlaps are equal, \mbox{$\braket{\psi_1}{\psi_2}\!=\! \braket{\psi_1}{\psi_3}\!=:\! F$}, and given by
\begin{align}
	F=&\frac{1}{4}[1 + \left( 4s - 3 \right)t
	+(1+t-4st)e^{i \alpha_2}  \nonumber\\ 
	&+\left. \left(1 + t\right)(e^{i \alpha_3} +e^{i \alpha_4})\right].
	\end{align}
Using Eq.~\eqref{alpha3_condition} we can simplify the above expression to arrive at
\begin{align}
	\label{eq:annulus}
	F = &\frac{1}{4}[1 + \left( 4s - 3 \right)t +  (1 + t -4st)e^{i \alpha_2} \nonumber\\ 
	& + 2\left(i\sqrt{t} +t \right)e^{i \alpha_4}].
\end{align}

We now note that Eq.~\eqref{eq:annulus} for all $\alpha_2,\alpha_4 \in [0,2\pi)$ describes an annulus $\mathrm{ann}(x;R,r)$ with the centre $x$, larger radius $R$ and smaller radius $r$ equal to
\begin{eqnarray}
	x &=&\frac{1}{4} \left( 1 + \left( 4s - 3 \right)t \right),\label{eq:ann1}\\
	R &=&\frac{1}{4}\left| \left| 1 + t -4st \right|+ 2 \left| i\sqrt{t} +t  \right| \right|\label{R},\label{eq:ann2}\\
	r &=& \frac{1}{4}\left|\left|1 + t -4st\right| - 2\left|i\sqrt{t} +t \right| \right| \label{r}.\label{eq:ann3}
\end{eqnarray}
	The existence of phases such that $\{\ket{\psi_1},\ket{\psi_2},\ket{\psi_3}\}$ are mutually orthogonal is thus equivalent to \mbox{$0 \in \mathrm{ann}(x;R,r)$}, which can be verified by the following elementary calculations.
	
	First, we need to prove that the distance between the centre of the annulus $x$ and 0 is smaller than the larger radius $R$ (see Fig.~\ref{fig:annulus}). Substituting the expressions for $x$ and $R$ into $x \leq R$ yields
	\begin{align}
		4t(1-2s) + 2\sqrt{t(t+1)}\geq 0,
	\end{align} 
	which is always satisfied since $s\leq 1/2$ and $t\geq 0$.
	Next, we need to prove that the distance between the centre of the annulus $x$ and 0 is larger than the smaller radius $r$. Here we have two cases: one when \mbox{$\left|1 + t -4st\right| \geq 2\left|i\sqrt{t} +t \right|$} holds, and one when the opposite holds. In both cases the condition $x\geq r$ simplifies to $t \leq 1/3$ which is always true, because $t\in[0,1/3]$.
	\begin{figure}
	\begin{center}
		\includegraphics[width=0.65\columnwidth]{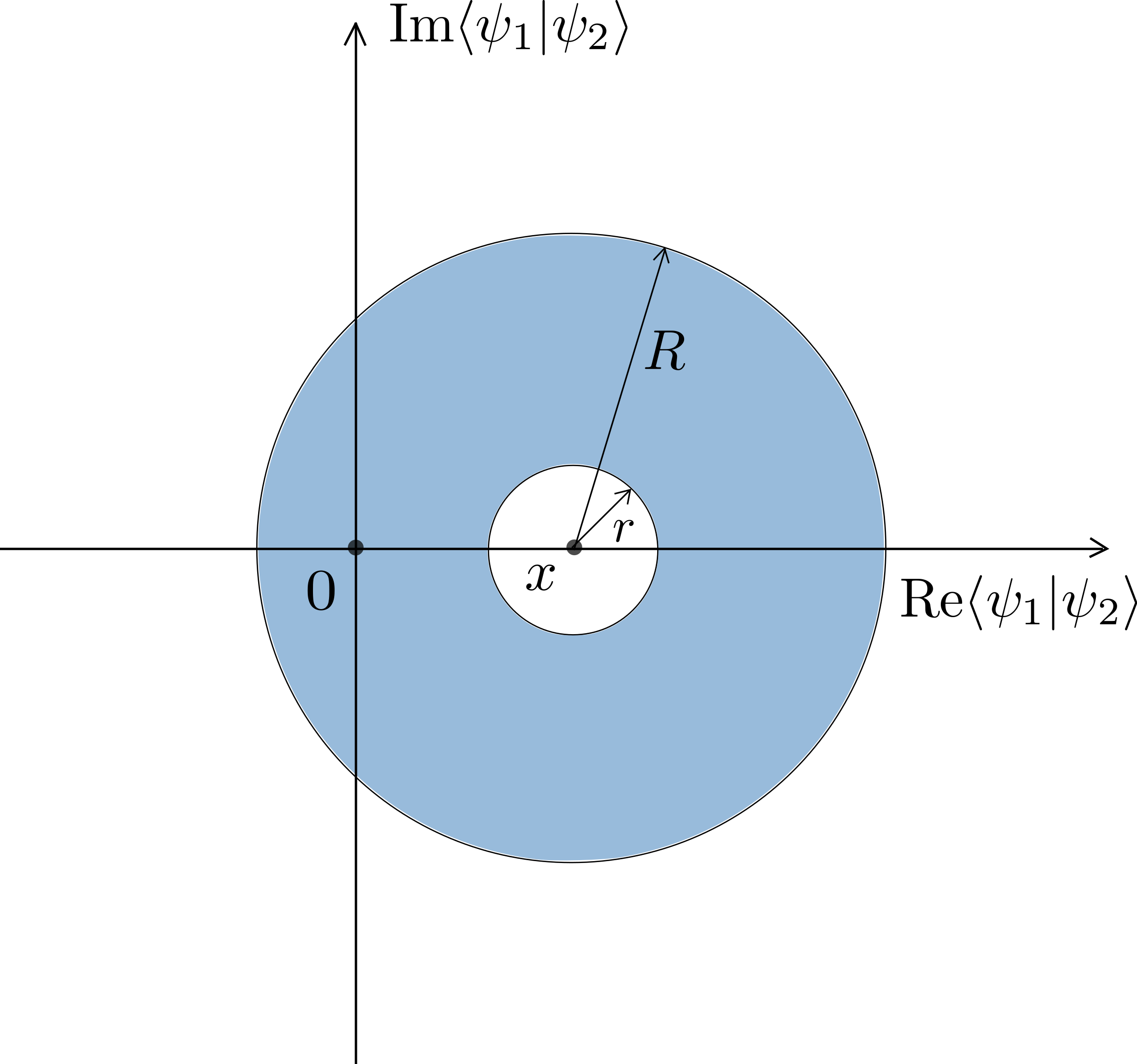}
		\caption{\label{fig:annulus}Annulus  $\mathrm{ann}(x;R,r)$ described by Eqs.~\eqref{eq:ann1}-\eqref{eq:ann3}}.
	\end{center}
\end{figure}

\subsection{Subset of $\P_4^3$ not belonging to $\A_4^3$}

Without loss of generality, a point $\v{p}$ lying in the interior of the face of permutohedron $\P_4^3$ can be expressed by
\begin{align}
	\v{p} &= \frac{1}{3}(1,r+s,q+s,q+r),
\end{align}
with $q+r+s=1$ and $q,r,s>0$. Changing variables according to
\begin{align}
	a_1 =\frac{r+s}{2},\quad a_2 =\frac{q+s}{2},\quad a_3 =\frac{q+r}{2},
\end{align}
we have
\begin{equation}
	\v{p}=\frac{1}{3}(1,2a_1,2a_2,2a_3),
\end{equation}
with $a_1+a_2+a_3=1$, $0<a_j<\frac{1}{2}$. 

We will now prove that one cannot find three mutually orthogonal states \mbox{$\{\ket{\psi_1},\ket{\psi_2},\ket{\psi_3}\}$} with the classical version given by $\v{p}$. The general form of such states is given by
\begin{align}
	\ket{\psi_1} =& \frac{1}{\sqrt{3}} \left[\ket{0}+\sqrt{2}\sum_{j=1}^3 \sqrt{a_j}\ket{j} \right] \\
	\ket{\psi_2} =& \frac{1}{\sqrt{3}} \left[\ket{0}+\sqrt{2}\sum_{j=1}^3 \sqrt{a_j}e^{i\phi_j}\ket{j} \right] \\
	\ket{\psi_3} =& \frac{1}{\sqrt{3}} \left[\ket{0}+\sqrt{2}\sum_{j=1}^3 \sqrt{a_j}e^{i\theta_j}\ket{j} \right].
	\end{align} 
	Orthogonality conditions can be now rewritten in terms of truncated vectors,
	\begin{align}
	\ket{\tilde{\psi_1}} &=  \sum_{j=1}^{3}\sqrt{a_j} \ket{j},\\ 
	\ket{\tilde{\psi_2}} &=  \sum_{j=1}^{3}\sqrt{a_j} e^{i \phi_j}\ket{j},\label{psi2} \\
	\ket{\tilde{\psi_3}} &=  \sum_{j=1}^{3}\sqrt{a_j} e^{i \theta_j}\ket{j} ,
	\end{align}
	as:
	\begin{equation}
	\label{eq:ortho}
		\langle\tilde{\psi_1}|\tilde{\psi_2}\rangle =	\langle\tilde{\psi_1}|\tilde{\psi_3}\rangle =
		\langle\tilde{\psi_2}|\tilde{\psi_3}\rangle = -\frac{1}{2}.
	\end{equation}
	
	We will now prove, by contradiction, that for the orthogonality condition to hold the set of vectors \mbox{$\{\ket{\tilde{\psi_1}},\ket{\tilde{\psi_2}},\ket{\tilde{\psi_3}}\}$} must be linearly independent. Assume that these vectors are linearly dependent, i.e., there exists complex numbers $\alpha,\beta,\gamma$ such that \mbox{$|\alpha| + |\beta|+ |\gamma| > 0$} and
	\begin{equation}
			\alpha \ket{\tilde{\psi_1}} + \beta \ket{\tilde{\psi_2}} + \gamma \ket{\tilde{\psi_3}} = 0.
	\end{equation}
	By the orthogonality condition, Eq.~\eqref{eq:ortho}, we have
	\begin{align}
		\alpha - \frac{\beta}{2} - \frac{\gamma}{2} &= 0, \\ 
		-\frac{\alpha}{2} + \beta -\frac{\gamma}{2} &= 0, \\ 
		-\frac{\alpha}{2} - \frac{\beta}{2} + \gamma &=0,
	\end{align}
	which leads to conclusion that \mbox{$\alpha = \beta = \gamma$}. Since at least one of them is nonzero, all are nonzero. Thus, we have
	\begin{equation}
		\ket{\tilde{\psi_3}} = - \ket{\tilde{\psi_1}} - \ket{\tilde{\psi_2}}.
	\end{equation}
	But this means:
	\begin{equation}
		\sum_{j=1}^{3} \sqrt{a_j} e^{i \theta_j} \ket{j} = \ket{\tilde{\psi_3}} =
		-   \sum_{j=1}^{3} \sqrt{a_j} \left(1 + e^{i \phi_j} \right)\ket{j}.
	\end{equation}
	Hence,
	\begin{equation}
		\forall j \in \{1,2,3\}: \quad e^{i \theta_j}= - \left(1 + e^{i \phi_j} \right).
	\end{equation}
	In particular, focusing on the absolute value,
	\begin{equation}
			\label{cos}
			\forall j \in \{1,2,3\}: \quad \cos \phi_j = -\frac{1}{2},
	\end{equation}
	which implies
	\begin{equation}
		\label{sin}
		\forall j \in \{1,2,3\}: \quad \sin \phi_j = \epsilon_j  \frac{\sqrt{3}}{2},
	\end{equation}
	where $\epsilon_j \in \{-1,1\}$. Inserting Eqs.~(\ref{cos})-(\ref{sin}) into Eq.~(\ref{psi2}), we obtain
	\begin{equation}
		\ket{\tilde{\psi_2}} =   \sum_{j=1}^{3} \sqrt{a_j} \left(-\frac{1}{2} + \epsilon_j i \frac{\sqrt{3}}{2}  \right) \ket{j}.
	\end{equation}
	The overlap $\langle\tilde{\psi_1}|\tilde{\psi_2}\rangle$ is therefore equal to 
	\begin{align}
		\langle\tilde{\psi_1}|\tilde{\psi_2}\rangle & = \sum_{j=1}^{3} {a_j} \left(-\frac{1}{2} + \epsilon_j i \frac{\sqrt{3}}{2}  \right) \nonumber \\
		&= - \frac{1}{2} + i \frac{\sqrt{3}}{2} \sum_{j=1}^{3} a_j \epsilon_j.
	\end{align}
	Since we know that $\langle\tilde{\psi_1}|\tilde{\psi_2}\rangle=-\frac{1}{2}$ the imaginary
	part must vanish, meaning that
	\begin{equation}
		\sum_{j=1}^{3} a_j \epsilon_j = 0.
		\end{equation}
	But this is only possible if $a_i$ are a permutation of $\left(\frac{1}{2},\frac{1}{4},\frac{1}{4}\right)$. As we know that $a_j < \frac{1}{2}$ we arrive at a contradiction, which disproves  the assumption of linear dependence of \mbox{$\{\ket{\tilde{\psi_1}},\ket{\tilde{\psi_2}},\ket{\tilde{\psi_3}}\}$}.

	We now know that the set \mbox{$\{\ket{\tilde{\psi_1}},\ket{\tilde{\psi_2}},\ket{\tilde{\psi_3}}\}$} must be linearly independent but, on the other hand, the Gram matrix for these vectors reads 
	\begin{equation}
	\begin{pmatrix}
	1 & -\frac{1}{2} & -\frac{1}{2} \\
	-\frac{1}{2}& 1 & -\frac{1}{2} \\
	-\frac{1}{2}& -\frac{1}{2}&1
	\end{pmatrix}.
	\end{equation}
	It is straightforward to check that it is of rank 2, which means that vectors \mbox{$\{\ket{\tilde{\psi_1}},\ket{\tilde{\psi_2}},\ket{\tilde{\psi_3}}\}$} are linearly dependent. This finishes the proof that orthogonal \mbox{$\{\ket{\tilde{\psi_1}},\ket{\tilde{\psi_2}},\ket{\tilde{\psi_3}}\}$} cannot exist.
	
	We now proceed to states with classical version $\v{q}$ lying on the lines connecting the centre of the permutohedron $\P_4^3$ with the centres of its faces. Without loss of generality such a point can be expressed by 
	\begin{align}
		\v{q} &= \frac{1}{4}(1-3t,1+t,1+t,1+t),
	\end{align}
	with $t\in[-1/9,0]$. Now, via Lemma~\ref{lem:dist_uni}, we know that the existence of mutually orthogonal \mbox{$\{\ket{{\psi_1}},\ket{{\psi_2}},\ket{{\psi_3}}\}$} with classical version $\v{q}$ is equivalent to the existence of a unistochastic matrix
	\begin{equation}
		T(\v{q})=
		\frac{1}{4}\begin{pmatrix}
		1-3t & 1-3t& 1-3t & 1+9t\\
		1+t & 1+t  & 1+t &1-3t \\
		1+t & 1+t  & 1+t &1-3t\\
		1+t & 1+t  & 1+t & 1-3t 
		\end{pmatrix}.
	\end{equation}
	The above family of bistochastic matrices has been studied in Ref.~\cite{bengtsson2005birkhoff}, where the authors showed that for no values of $t\in[-1/9,0]$ is $T(\v{q})$ unistochastic (see Eq.~(33) of Ref.~\cite{bengtsson2005birkhoff}). Therefore, there cannot exist three orthogonal pure states with classical action given by $\v{q}$.

\subsection{Conjectured form of $\A_4^3$}

Due to Lemma~\ref{lem:dist_uni}, the problem of finding $d-1$ orthogonal vectors with the same classical version $\v{p}$ is equivalent to verifying whether a particular bistochastic matrix $B$, with the first $d-1$ columns given by $\v{p}$, is unistochastic. By employing the algorithm proposed by Uffe Haagerup and described in Ref.~\cite{rajchel2018robust} we numerically verified whether matrices $B$ corresponding to probability vectors within the permutohedron $\P_4^3$ are unistochastic, and this way obtained a numerical approximation of the distinguishability region $\A_4^3$. Its form suggested the conjecture described by Eq.~\eqref{eq:3dist_conjecture} and the supporting evidence is presented in Fig.~\ref{fig:conjecture_numerics}.

\begin{figure}[t]
	\begin{tikzpicture}			
	\node at (0\columnwidth,0\columnwidth) {\includegraphics[width=0.4\columnwidth]{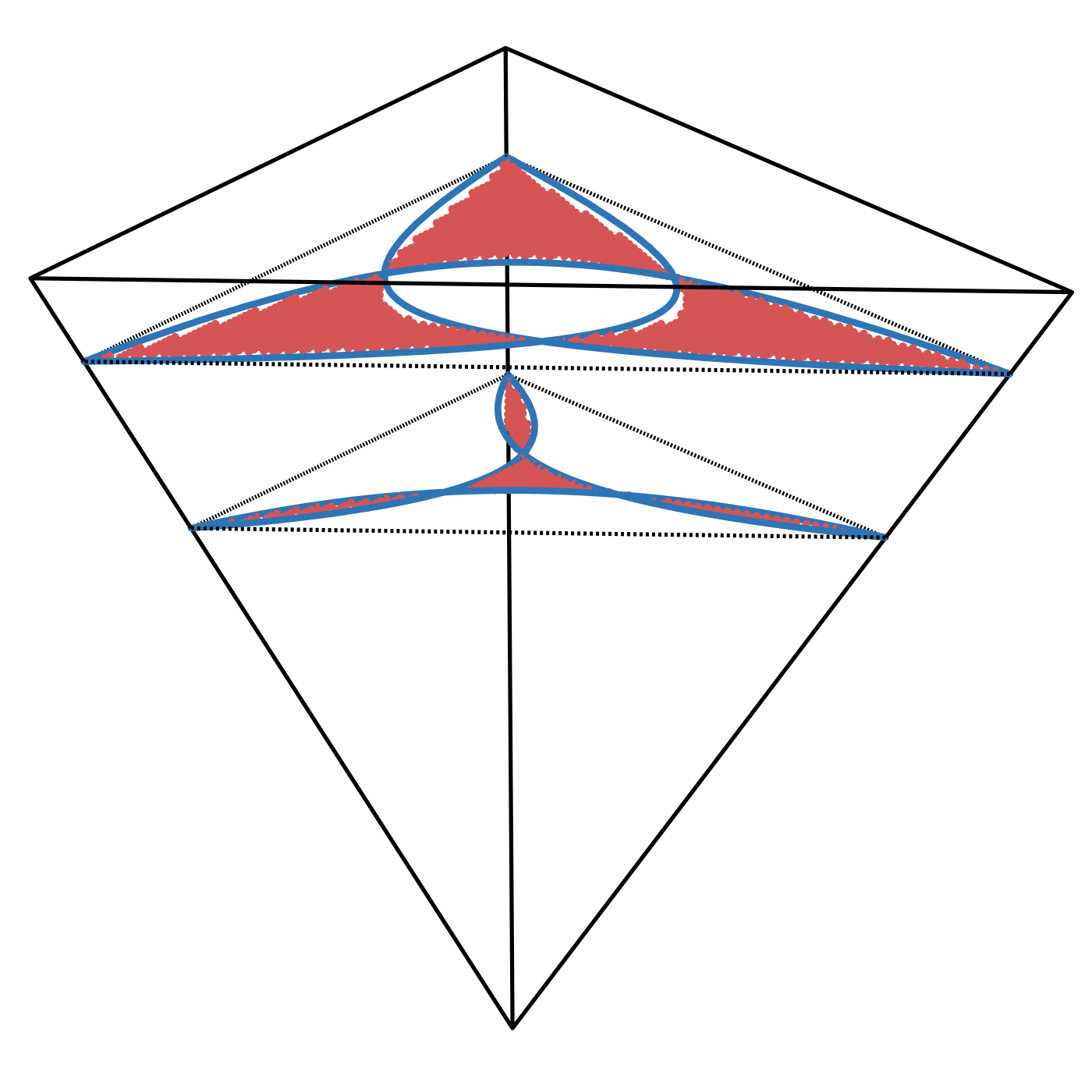}};
	\node at (-0.22\columnwidth,0.12\columnwidth) {\footnotesize\color{black}$\v{f}^1$};
	\node at (0.04\columnwidth,-0.2\columnwidth) {\footnotesize\color{black}$\v{f}^2$};
	\node at (0.22\columnwidth,0.12\columnwidth) {\footnotesize\color{black}$\v{f}^3$};
	\node at (-0.04\columnwidth,0.22\columnwidth) {\footnotesize\color{black}$\v{f}^4$};
	\end{tikzpicture}
	\caption{\label{fig:conjecture_numerics} \emph{Numerical evidence supporting conjectured form of~$\A_4^3$.} Cross sections through the permutohedron $\P_4^3$ with 3-distinguishable states confirmed numerically indicated by red dots, and the conjectured form of $\A_4^3$, described by Eq.~\eqref{eq:3dist_conjecture}, indicated by blue solid curves.}
\end{figure}

\section{$d+1$ perfectly distinguishable unitaries}
\label{app:d+1_unitaries}

Consider $d=p-1$ for prime $p$ and fix a bistochastic matrix $T$,
\begin{equation}
	T=\frac{1}{p} (W+\iden).
\end{equation}
For $d$ for which $T$ is unistochastic we define $U$ to be any unitary satisfying $U\circ U=T$. Next, we define the following $p$ diagonal unitaries $E^{(k)}$:
\begin{equation}
	E^{(k)}_{jj}=\exp\left(\frac{2\pi i (j-1)k}{p}\right).
\end{equation}
Now, $p$ unitaries defined by $E^{(k)}UE^{(k)}$ are all perfectly distinguishable, since Eq.~\eqref{eq:ort_unitaries} is satisfied for all $k,l$ (with $E$ in place of both $L$ and $R$). More precisely, the condition given by Eq.~\eqref{eq:ort_unitaries} reads
\begin{align*}
	0=&\left(\sum_{m=1}^d \exp\left(\frac{2\pi i (m-1)(k-l)}{p}\right)\right)^2\\
	&+\sum_{m=1}^d \exp\left(\frac{4\pi i (m-1)(k-l)}{p}\right).
\end{align*}
Since we sum over all but one $p$-th roots of the unity, the right hand side of the above simplifies to
\begin{equation*}
\exp\left(\frac{4\pi i (p-1)(k-l)}{p}\right)\!-\exp\left(\frac{4\pi i (p-1)(k-l)}{p}\right),
\end{equation*}
and thus vanishes, as required.

\section{Proof of Proposition~\ref{prop:transposition}}
\label{app:transposition}

\begin{proof}
	Without loss of generality we can assume that \mbox{$k=1$}, $l=2$ and $j=2$. Then, columns $k$ and $l$ of $T'$, which we will denote by $\v{x}$ and $\v{y}$, are given by
	\begin{subequations}
		\begin{align}
		\v{x} =& \left( T_{11}, \alpha T_{22} , T_{31},\dots ,T_{d1} \right)^\top, \\
		\v{y} =& \left( \alpha T_{12},  T_{21}, \alpha T_{32},\ldots, \alpha T_{d2} \right)^\top.
		\end{align}
	\end{subequations}	
	By assumption, the entries of the above vectors satisfy the triangle inequality, so that using Proposition~\ref{prop:sufficient} we know that there exist two pairs vectors $\ket{\xi}$, $\ket{\xi^{\prime}}$ and $\ket{\eta}$,$\ket{\eta^{\prime}}$, such that  $\braket{\xi}{\xi^{\prime}}=0$, $\braket{\eta}{\eta^{\prime}}=0$ and
	\begin{subequations}	
		\begin{align}
		\lvert \xi_k \rvert^2 & = \lvert \xi^{\prime}_k\rvert^2 = x_k \\
		\lvert \eta_k \rvert^2 &= \lvert \eta^{\prime}_k \rvert^2 = y_k,
		\end{align}
	\end{subequations}	
	where we used standard shorthand notation \mbox{$\xi_k=\braket{k}{\xi}$}. 
	
	We now define bipartite states
	\begin{subequations}
		\begin{align}
		\ket{\hat{\xi}}  =& \xi_1 \ket{11} + \xi_2 \ket{22} + \xi_3 \ket{31} + \ldots +\xi_{d}   \ket{d1},  \\
		\ket{\hat{\eta}} =& \eta_1 \ket{12} + \eta_2 \ket{21} + \eta_3 \ket{32} + \ldots + \eta_{d} \ket{d2},
		\end{align}
	\end{subequations}
	with analogous definitions for $\ket{\hat{\xi}^{\prime}}$ and $\ket{\hat{\eta}^{\prime}}$, so that all four states are mutually orthogonal. Moreover, we introduce
	\begin{equation}
	S(t) =  \ketbra{0}{0} + t\ketbra{1}{1},
	\end{equation}
	and define another set of bipartite states
	\begin{subequations}
		\begin{align}
		\ket{f}  =& \1 \otimes S \left( \frac{1}{\sqrt{\alpha}} \right) \ket{\hat{\xi}}, \\ 
		\ket{g}  =&\1 \otimes S \left( \frac{1}{\sqrt{\alpha}} \right) \ket{\hat{\eta}},
		\end{align}	
	\end{subequations}
	with analogous definitions for $\ket{f^{\prime}}$ and $\ket{g^{\prime}}$. Note that vectors $\ket{f}$ and $\ket{g}$ are not normalized, but they satisfy
	\begin{equation}
	\braket{f}{f} + \braket{g}{g} = 2.
	\end{equation}
	
	We are now ready to construct channels $\Phi^{(1)}$ and $\Phi^{(2)}$ with classical action $T$ by providing their their Jamio{\l}kowski states,	
	\begin{subequations}
		\begin{align}
		\!\!\!J_{\Phi^{(1)}} \!& = \frac{1}{d} \left(\ketbra{f}{f} + \ketbra{g}{g} + \sum_{k=1}^{d} \sum_{l=3}^{d}T_{kl} \ketbra{kl}{kl} \right)\!, \\
		\!\!\!J_{\Phi^{(2)}} \!&= \frac{1}{d} \left( \ketbra{f^{\prime}}{f^{\prime}} + \ketbra{g^{\prime}}{g^{\prime}} + \sum_{k=1}^{d} \sum_{l=3}^{d}T_{kl} \ketbra{kl}{kl}\right)\!\!.
		\end{align}
	\end{subequations}
	One can verify that $T$ is indeed classical action of $\Phi^{(1)}$ and $\Phi^{(2)}$ by a direct calculation of the diagonal elements of $J_{\Phi^{(1)}}$ and $J_{\Phi^{(2)}}$.
	
	Moreover, the action of $\Phi^{(1)}$ and $\Phi^{(2)}$ on one half of an unnormalized bipartite state $\ket{\psi}$,
	\begin{align}
	\ket{\psi}  = \ket{00} +\sqrt{\alpha} \ket{11} = \1 \otimes S(\sqrt{\alpha}) \ket{\Omega},
	\end{align}
	maps it to two orthogonal states. To see this, let us calculate it explicitly,	
	\begin{align}
	&\!\!\!\!\!\!\!\!\!\left(\Phi^{(1)} \otimes  \I \right) \left(\ketbra{\psi}{\psi} \right) \nonumber\\ 
	&= \left(\Phi^{(1)} \otimes  \I \right) 
	\left[ (\1 \otimes S ( \sqrt{\alpha} ))
	\ketbra{\Omega}{\Omega}  
	( \1 \otimes S ( \sqrt{\alpha} ))\right] \nonumber \\
	&= \left( \1 \otimes S ( \sqrt{\alpha} )\right)
	\left(\Phi^{(1)} \otimes  \1 \right) 
	\left( \ketbra{\Omega}{\Omega} \right)
	\left( \1 \otimes S ( \sqrt{\alpha} )\right)\nonumber  \\
	&= \left( \1 \otimes S ( \sqrt{\alpha} \right) d J_{\Phi^{(1)}}
	\left(  \1 \otimes S ( \sqrt{\alpha} ) \right) \nonumber\\
	&= \left(\1 \otimes S ( \sqrt{\alpha} )\right)
	\left(\ketbra{f}{f} + \ketbra{g}{g} \right)
	\left( \1 \otimes S ( \sqrt{\alpha} )\right)\nonumber \\
	&=  \ket{\hat{\xi}}\!\bra{\hat{\xi}} + \ketbra{\hat{\eta}}{\hat{\eta}},
	\end{align}	 
	and, through analogous calculation, we also get
	\begin{align}
	\left(\Phi^{(2)} \otimes  \I \right) \left( \ketbra{\psi}{\psi} \right)=\ket{\hat{\xi^{\prime}}}\!\bra{\hat{\xi}^{\prime}} + \ket{\hat{\eta}^{\prime}}\!\bra{\hat{\eta}^{\prime}}.
	\end{align}
	Since $\ket{\hat{\xi}}$, $\ket{\hat{\xi}'}$, $\ket{\hat{\eta}}$ and $\ket{\hat{\eta}'}$ are all mutually orthogonal, we conclude that channels $\Phi^{(1)}$ and $\Phi^{(2)}$ are perfectly distinguishable.
\end{proof}

\bibliography{Bibliography_channels}

\end{document}